\def\eqref#1{equation~\ref{#1}}
\def\1{\bm{1}}
\DeclareMathAlphabet{\mathsfit}{\encodingdefault}{\sfdefault}{m}{sl}
\SetMathAlphabet{\mathsfit}{bold}{\encodingdefault}{\sfdefault}{bx}{n}
\title{\proj: Closed-loop Diffusion Control of Complex Physical Systems}
\author{%
    Long Wei$^{1}$\thanks{Equal contribution. $^\mathsection$Work done as an intern at Westlake University. $^\dagger$Corresponding author.} 
    \quad Haodong Feng$^{1*}$
    \quad Yuchen Yang$^{2\mathsection}$ 
    \quad Ruiqi Feng$^{1}$
    \quad Peiyan Hu$^{3\mathsection}$ \\
    \textbf{Xiang Zheng$^{4\mathsection}$ 
    \quad Tao Zhang$^{1}$ 
    \quad Dixia Fan$^{1}$ 
    \quad Tailin Wu$^{1\dagger}$} \\
    $^1$Department of Artificial Intelligence, Westlake University,\\ 
    $^2$School of Statistics and Data Science, Nankai University, \\
    $^3$Academy of Mathematics and Systems Science, Chinese Academy of Sciences, \\
    $^4$School of Future Technology, South China University of Technology \\
    \texttt{\{weilong,fenghaodong,wutailin\}@westlake.edu.cn} \\
}
\newcommand{\proj}{CL-DiffPhyCon\xspace}
\newcommand{\projold}{DiffPhyCon\xspace}
\renewcommand{\u}{\mathbf{u}}
\renewcommand{\c}{\mathbf{c}}
\newcommand{\w}{\mathbf{w}}
\newcommand{\G}{G}
\newcommand{\z}{\mathbf{z}}
\newcommand{\x}{\mathbf{x}}
\newcommand{\bepsilon}{\bm{\epsilon}}
\def\J{\mathcal{J}}
\def\controlobj1d{\mathcal{J}}
\newtheorem{proposition}{Proposition}
\newtheorem{theorem}{Theorem}
\begin{document}

\maketitle

\begin{abstract}

The control problems of complex physical systems have broad applications in science and engineering.  Previous studies have shown that generative control methods based on diffusion models offer significant advantages for solving these problems. However, existing generative control approaches face challenges in both performance and efficiency when extended to the closed-loop setting, which is essential for effective control. In this paper, we propose an efficient \underline{C}losed-\underline{L}oop \underline{Diff}usion method for \underline{Phy}sical systems \underline{Con}trol (\proj). By employing an asynchronous denoising framework for different physical time steps, CL-DiffPhyCon generates control signals conditioned on real-time feedback from the system with significantly reduced computational cost during sampling. Additionally, the control process could be further accelerated by incorporating fast sampling techniques, such as DDIM. We evaluate CL-DiffPhyCon on two tasks: 1D Burgers' equation control and 2D incompressible fluid control. The results demonstrate that CL-DiffPhyCon achieves superior control performance with significant improvements in sampling efficiency. 
The code can be found at 
\url{https://github.com/AI4Science-WestlakeU/CL_DiffPhyCon}.

\end{abstract}

\section{Introduction}
The control problem of complex physical systems is a critical research area for optimizing a sequence of control actions to achieve specific objectives. 
It has wide applications in science and engineering fields, including fluid control \citep{verma2018efficient}, plasma control \citep{degrave2022magnetic}, and particle dynamics control \citep{reyes2023magnetic}. The challenge in controlling such systems arises from their high-dimensional and highly nonlinear characteristics. Therefore, to achieve good control performance, there is an inherent requirement of \emph{closed-loop} control, which is particularly necessary for control tasks that involve extra challenges, such as stochastic dynamics. Specifically, each control decision should be based on the latest state feedback from the system dynamics, allowing for continuous adaptation of the control inputs in response to any changes. 

Over recent decades, several methods have been developed to address this problem, including classical control methods, recent reinforcement learning approaches, and the latest generative methods. Among them, diffusion models, such as DiffPhyCon \citep{wei2024generative}, have demonstrated competitive performance, often outperforming both classical control and reinforcement learning methods in complex physical systems control. The superiority of diffusion models for general decision-making problems has also been widely demonstrated recently \citep{ajay2022conditional,janner2022planning}.

However, these diffusion control approaches encounter significant challenges in handling the closed-loop control problems, due to their reliance on a synchronous denoising strategy. 
The diffusion models start from pure noise to a denoised sample for all physical time steps within the model horizon. Applying a full sampling process at each physical time step can realize the closed-loop control, but incurs high sampling cost.
Moreover, it may disrupt the consistency of the control signals, thereby affecting overall performance \citep{kaelbling2011hierarchical}. 
On the other hand, conducting a full sampling process every several physical time steps improves sampling efficiency but no longer conforms to the closed-loop requirement,
leading to inferior control decisions due to outdated system states. 
Although an online replanning strategy has been proposed recently to determine when to replan adaptively \citep{zhou2024adaptive}, they do not establish a fully closed-loop framework. In addition, it involves extra computation of likelihood estimation or sampling from scratch, with a dependence on thresholding hyperparameters, which may vary across different tasks and require experiments or a significant amount of tuning to determine. 
\begin{figure*}[t]
\begin{center}\includegraphics[width=1.0\textwidth]{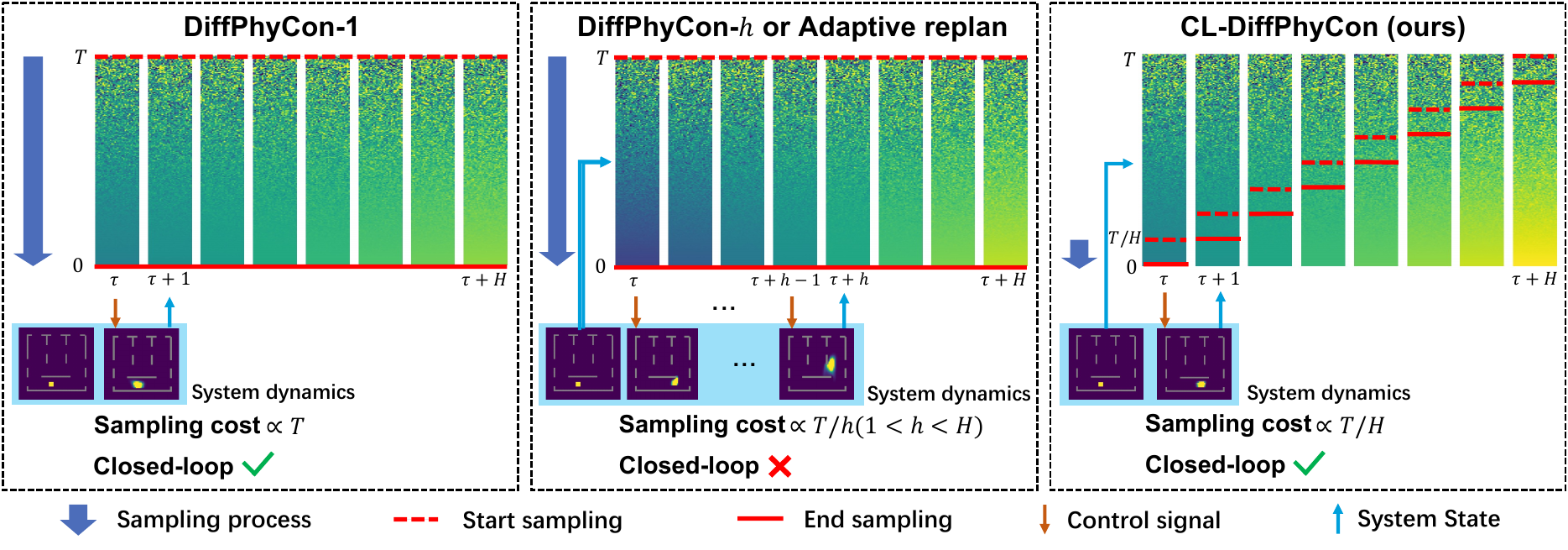}
\end{center}
\caption{\textbf{Advantages of our \proj (right) over previous diffusion control methods (left and middle).} The diffusion model horizon is denoted as $H$ and the total number of diffusion steps is $T$. By employing an asynchronous denoising framework, our method could achieve closed-loop control and accelerate the sampling process significantly. The notation DiffPhyCon-$h$ means conducting a full sampling process including $T$ denoising steps every $h$ physical time steps.
}
\label{fig:motivation}
\end{figure*}

In this paper, we propose a novel \underline{C}losed-\underline{L}oop \underline{Diff}usion method for \underline{Phy}sical systems \underline{Con}trol, named as \proj. The key idea is to decouple the synchronous denoising within the model horizon, allowing different physical time steps to exhibit different noise levels. In this way, closed-loop generation of control sequences is naturally realized: the asynchronous diffusion model outputs control signals sequentially with increasing levels of noise along physical time steps, which enables utilization of real-time feedback state for control signal sampling in each horizon without waiting for all the following control signals in the same horizon to be denoised completely. Then, the feedback serves as the initial condition for sampling subsequent control signals, ensuring they are generated based on this reliable state. Our method can also be seen as a seamless replanning approach that leverages fresh observations with minimal sampling costs. Therefore, compared to existing diffusion-based control methods \citep{wei2024generative,zhou2024adaptive}, our approach not only realizes closed-loop control but also achieves significant sampling acceleration. These advantages of our method are illustrated in Figure \ref{fig:motivation}.

In summary, we make the following contributions:
(1) We propose \proj, a novel closed-loop diffusion control method for complex physical systems. The core of this method is an asynchronous diffusion model derived from a theoretical analysis of the target distribution. This model enables parallel denoising across physical time steps, allowing earlier control actions to be sampled sooner, thereby accelerating the sampling process.
Additionally, the control process could be further accelerated by incorporating fast sampling techniques, such as DDIM \citep{song2020denoising}.
(2) We evaluate \proj on the 1D Burgers' equation control and 2D incompressible fluid control tasks. The results demonstrate that \proj achieves notable control performance with significant sampling acceleration.
\section{Related Work}
Classical control methods like PID \citep{1580152} and MPC \citep{schwenzer2021review} are known for their high efficiency, steady performance, and good interpretability, but they face significant challenges in both performance and efficiency when applied to control high-dimensional complex physical systems. Recently, imitation learning and reinforcement learning  has shown good performance on a wide range of physical systems \citep{pomerleau1988alvinn, zhuang2023behavior}, including drag reduction \citep{rabault2019artificial, elhawary2020deep, feng2023control, wang2024learn}, heat transfer \citep{beintema2020controlling, hachem2021deep}, and fish swimming \citep{novati2017synchronisation, verma2018efficient, feng2024efficient}.  
Another category of supervised learning (SL) methods \citep{holl2020learning,hwang2022solving} plan control signals by utilizing backpropagation through a neural surrogate model. In contrast, our approach does not depend on surrogate models; instead, it simultaneously learns the dynamics of physical systems and the corresponding control sequences.
Additionally, physics-informed neural networks (PINNs) \citep{willard2020integrating} have recently been utilized for control \citep{mowlavi2023optimal}, but they necessitate explicit PDEs. In contrast, our method is data-driven and has broader applicability.

Diffusion models \citep{ho2020denoising} excel at learning high-dimensional distributions and have achieved significant success in image and video generation \citep{dhariwal2021diffusion, ho2022video}, weather forecasting \citep{price2023gencast}, and inverse design \citep{ wu2024compositional}, to name a few. They have also demonstrated superior capabilities on decision making tasks, such as robot control \citep{janner_planning_2022, ajay2022conditional} and high-dimensional nonlinear systems control \citep{wei2024generative}, compared with widely used reinforcement learning and imitation learning approaches. However, they struggle to balance the conflicting goals of achieving closed-loop control and maintaining efficient sampling for long trajectories.
Some previous work has focused on improving the 
adaptability of diffusion generation \citep{zhou2024adaptive} on decision-making tasks, but it is not closed-loop and needs extra hyperparameters and computations for the decision of replanning. In contrast, our method is a closed-loop approach with an efficient sampling strategy without extra hyperparameter. 
Some recent works also assign varying noise levels to different frames within a model horizon \citep{wu2023ar,ruhe2024rolling}. The key differences between our work and these approaches are twofold: (1) Our method focuses on the closed-loop control task of complex physical systems, whereas their methods are geared towards sequential content generation that does not involve interaction with the external world.
(2) The diffusion models we need to learn are derived from the target distribution we aim to sample from (see Section \ref{sec:framework} for details), while theirs are heuristic.
\section{Background}
\subsection{Problem Setup}
\label{sec:problem_setup}

Given an initial state $\u_0$, a system dynamics $G$, and a specified control objective $\J$, We consider the following complex physical systems control problem:
\begin{equation}
    \min_{\pi}\mathbb{E}_{\w_{\tau+1}\sim \pi(\w_{\tau+1}|\u_\tau)}[\mathcal{J}(\u_0,\w_1,\u_1, \ldots, \w_N, \u_N)] \quad \text{s.t.} \quad \u_{\tau+1}=\G(\u_{\tau},\w_{\tau+1},\xi_\tau).
    \label{eq:origion_optimization}
\end{equation}
Here $\u_{\tau} \in \mathbb{R}^{d_{\u}}$ and $\w_{\tau} \in \mathbb{R}^{d_{\w}}$ are the \textbf{system state} and external \textbf{control signal} at physical time step $\tau$, respectively. The \textbf{system dynamics} \( \G \) represents the transition of states over time under external control in the system, typically determined by implicit PDEs. 
$G$ could be \emph{stochastic} with nonzero random noise  $\xi_\tau$, or \emph{deterministic} with  $\xi_\tau=0$. The evolution of states can only be observed through state measurement. 
The \textbf{control objective} $\mathcal{J}$
is defined over a trajectory of length \( N \), representing the performance of the control strategy. For example, \( \mathcal{J} \) can be designed to measure the deviation from a target state \( \u^* \), subject to cost constraints: $\J=\|\u_N-\u^*\|^2+ \sum_{\tau = 1}^{N}\|\w_{\tau}\|^2$.
In this paper, we focus on \emph{closed-loop} control, which means that the control signal $\w_{\tau+1}$ in each time step is sampled from a distribution conditioned on the current state $\u_{\tau}$. 
Unlike open-loop control, which determines all actions in advance, closed-loop control continuously incorporates real-time feedback to adjust control decisions dynamically, making it particularly effective for complex and evolving systems where evolution of states can only be observed through measurement of the state.
To simplify notation, we introduce a variable \( \z_{\tau} = [\w_{\tau}, \u_{\tau}] \) to represent the concatenation of $\w_{\tau}$ and $\u_{\tau}$. 
The transition probabilities in the training trajectories collected offline are assumed to satisfy the \emph{Markov property}:
\begin{equation}
p(\z_{\tau+1}|\u_{0}, \z_{1:\tau}) = p(\z_{\tau+1}|\u_{\tau}) = p(\w_{\tau+1}|\u_{\tau}) p(\u_{\tau+1}|\u_{\tau}, \w_{\tau+1}).
\label{eq:markov_property}
\end{equation}

\subsection{Preliminary: Diffusion Control Models}
\label{sec:ddpm}
DiffPhyCon \citep{wei2024generative} is a recent diffusion generative method to solve the problem Eq. \ref{eq:origion_optimization} for small $N$ in the open-loop manner.
In this section, we briefly review this framework of its light version.
Suppose that we have a training set $\mathcal{D}_{\text{train}}$ containing $M$ trajectories $\{\u_0^{(i)}, \z_{1:N}^{(i)}\}_{i=1}^{M}$ collected offline.
We define the following forward diffusion SDE~\citep{song2021scorebased} on the $\mathcal{D}_{\text{train}}$,
\begin{equation}
{\rm d}\z_{\tau}(t) = f(t)\z_{\tau}(t){\rm d}t + g(t){\rm d}\omega_{\tau}(t), \quad \tau \in [1:N], \quad t \in [0, T],
\label{eq:forward-sde}
\end{equation}
where $\z_{\tau}(0) = \z_{\tau}$, $f(t)$ and $g(t)$ are scalar functions, and $\omega_{\tau}(t)$ is Wiener process \footnote{In $\z_{\tau}(t)$, the subscript $\tau$ denotes physical time step, and   $(t)$ in the parentheses indicates SDE step.}.
Through Eq. \ref{eq:forward-sde}, we augment the distribution of $\mathcal{D}_{\text{train}}$, denoted as $p_{\text{train}}(\u_0, \z_{1:N}) = p_{\text{train}}(\u_0)p_{\text{train}}(\z_{1:N}|\u_0)$, to the distribution of the diffusion process $p_t(\u_0, \z_{1:N}(t)) = p_{\text{train}}(\u_0)p_t(\z_{1:N}(t)|\u_0)$.
We have two terminal conditions, $p_0(\z_{1:N}(0)|\u_0) = p_{\text{train}}(\z_{1:N}(0)|\u_0)$ and $p_T(\z_{1:N}(T)|\u_0) \approx \mathcal{N}(\z_{1:N}(T)|\mathbf{0}, \sigma_T^2\mathbf{I})$.
The reverse-time SDE (the denoising/sampling process) of Eq. \ref{eq:forward-sde} has the following formula:
\begin{equation}
{\rm d}\z_{1:N}(t) = [f(t)\z_{1:N}(t) - g(t)^2\nabla_{\z_{1:N}(t)} \log p_t(\z_{1:N}(t)|\u_0)]{\rm d}t + g(t){\rm d}\omega_{1:N}(t), \quad t \in [T, 0].
\label{eq:backward-sde}
\end{equation}
Once we have learned a diffusion model $\bepsilon_\phi$ which approximates the score function $\nabla \log p_t(\z_{1:N}(t)|\u_0)$, we can sample $\z_{1:N}(0) \sim p_{\text{train}}(\z_{1:N}(0)|\u_0)$ through Eq. \ref{eq:backward-sde} by sustracting predicded noise from $\z_{1:N}(t)$ gradually from $t=T$ (where $\z_{1:N}(T) \sim \mathcal{N}(\mathbf{0},\mathbf{I})$) to $t=0$. The control objective $\mathcal{J}(\z_{1:N}(0))$ is optimized by the following denoising process~\citep{song2021scorebased, chung2023diffusion}:
\begin{equation}
\begin{aligned}
{\rm d}\z_{1:N}(t)
=& [f(t)\z_{1:N}(t) - \underbrace{g(t)^2\nabla_{\z_{1:N}(t)} \log p_t(\z_{1:N}(t)|\u_0)}_{\text{denoise by the score function}} \\
+& \underbrace{g(t)^2 \lambda \cdot \nabla_{\z_{1:N}(t)} \mathcal{J}(\hat{\z}_{1:N}(0))}_{\text{guided sampling by the control objective $\J$}} ]{\rm d}t + g(t){\rm d}\omega_{1:N}(t), \quad t \in [T, 0].
\label{eq:guided_sampling}
\end{aligned}
\end{equation}
Here, $\hat{\z}_{1:N}(0)$
is the approximate noise-free $\z_{1:N}(0)$ from $\z_{1:N}(t)$ given by Tweedie's estimate.
The implementation is specialized by VP SDE~\citep{song2021scorebased}, i.e. DDPM~\citep{ho2020denoising}, and the diffusion model is implemented by a parameterized denoising network $\bepsilon_\phi$ of horizon $H$, which equals $N$.
Please refer to Appendices \ref{app:sde_property}, \ref{app:vp_sde} and \ref{app:DDPM_syn} for more details.
\section{Method}\label{sec:method}
\begin{figure*}[t]
\begin{center}
\includegraphics[width=1.0\textwidth]{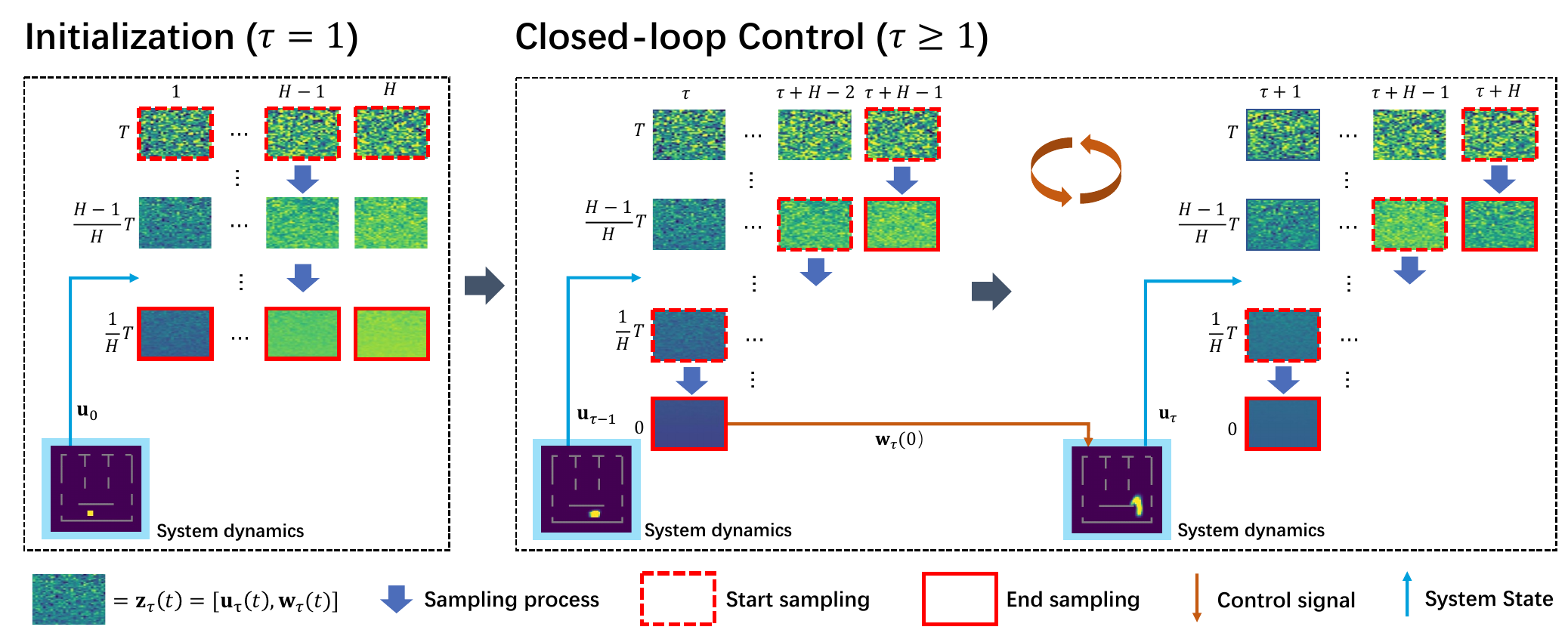}
\end{center}
\caption{\textbf{\proj for closed-loop control}. 
First, it uses the synchronous diffusion model for initialization. Then, it uses the asynchronous diffusion model for iterative control.
Sampling of each control signal is based on the latest state feedback from the system dynamics. 
}
\label{fig:overview}
\end{figure*}

In this section, we detail our method \proj.  
In Section \ref{sec:framework}, we illustrate our idea and derive the two distributions we need to learn.
To sample from them, we present the synchronous and asynchronous diffusion models in Section \ref{sec:initialization} and Section \ref{sec:transition}, respectively.
In Section \ref{sec:cl_generation}, we introduce closed-loop control, which is illustrated in Figure \ref{fig:overview}. The efficiency of \proj is analyzed in Section \ref{sec:cl_efficiency}. For ease of theoretical analysis, we adopt the SDE formulation of our method. For the DDPM implementation of its sampling process, please refer to Appendix \ref{app:DDPM}.

\subsection{Asynchronously denoising framework}
\label{sec:framework}
Recall that DiffPhyCon \citep{wei2024generative} requires latent variables, e.g., system state and control signal, during the reverse diffusion process of horizon $H$ (which is typically much shorter than $N$) being denoised synchronously. As a result, to sample a control signal, a full denoising process of length $T$ over the whole horizon is performed, introducing a large amount of computation over all the latent variables within this horizon.
To address this issue, we propose \proj, an \emph{asynchronous denoising process} scheme such that the latent variables of the early physical time step are denoised in advance of the latter ones.
At each physical time step, the sampled control signal is input to the system dynamics and the output state serves as the initial condition for the following denoising process.
Thus, the control signal is planned based on the current system state and closed-loop control is achieved.
Meanwhile, the computational cost between two successive times steps is significantly reduced due to the parallel denoising nature, compared with synchronous sampling. 

Formally, we aim to model the joint distribution $p\big(\z_1(0), \z_2(0), \cdots, \z_N(0)|\u_0\big)$ in a \emph{Markov} complex physical system, where each $\z_\tau(0)=[\w_\tau,\u_\tau]$ is a pair of noise-free control signal and system state, and $\u_0$ is the initial state.
Denote $\z_{\tau:\tau+H-1}(t)=[\z_{\tau}(t),\z_{\tau+1}(t),\cdots,\z_{\tau+H-1}(t)]$ as the sequence of hidden variables with \emph{synchronous} noise levels in the horizontal interval $[\tau:\tau+H-1]$ and $\Tilde{\z}_{\tau:\tau+H-1}(t)=[\z_{\tau}(t),\z_{\tau+1}(t+\frac{1}{H}T),\cdots,\z_{\tau+H-1}(t+\frac{H-1}{H}T)]$ as its counterpart with \emph{asynchronous} noise levels. Let's consider the augmented joint distribution
\begin{equation}
p\big(\z_{1:N}(0),\Tilde{\z}_{1:H}(\frac{1}{H}T),\cdots,\Tilde{\z}_{N+1:N+H}(\frac{1}{H}T)|\u_0\big).
\end{equation}
Through sampling from this augmented joint distribution, we can obtain the desired control signals and states sequence $\z_{1:N}(0) \sim p_{\text{train}}\big(\z_{1:N}(0)|\u_0\big)$. By conditioning on previous variables sequentially, we obtain the following decomposition theorem:
\begin{theorem}
\label{prop:1}
Assume that the joint distribution $p\big(\z_1(0), \z_2(0), \cdots, \z_N(0)|\u_0\big)$ has \emph{Markov property}.
For any $\tau > 0$, we assume $\z_\tau(T)$ is independently normally distributed with density $\mathcal{N}(\z_{\tau}(T)|\mathbf{0}, \sigma_T^2\mathbf{I})$.
The augmented joint distribution can be decomposed as:
\begin{equation}
\begin{aligned}
&p\big(\z_{1:N}(0),\Tilde{\z}_{1:H}(\frac{1}{H}T),\cdots,\Tilde{\z}_{N+1:N+H}(\frac{1}{H}T)|\u_0\big) \\
=&\underbrace{p\big(\Tilde{\z}_{1:H}(\frac{1}{H}T)|\u_0\big)}_{\text{initializing distribution}} \prod_{\tau=1}^{N} \underbrace{p\big(\Tilde{\z}_{\tau:\tau + H - 1}(0)|\u_{\tau - 1}(0),\Tilde{\z}_{\tau:\tau + H - 1}(\frac{1}{H}T)\big)}_{\text{transition distribution}} \mathcal{N}(\z_{\tau + H}(T);\mathbf{0}, \sigma_T^2\mathbf{I}).
\end{aligned}
\end{equation}
\end{theorem}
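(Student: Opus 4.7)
The plan is to unfold the LHS using the chain rule in the order suggested by sequential asynchronous sampling---first the initialization block $\tilde{\z}_{1:H}(\tfrac{T}{H})$, then for each $\tau=1,\ldots,N$ the group consisting of the one-step denoised $\tilde{\z}_{\tau:\tau+H-1}(0)$ together with the freshly sampled noise $\z_{\tau+H}(T)$---and to collapse each resulting conditional to match the factor appearing on the RHS. Concretely, I would first check via index matching that every variable on the LHS is introduced exactly once among the groups $V_0:=\tilde{\z}_{1:H}(\tfrac{T}{H})$, $V_\tau^{(a)}:=\tilde{\z}_{\tau:\tau+H-1}(0)$, and $V_\tau^{(b)}:=\z_{\tau+H}(T)$ for $\tau=1,\ldots,N$; the chain rule then gives
\begin{equation*}
p(\,\cdot\,|\u_0)=p(V_0|\u_0)\prod_{\tau=1}^{N} p(V_\tau^{(a)},V_\tau^{(b)}\,|\,V_0,V_1^{(a)},V_1^{(b)},\ldots,V_{\tau-1}^{(a)},V_{\tau-1}^{(b)},\u_0),
\end{equation*}
whose first factor already matches the RHS, so the task reduces to identifying each remaining conditional with its RHS counterpart.

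Two sub-claims handle this identification. First, by the hypothesis that each $\z_\tau(T)$ is independently $\mathcal{N}(\mathbf{0},\sigma_T^2\mathbf{I})$, the fresh-noise variable $V_\tau^{(b)}=\z_{\tau+H}(T)$ is independent of $\u_0$ and of all other groups, so the Gaussian factor splits off. Second, and more substantively, I would need to show
\begin{equation*}
V_\tau^{(a)}\;\perp\;(V_0,V_1^{(a)},V_1^{(b)},\ldots,V_{\tau-1}^{(a)},V_{\tau-1}^{(b)})\;\mid\;\u_{\tau-1}(0),\;\tilde{\z}_{\tau:\tau+H-1}(\tfrac{T}{H}).
\end{equation*}
I would derive this from the graphical model underlying the joint, whose edges come from (i) the physical-time Markov chain $\u_0\to\z_1(0)\to\z_2(0)\to\cdots$ implied by the assumed Markov property, and (ii) for each physical step $\tau$ an independent forward-SDE Markov chain $\z_\tau(0)\to\z_\tau(\tfrac{T}{H})\to\cdots\to\z_\tau(T)$ coming from the forward SDE. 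Each element $\z_{\tau+j}(\tfrac{jT}{H})$ of $V_\tau^{(a)}$ has its next-higher-noise neighbor $\z_{\tau+j}(\tfrac{(j+1)T}{H})$ already in the conditioning set $\tilde{\z}_{\tau:\tau+H-1}(\tfrac{T}{H})$, while the only physical-chain parent of $\z_\tau(0)$ is $\u_{\tau-1}(0)$, itself in the conditioning set (as the second component of $\z_{\tau-1}(0)\in V_{\tau-1}^{(a)}$, or as $\u_0$ when $\tau=1$). The index accounting from the first step then shows that every strictly earlier variable involving physical step $\tau+j$ sits at noise level strictly above $\tfrac{jT}{H}$, so any d-connecting path from the earlier block into $V_\tau^{(a)}$ must enter through one of the conditioning nodes, giving the needed separation.

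The main obstacle will be the d-separation bookkeeping: one must confirm that no physical-chain path leaks around $\u_{\tau-1}(0)$---here the refinement $p(\z_{\tau+1}|\u_0,\z_{1:\tau})=p(\z_{\tau+1}|\u_\tau)$ built into the Markov-property assumption is essential, since it prevents $\w_{\tau-1}(0)$ and earlier $\z_k(0)$'s from acting as separate backdoor parents of $\z_\tau(0)$---and that no diffusion-chain path leaks around the single conditioning node at level $\tfrac{(j+1)T}{H}$ for each $j$, which requires listing out the contents of each $V_{\tau'}^{(\cdot)}$ precisely. Once the factor graph is laid out, the remaining work is a routine application of the global Markov property.
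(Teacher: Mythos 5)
Your proposal is correct and follows essentially the same route as the paper's own proof: the identical chain-rule grouping (the paper writes each group as $[\z_{\tau}(0), \Tilde{\z}_{\tau+1:\tau+H}(\frac{1}{H}T)]$ and then rearranges it into your $(V_\tau^{(a)},V_\tau^{(b)})$), the same splitting-off of the Gaussian factor via the independence assumption on $\z_{\tau+H}(T)$, and the same key conditional independence $V_\tau^{(a)}\perp(\text{past})\mid \u_{\tau-1}(0),\Tilde{\z}_{\tau:\tau+H-1}(\frac{1}{H}T)$, including the two observations that make it work (the refined Markov property routing everything through $\u_{\tau-1}(0)$, and the fact that all earlier appearances of physical step $\tau+j$ sit strictly above the conditioned noise level $\frac{j+1}{H}T$). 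The only difference is presentational: the paper verifies that independence by explicit Bayes-rule algebra---first dropping variables at physical steps below $\tau$, then eliminating the residual higher-noise set $\c$ by introducing $\z_{\tau:\tau+H-1}(0)$ as auxiliary integration variables and cancelling the $p\big(\c\,|\,\Tilde{\z}_{\tau:\tau+H-1}(\frac{1}{H}T)\big)$ factors---whereas you package the same bookkeeping as d-separation on the comb-shaped DAG.
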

The proofs of this theorem and the subsequent propositions are provided in the Appendix \ref{app:derivation}. Note that $\Tilde{\z}_{1:H}(\frac{1}{H}T)$ serves as a condition for $p\big(\Tilde{\z}_{\tau:\tau + H - 1}(0)|\u_{\tau - 1}(0),\Tilde{\z}_{\tau:\tau + H - 1}(\frac{1}{H}T)\big)$ when $\tau=1$. This theorem implies that to sample from the augmented joint distribution, 
we only need to specify two kinds of distributions for ancestral sampling: the \emph{initializing distribution} $p\big(\Tilde{\z}_{1:H}(\frac{1}{H}T)|\u_0\big)$, and the \emph{transition distribution} $p\big(\Tilde{\z}_{\tau:\tau + H - 1}(0)|\u_{\tau - 1}(0),\Tilde{\z}_{\tau:\tau + H - 1}(\frac{1}{H}T)\big)$ for $\tau > 0$. 

\subsection{Sampling from initializing distribution}
\label{sec:initialization}
To sample from $p\big(\Tilde{\z}_{1:H}(\frac{1}{H}T)|\u_0\big)$, we train a \emph{synchronous diffusion model} $\bepsilon_\phi$ by the following DDPM loss \citep{ho2020denoising}:
\begin{equation}
\label{eq:training_obj_syn}
\mathcal{L}_{\text{synch}}=\mathbb{E}_{t,(\u_0, \z_{1:H}),\bepsilon}[\|\mathbf{\bepsilon} - \bepsilon_\phi\big(\z_{1:H}(t),\u_0,t\big)\|_2^2].
\end{equation}

Here, $\z_{1:H}(t) = s(t)\z_{1:H} + s(t)^2 \sigma(t)^2 \bepsilon$ involves the same level of noise for all $t\in[1:H]$, and the expectation is about $t\sim U(0,T)$, $(\u_0, \z_{1:H})\sim \mathcal{D}_{\text{train}}$ and $\bepsilon\sim \mathcal{N}(\mathbf{0}, \mathbf{I})$.
Intuitively, this model learns to predict the noise in its input $\z_{1:H}(t)$.
After training, we can compute the score function by $\nabla_{\z_{1:H}(t)} \log p_t\big(\z_{1:H}(t) | \u_0\big) \approx - \bepsilon_{\phi} \big(\z_{1:H}(t), \u_0, t\big)$. Then we apply Eq. \ref{eq:backward-sde} to sample a sequence of $\{ \z_{1:H}(t) \}$, where $t$ goes from $T$ to $\frac{1}{H}T$, from which we select the \emph{diagonal} latent variables $\Tilde{\z}_{1:H}(\frac{1}{H}T) = [\z_1(\frac{1}{H}T), \cdots, \z_H(T)]$ following the desired distribution $p\big(\Tilde{\z}_{1:H}(\frac{1}{H}T)|\u_0\big)$.

\subsection{Sampling from transition distribution}
\label{sec:transition}
The transition distribution $p\big(\Tilde{\z}_{\tau:\tau + H - 1}(0)|\u_{\tau - 1}(0),\Tilde{\z}_{\tau:\tau + H - 1}(\frac{1}{H}T)\big)$ converts $\Tilde{\z}_{\tau:\tau + H - 1}(\frac{1}{H}T)$ to $\Tilde{\z}_{\tau:\tau + H - 1}(0)$ by denoising over $t$ where $t$ goes from $T/H$ to $0$, given the state condition $\u_{\tau - 1}$. To learn such denoising process, we first 
define a component-wise asynchronous forward diffusion SDE, which characterizes the diffusion dynamics of $\Tilde{\z}_{\tau:\tau + H - 1}(t)$  with an increasingly higher level of noise as $t$ increases from $0$ to $T/H$:
\begin{equation}
\begin{aligned}
{\rm d}\Tilde{\z}_{\tau:\tau + H - 1}(t)
=& \Tilde{f}_{\tau:\tau + H - 1}(t) \Tilde{\z}_{\tau:\tau + H - 1}(t) {\rm d}t + \Tilde{g}_{\tau:\tau + H - 1}(t) {\rm d} \omega_{\tau:\tau + H - 1}(t),
\label{eq:asyn_forward_sde}
\end{aligned}
\end{equation}
where $\Tilde{f}_{\tau:\tau + H - 1}(t) = [f(t), f(t + \frac{1}{H}T), \cdots, f(t + \frac{H-1}{H}T)]$ and $\Tilde{g}_{\tau:\tau + H - 1}(t) = [g(t), g(t + \frac{1}{H}T), \cdots, g(t + \frac{H-1}{H}T)]$ are vectors of scalar functions that are component-wisely applied. Similar to Eq. \ref{eq:backward-sde}, the reverse-time SDE (the denoising/sampling process) of Eq. \ref{eq:asyn_forward_sde} has the following formula:
\begin{equation}
\begin{aligned}
{\rm d}\Tilde{\z}_{\tau:\tau + H - 1}(t)
= &[\Tilde{f}_{\tau:\tau + H - 1}(t) \Tilde{\z}_{\tau:\tau + H - 1}(t) \\
-& \underbrace{\Tilde{g}_{\tau:\tau + H - 1}(t)^2\nabla_{\Tilde{\z}_{\tau:\tau + H - 1}(t)} \log p_t\big(\Tilde{\z}_{\tau:\tau + H - 1}(t)|\u_{\tau-1}(0)\big)}_{\text{denoise by the score function}}  \\
+ &\Tilde{g}_{\tau:\tau + H - 1}(t) {\rm d} \omega_{\tau:\tau + H - 1}(t).
\end{aligned}
\label{eq:asyn_backward_sde}
\end{equation}
Thus, to sample $\Tilde{\z}_{\tau:\tau + H - 1}(0)$ from the transition distribution via Eq. \ref{eq:asyn_backward_sde}, the key problem is to estimate the score function $\nabla_{\Tilde{\z}_{\tau:\tau+H-1}(t)} \log p_t(\Tilde{\z}_{\tau:\tau+H-1}(t) | \u_{\tau-1}(0))$. We introduce a novel \emph{asynchronous diffusion model} $\bepsilon_\theta$ such that $ \bepsilon_{\theta} \big(\Tilde{\z}_{\tau:\tau+H-1}(t), \u_{\tau-1}(0), t\big) \approx -\nabla_{\Tilde{\z}_{\tau:\tau+H-1}(t)} \log p_t(\Tilde{\z}_{\tau:\tau+H-1}(t) | \u_{\tau-1}(0))$. This $\bepsilon_\theta$ could also be interpreted as predicting the noise in each component of its input $\Tilde{\z}_{\tau:\tau+H-1}(t)$. To train $\bepsilon_\theta$, we need training data of the form $\Tilde{\z}_{\tau:\tau + H - 1}$. The following proposition presents a way to sample $\Tilde{\z}_{\tau:\tau + H - 1}$ from the training trajectories given the initial state condition $\u_{\tau-1}(0)$.
\begin{proposition}
\label{prop:2}
Assume that the joint distribution $p\big(\z_1(0), \z_2(0), \cdots, \z_N(0)|\u_0\big)$ has \emph{Markov property}.
For any $t \in [0, \frac{1}{H}T]$, we have:
\vspace{-10pt}
\begin{equation}
\begin{aligned}
p_t\big(\Tilde{\z}_{\tau:\tau + H - 1}(t)|\u_{\tau-1}(0)\big)
=\mathbb{E}_{\z_{\tau:\tau + H - 1}(0)}[\prod_{i=0}^{H-1} p\big( \z_{\tau + i}(t+\frac{i}{H}T)| \z_{\tau + i}(0) \big)],
\label{eq:asyn_marginal_distribution}
\end{aligned}
\end{equation}
\vspace{-5pt}
where the expectation is about $\z_{\tau:\tau + H - 1}(0) \sim p_{\text{\rm train}}\big(\z_{\tau:\tau + H - 1}(0)|\u_{\tau-1}(0)\big)$, and each
\begin{equation*}
p\big( \z_{\tau + i}(t+\frac{i}{H}T)| \z_{\tau + i}(0) \big) 
= \mathcal{N}\big(\z_{\tau + i}(t+\frac{i}{H}T);s(t+\frac{i}{H}T)\z_{\tau + i}(0), s(t+\frac{i}{H}T)^2\sigma(t+\frac{i}{H}T)^2\mathbf{I}\big).
\label{eq:asyn_conditional_distribution}
\end{equation*}
\end{proposition}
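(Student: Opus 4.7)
The plan is to derive Proposition \ref{prop:2} in two moves: first condition on the clean slice $\z_{\tau:\tau+H-1}(0)$, exploit that the Wiener processes $\omega_{\tau+i}$ driving different physical time steps in Eq. \ref{eq:asyn_forward_sde} are mutually independent to factor the joint transition density, and then marginalize back out against the training prior. The key ingredients are (a) the closed-form Gaussian transition density of the scalar linear SDE in Eq. \ref{eq:forward-sde}, recalled in Appendix \ref{app:sde_property}, and (b) conditional independence across the $H$ components once their noise-free initial values are fixed.

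First I would write
\begin{equation*}
p_t\bigl(\Tilde{\z}_{\tau:\tau+H-1}(t)\,\big|\,\u_{\tau-1}(0)\bigr)
= \int p_t\bigl(\Tilde{\z}_{\tau:\tau+H-1}(t)\,\big|\,\z_{\tau:\tau+H-1}(0)\bigr)\,
p_{\text{train}}\bigl(\z_{\tau:\tau+H-1}(0)\,\big|\,\u_{\tau-1}(0)\bigr)\,\mathrm{d}\z_{\tau:\tau+H-1}(0),
\end{equation*}
using that, once the noise-free slice is fixed, the forward diffusion kernel is driven purely by the $\omega_{\tau+i}$ and is therefore independent of $\u_{\tau-1}(0)$. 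Note that for $t\in[0,\tfrac{1}{H}T]$ every shifted time $t+\tfrac{i}{H}T$ lies in $[0,T]$, so each scalar transition density is well-defined.

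Next I would factor the kernel across $i$. Because $\omega_{\tau+i}$ and $\omega_{\tau+j}$ are independent Wiener processes for $i\neq j$, the $H$ scalar SDEs started from the fixed values $\z_{\tau+i}(0)$ evolve independently, so their marginals at the chosen shifted times are independent. A standard variation-of-parameters solution of the scalar linear SDE gives
\begin{equation*}
p\bigl(\z_{\tau+i}(t+\tfrac{i}{H}T)\,\big|\,\z_{\tau+i}(0)\bigr)
= \mathcal{N}\bigl(\z_{\tau+i}(t+\tfrac{i}{H}T);\,s(t+\tfrac{i}{H}T)\z_{\tau+i}(0),\,s(t+\tfrac{i}{H}T)^2\sigma(t+\tfrac{i}{H}T)^2\mathbf{I}\bigr),
\end{equation*}
with $s,\sigma$ as defined after Eq. \ref{eq:guided_sampling}. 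Taking the product over $i=0,\ldots,H-1$ yields the factorized kernel, and substituting it back into the marginal integral turns the right-hand side into exactly the expectation in Eq. \ref{eq:asyn_marginal_distribution}.

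The main obstacle I anticipate is not any single calculation but the bookkeeping of the two conditioning sets: one must argue cleanly that conditional on $\z_{\tau:\tau+H-1}(0)$ the forward noise makes $\u_{\tau-1}(0)$ drop out of the transition kernel, while $\u_{\tau-1}(0)$ must be retained in the training prior, where Theorem \ref{prop:1}'s Markov assumption is what justifies writing $p_{\text{train}}\bigl(\z_{\tau:\tau+H-1}(0)\,\big|\,\u_0,\z_{1:\tau-1}(0)\bigr)=p_{\text{train}}\bigl(\z_{\tau:\tau+H-1}(0)\,\big|\,\u_{\tau-1}(0)\bigr)$. Once this separation is pinned down, the remainder is routine Gaussian computation.
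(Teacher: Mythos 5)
Your proposal is correct and follows essentially the same route as the paper's proof: marginalize over the clean slice $\z_{\tau:\tau+H-1}(0)$, use the conditional independence of the $H$ components given their noise-free values to factor the forward kernel, and plug in the Gaussian transition density from the linear SDE. The only cosmetic difference is the order in which you drop $\u_{\tau-1}(0)$ from the kernel and factor across $i$, which does not change the argument.
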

\vspace{-5pt}
From this proposition, to sample $\Tilde{\z}_{\tau:\tau + H - 1}$, we first sample a subsequence $\z_{\tau:\tau + H - 1}(0)$ of trajectory following the state $\u_{\tau-1}(0)$, and then independently sample each component $\z_{\tau + i}(t+\frac{i}{H}T)$ of $\Tilde{\z}_{\tau:\tau + H - 1}$ by adding noise to the corresponding $\z_{\tau + i}(0)$.
Then, we train $\bepsilon_\theta$
by the following loss:
\begin{equation}
\mathcal{L}_{\text{asynch}}=\mathbb{E}_{\tau, t, (\u_{\tau-1}(0), \Tilde{\z}_{\tau:\tau+H-1}(t)),\bepsilon}[\|\mathbf{\bepsilon} - \bepsilon_\theta\big(\Tilde{\z}_{\tau:\tau+H-1}(t),\u_{\tau-1},t\big)\|_2^2\big].
\label{eq:training_obj_asyn}
\end{equation}
where the expectation
is about $\tau \sim U(1, N - H + 1)$, $t\sim U(0,\frac{1}{H}T)$, $(\u_{\tau-1}(0), \Tilde{\z}_{\tau:\tau+H-1}(t)) \sim p_{\text{train}}\big(\u_{\tau-1}(0)\big)p_t\big(\Tilde{\z}_{\tau:\tau + H - 1}(t)|\u_{\tau-1}(0)\big)$, $\bepsilon\sim \mathcal{N}(\mathbf{0}, \mathbf{I})$. 
After training, we can use Eq. \ref{eq:asyn_backward_sde}, where $t$ goes from $\frac{1}{H}T$ to $0$, to sample $\Tilde{\z}_{\tau:\tau+H-1}(0) \sim p\big(\Tilde{\z}_{\tau:\tau + H - 1}(0)|\u_{\tau - 1}(0),\Tilde{\z}_{\tau:\tau + H - 1}(\frac{1}{H}T)\big)$.

To sample $\Tilde{\z}_{\tau:\tau+H-1}(0) $ that optimizes the control objective $\mathcal{J}(\z_{1:N}(0))$, 
the denoising process can be performed with an extra term representing the guidance of the control objective $\J$:
\begin{equation}
\begin{aligned}
{\rm d}\Tilde{\z}_{\tau:\tau + H - 1}(t)
=& [\Tilde{f}_{\tau:\tau + H - 1}(t) \Tilde{\z}_{\tau:\tau + H - 1}(t) \\
- &\Tilde{g}_{\tau:\tau + H - 1}(t)^2\nabla_{\Tilde{\z}_{\tau:\tau + H - 1}(t)} \log p_t\big(\Tilde{\z}_{\tau:\tau + H - 1}(t)|\u_{\tau-1}(0)\big)]{\rm d}t \\
+& \underbrace{\Tilde{g}_{\tau:\tau + H - 1}(t)^2 \lambda \nabla_{\Tilde{\z}_{\tau:\tau + H - 1}(t)} \mathcal{J}(\hat{\z}_{\tau:\tau + H - 1}(0))}_{\text{guided sampling by the control objective $\J$}}]{\rm d}t \\
+& \Tilde{g}_{\tau:\tau + H - 1}(t) {\rm d} \omega_{\tau:\tau + H - 1}(t).
\end{aligned}
\label{eq:asyn_control_backward_sde}
\end{equation}
Here, $\hat{\z}_{\tau:\tau + H - 1}(0)\triangleq \mathbb{E}[\z_{\tau:\tau + H - 1}(0)|\Tilde{\z}_{\tau:\tau + H - 1}(t)]$ is the noise-free approximation of $\z_{\tau:\tau + H - 1}(0)$ given by an asynchronous Tweedie's estimate (implemented in Eq. \ref{eq:tweedie_ddpm_asyn}):
\vspace{-4pt}
\begin{equation}
\hat{\z}_{\tau+i}(0) = s(t + \frac{i}{H}T)^{-1}\z_{\tau+i}(t + \frac{i}{H}T) 
+ s(t + \frac{i}{H}T)\sigma(t + \frac{i}{H}T)^2 \frac{\partial \log p_t\big(\Tilde{\z}_{\tau:\tau + H - 1}(t)|\u_{\tau-1}(0)\big)}{\partial \z_{\tau+i}(t + \frac{i}{H}T)}.
\label{eq:sde_tweedie}
\end{equation}

\begin{algorithm}[t]
    \small
    \caption{Closed-loop Control of \proj}
    \label{alg:close_loop}
    \begin{algorithmic}[1]
    \STATE \textbf{Require} synchronous model $\bepsilon_\phi$, asynchronous model $\bepsilon_\theta$, control objective $\J(\cdot)$, initial state $\u_{\text{env},0}$, episode length $N$, model horizon $H$, full denoising steps $T$, hyperparameters $\lambda$. \\
    \STATE \textbf{Initialize} $\Tilde{\z}_{1:H}(\frac{1}{H}T)$ using $\u_{\text{env},0}$, $\bepsilon_\phi$, and $\z_{1:H}(T) \sim \mathcal{N}(\mathbf{0}, \sigma^2_T \mathbf{I})$ by Eq. \ref{eq:guided_sampling} \\
    \FOR{$\tau = 1,\cdots,N$} 
        \FOR{$t = T/H, T/H-1, \cdots, 1$}
            \STATE update $\Tilde{\z}_{\tau:\tau+H-1}(t-1)$ using $\u_{\text{env}, \tau-1}(0)$, $\bepsilon_\theta$, and $\Tilde{\z}_{\tau:\tau+H-1}(t)$ by Eq. \ref{eq:asyn_control_backward_sde}
        \ENDFOR \\
        \STATE  $[\u_{\tau}(0),\w_{\tau}(0)] = \z_{\tau}(0)$
        \STATE input $\u_{\text{env},\tau-1}(0)$ and $\w_{\tau}(0)$ into the system dynamics $G$, which outputs $\u_{\text{env},\tau}$  \small{\color{gray}// closed-loop feedback}
        \vspace{-8pt}
        \STATE sample $\z_{t+H}(T)\sim \mathcal{N}\bigl(0, \sigma^2_T\mathbf{I})$ \small{\color{gray}// append the end of the horizon with noise}
        \STATE $\z_{\tau+1:\tau+H}(\frac{1}{H}T)=[\z_{\tau+1}(\frac{1}{H}T), \cdots, \z_{\tau+H}(T)]$
    \ENDFOR \\
    \end{algorithmic}
\normalsize
\end{algorithm}

\subsection{Closed-loop Control}
\label{sec:cl_generation}
Based on the two learned diffusion models, we now introduce the closed-loop control procedure, which together realize the distribution $\pi$ in Eq. \ref{eq:origion_optimization}, under the guidance of $\mathcal{J}$. We first use the synchronous model $\bepsilon_\phi$ to produce the initial asynchronous variable $\Tilde{\z}_{1:H}(\frac{1}{H}T)$ conditioned on the initial state $\u_{\text{env},0}$\footnote{
In Section \ref{sec:cl_generation} and Algorithm \ref{alg:close_loop}, the subscript ``$\text{env}$" in $\u_{\text{env},\tau}$ denotes the state feedback from the system, to distinguish $\u_{\text{env},\tau}$ from the sampled state $\u_{\tau}(0)$ by diffusion models.
} by applying Eq. \ref{eq:guided_sampling} in the horizon $[1,H]$.
Then, the control process starts.
At each physical time step $\tau \geq 1$, we sample $\Tilde{\z}_{\tau:\tau+H-1}(0)$ using $\u_{\text{env},\tau-1}$ and $\Tilde{\z}_{\tau:\tau+H-1}(\frac{1}{H}T)$ through Eq. \ref{eq:asyn_control_backward_sde}.
We extract the control signal $\w_{\tau}(0)$ from the sampled $\Tilde{\z}_{\tau:\tau+H-1}(0)$, and input the pair $(\u_{\text{env},\tau-1}, \w_{\tau}(0))$ to the system dynamics $G$, which outputs the next state $\u_{\text{env},\tau}$.
Then we sample a noise $\z_{\tau + H}(T) \sim \mathcal{N}(\mathbf{0}, \sigma^2_T \mathbf{I})$, and append it to the last $H-1$ components of $\Tilde{\z}_{\tau:\tau+H-1}(0)$ to compose $\Tilde{\z}_{\tau+1:\tau+H}(\frac{1}{H}T)$.
Now we take $\u_{\text{env},\tau-1}$, instead of the sampled $\u_{\tau-1}(0)$, and $\Tilde{\z}_{\tau+1:\tau+H}(\frac{1}{H}T)$ to the next loop.
The whole procedure is presented in Algorithm \ref{alg:close_loop} and illustrated in Figure \ref{fig:overview}. 
The closed-loop property of this inference process is analyzed in Appendix \ref{app:closed_loop}.

\subsection{Efficiency of \proj}
\label{sec:cl_efficiency}
It is clear that \proj is $H/h$ times faster than DiffPhyCon-$h$, the control method that conducts a full sampling process of DiffPhyCon with horizon $H$ every $h$ physical time steps, as illustrated in Figure \ref{fig:motivation}. Even compared with the adaptive replaning diffusion method \citep{zhou2024adaptive}, \proj is still more efficient because each latent variable is sampled only once and it does not involve extra computation such as likelihood estimation. 

Additionally, although some fast sampling methods were proposed recently, such as DDIM \cite{song2020denoising}, to reduce the sampling cost of diffusion models, our \proj still has an independent acceleration effect beyond them. The key insight is that \proj adopts the same number of sampling steps $T$ in each physical time step compared to DiffPhyCon \citep{wei2024generative}, which brings the opportunity to incorporate fast sampling methods. Hence, by applying them to each physical time step separately in both the sampling processes of the synchronous and asynchronous models, the control efficiency of \proj could be further enhanced.
\section{Experiment}
In the experiments, we aim to answer three questions: (1) Can \proj outperform the classical and state-of-the-art baselines?  (2) Can \proj achieve the desired acceleration of inference as we analyzed in Section \ref{sec:cl_efficiency}, and obtain further acceleration by involving DDIM \citep{song2020denoising}? (3) Can \proj address the challenges of noise, partial observation, partial/boundary control, and high dimensional indirect control? To answer these questions, we conduct experiments on two control tasks: 1D Burgers' equation control and 2D incompressible fluid control, both of which have important applications in science and engineering.

\subsection{Baselines} \label{sec:baselines}
The following classical, imitation learning, reinforcement learning, and diffusion control methods are selected as baselines: the classical control algorithm PID \citep{1580152}; an imitation learning method Behaviour Cloning (BC) \citep{pomerleau1988alvinn}; a recent reinforcement learning method Behavior Proximal Policy Optimization (BPPO) \citep{zhuang2023behavior}; two diffusion control methods, including RDM \citep{zhou2024adaptive}, which adaptively decides when to conduct a full sampling process of control sequence, and DiffPhyCon \citep{wei2024generative}, whose original version plans the control sequences of the whole trajectories in one sampling process. Since the trajectory length is much longer than the diffusion model horizon $H$ ($H=16$ and $H=15$ in 1D and 2D tasks, respectively), we extend DiffPhyCon to its three variants DiffPhyCon-$h$ ($h\in\{1,5, H-1\}$) as our baselines, based on the predefined interval $h$ of physical time steps to conduct a full sampling process. All the diffusion control baselines use the trained synchronous diffusion model of \proj for sampling, with steps $T=900$ and $T=600$ on the 1D and 2D tasks, respectively. PID is inapplicable to the complex 2D task \citep{147ea8517f15447798b6c73b263eb1e6}. RDM is reproduced following the official \href {https://github.com/rainbow979/replandiffuser}{code}. However, the default values of two thresholding hyperparameters in RDM do not perform well on our tasks. 
Therefore, we select a pair of values that perform best (see Appendix \ref{app:rdm_thre} for details). For other baselines, we follow the implementations in DiffPhyCon \cite{wei2024generative}. 

\subsection{1D Burgers' Equation Control}\label{sec:1d_exp}

\textbf{Experiment settings.} The Burgers' equation is a widely used equation to describe a variety of physical systems. We follow the works in \cite{hwang2022solving,mowlavi2023optimal} and consider the 1D Burgers' equation with the Dirichlet boundary condition and external force $w(\tau,x)$ as follows:
\vspace{-5pt}
\begin{eqnarray}
\small
\begin{cases}
\label{eq:burgers}
    \frac{\partial u}{\partial \tau} = -u\cdot \frac{\partial u}{\partial x}+\nu\frac{\partial^2 u}{\partial x^2} +w(\tau,x)  &\text{in } [0,\mathcal{T}] \times \Omega, \\
    u(\tau,x) =0  \quad\quad\quad\quad\quad\quad\quad\quad &\text{in } [0,\mathcal{T}] \times \partial\Omega,    \\
    u(0,x) =u_0(x) \quad\quad\quad\quad\quad\quad &\text{in } \{\tau=0\} \times \Omega,
\end{cases}
\normalsize
\end{eqnarray}
where $\nu$ is the viscosity, and $u_0(x)$ is the initial condition. Given a target state $u_d(\tau,x)$ defined in range $[0,\mathcal{T}] \times \Omega$, the control objective $\controlobj1d$ is to minimize the error between $u(\tau,x)$ and $u_d(\tau,x)$:
\vspace{-5pt}
\begin{equation}
\label{eq:burgers_obj_J_actual}
\controlobj1d\coloneqq\int_{\mathcal{T}}\int_{\Omega}|u(\tau,x)-u_d(\tau,x)|^2\mathrm{d}x\mathrm{d}\tau,
\end{equation}
subject to Eq. \ref{eq:burgers}.
We explore four kinds of settings from real-world considerations: (1) noise-free control; (2) control under physical constraint with a limited range of allowance for control actions; (3) control under random system and measurement noise, respectively; (4) partial control (PC) where actuators are limited to control approximately $1/8$ of the full spatial domain, which is further divided to full observation (FOPC) and partial observation (POPC) with sensors in $1/8$ of the full spatial domain. Details of these settings are
provided in Appendix \ref{app:1d_experiment_four_settings}.

\begin{table}[t]
\small
\renewcommand{\arraystretch}{1.1}
\centering
\caption{Comparison results on 1D Burgers' equation control. The average control objective (
$\controlobj1d$
) and inference time (averaged across all settings) are reported, using a single NVIDIA A100 80GB GPU with 16 CPU cores. Bold font is the best model and the runner-up is underlined.}
\resizebox{\textwidth}{!}{
\begin{small}
\begin{tabular}{m{3.05cm}|m{1.5cm}|m{1.5cm}|m{1.4cm}|m{1.4cm}|m{1.4cm}|m{1.4cm}||m{1.25cm}}
    \hline
    \toprule
    & \centering noise-free $\downarrow$ & \centering physical constraint $\downarrow$ & \centering system noise $\downarrow$ & \centering measure noise $\downarrow$ & \centering FOPC $\downarrow$ & \centering POPC $\downarrow$ & \centering average time (s) $\downarrow$ \tabularnewline
    \midrule
    BC & \centering 0.4708 & \centering 0.4704 & \centering 0.4138 & \centering 0.3981 & \centering 0.1093 & \centering 0.0987 & \centering 0.8356 \tabularnewline
    BPPO & \centering 0.4686 & \centering 0.4507 & \centering 0.4088 & \centering 0.3979 & \centering 0.1079 & \centering 0.0984 & \centering 0.8231 \tabularnewline
    PID & \centering 0.3250 & \centering 0.3585 & \centering 0.2323 & \centering 0.2911 & \centering - & \centering 0.0827 & \centering \textbf{0.7717} \tabularnewline
    \projold-1 & \centering 0.0210 & \centering 0.0214 & \centering 0.0222 & \centering 0.0232 & \centering 0.0330 & \centering 0.0332 & \centering 49.2347 \tabularnewline
    \projold-5 & \centering 0.0252 & \centering 0.0257 & \centering 0.0271 & \centering 0.0272 & \centering 0.0482 & \centering 0.0484 & \centering 10.8308 \tabularnewline
    \projold-15 & \centering 0.0361 & \centering 0.0365 & \centering 0.0382 & \centering 0.0377 & \centering 0.1128 & \centering 0.1132 & \centering 4.9833 \tabularnewline
    RDM & \centering 0.0296 & \centering 0.0296 & \centering 0.0310 & \centering 0.0336 & \centering 0.1124 & \centering 0.1121 & \centering 6.9130\tabularnewline
    \midrule
    \textbf{\proj (ours)} & \centering \textbf{0.0096} & \centering \textbf{0.0110} & \centering \textbf{0.0095} & \centering \textbf{0.0127} & \centering \textbf{0.0291} & \centering \textbf{0.0295} & \centering 4.5474 \tabularnewline
    \textbf{\proj \newline(DDIM, ours)} & \centering \underline{0.0112} & \centering \underline{0.0123} & \centering \underline{0.0114} & \centering \underline{0.0146} & \centering \underline{0.0311} & \centering \underline{0.0313} & \centering \underline{0.8257} \tabularnewline
    \bottomrule
\end{tabular}
\label{tab:1d_main_table}
\end{small}}
\normalsize
\end{table}

\textbf{Results.} In Table \ref{tab:1d_main_table}, we present the results of our proposed \proj and baselines. Note that the reported metrics in different settings are not directly comparable. It can be seen that \proj delivers the best results compared to all baselines in all settings. 
BC and BPPO perform poorly on this task as they rely heavily on the quality of training control sequences, which are far from optimal solutions (see Appendix \ref{app:1d_dataset}). 
The diffusion control baselines perform better, compared with BC and BPPO, because diffusion models conduct global optimization over each horizon through  the conditional generation and excel in sampling from high dimensional space \citep{wei2024generative}. 
Specifically, \proj decreases $\controlobj1d$ by $54.3\%$ and $48.6\%$ compared with the best baselines (DiffPhyCon-1) in the noise-free and physical constraint settings, respectively. 
Furthermore, when the system or measurement is perturbed by noise, our method achieves improvements comparable to those in the noise-free case, which decreases $\controlobj1d$ by $48.6\%$ and $57.2\%$, respectively.
Moreover, despite the reduced controllable range in FOPC and POPC settings typically diminishes the effectiveness of diffusion-based control methods, \proj maintains superior performance over all baselines with at least $11.8\%$ and $11.1\%$ decreasing on $\controlobj1d$. BC, BPPO, and PID demonstrate improved performance because only a small part of actions need to be optimized, making it easier to fit compared to high-dimensional actions, while they still show significantly lower performance compared to diffusion-based methods. 
Visualizations are presented in Appendix \ref{app:1d_vis}; they show that \proj achieves lower error regarding the target state compared with baselines.
In terms of inference time besides the enhanced control performance, \proj brings significant acceleration of the sampling process, about $H/h$ times faster than DiffPhyCon-$h$ and two times faster than RDM. Furthermore, by combining with DDIM of 30 sampling steps, \proj achieves an additional $5\times$ speedup, demonstrating the independent acceleration effect of \proj from existing fast sampling methods of diffusion models. As a result, the reduced time cost of \proj is comparable with those non-diffusion control baselines BC, BPPO, and PID, while the performance is much superior.

\begin{table}[t]
\small
\setlength{\tabcolsep}{10pt}
\centering
\centering
\caption{
Comparison results on 2D incompressible fluid control. The average control objective (
$\controlobj1d$
) and inference time (averaged across all settings) are reported, using a single NVIDIA A6000 48GB GPU with 16 CPU cores. Bold font is the best model and the runner-up is underlined.
} 
{

\resizebox{\textwidth}{!}{
\begin{small}
\begin{tabular}{l|cc|cc||c}
    \hline
    \toprule
    & \multicolumn{2}{c|}{Large domain control} & \multicolumn{2}{c||}{Boundary control} & \multicolumn{1}{c}{Average} \\
    & Fixed map$\downarrow$ & Random map$\downarrow$  & Fixed map$\downarrow$ & Random map$\downarrow$ & time (s) $\downarrow$ \\
    \midrule
    BC & 0.6722 & 0.7046 & 0.8861 & 0.8871 & 4.67\\
    BPPO & 0.6343 & 0.6524 & 0.8830 & 0.8844 & 4.65 \\
    \projold-1 & 0.5454 & \underline{0.3754} & 0.7517 & 0.7955 & 1666.50 \\
    \projold-5 & 0.5051 & 0.5458 & 0.6703 & 0.7451 & 357.66 \\
    \projold-14 & 0.5823 & 0.5621 & \underline{0.6498} & 0.7221 & 141.88\\
    RDM & \underline{0.4074} & 0.4356 & 0.6553 & \underline{0.7087}  & 238.43\\\midrule
    
    \textbf{\proj (ours)} & \textbf{0.3371} & \textbf{0.3485} & \textbf{0.6169} & \textbf{0.7003}& 144.04\\
    \textbf{\proj (DDIM, ours)} & 0.4100 & 0.4254 & 0.6671 & 0.7109 & 26.01\\
    \bottomrule
\end{tabular}
\label{tab:2d_main_table}
\end{small}}
}
\end{table}

\subsection{2D incompressible fluid control}\label{sec:2d_exp}
\textbf{Experiment settings.}  
Our experimental setup is based on previous studies of \cite{holl2020learning,wei2024generative}. Given an initial cloud of smoke in a $64\times 64$ incompressible fluid field, this task aims to minimize the volume of smoke failing to pass through the top middle exit (seven exits in total)
over $N=64$ physical time steps, by applying a sequence of 2D forces outside the outermost obstacles. This task represents a simplified scenario of real-life applications such as indoor air quality control \citep{nair2022review}. We consider two settings from real-world consideration: \textbf{large domain control}, where control signals are applied to all peripheral regions consisting of 1,792 cells outside the outermost obstacles, and \textbf{boundary control}, where control signals are restricted to only the 4$\times$8 cells inside the four exits. It is very challenging since the control forces can only be exerted indirectly in the peripheral regions, which necessitates the model to plan ahead to prevent the smoke from entering the wrong exits or getting stuck in corners. The boundary control setting is even more challenging due to the significantly
reduced range of influence of the controllable cells. During inference, for both settings, 
we follow \cite{zhou2024adaptive} and add $p=0.1$ probability of random control in the execution of control signals. Random controls cause unexpected changes in the system state, which may render previously planned control sequences no longer applicable, thus necessitating re-planning and adding additional challenge to the task. For both settings, besides a fixed map (FM) evaluation mode where test samples use the same obstacles' configuration with training trajectories, we also introduce a random map (RM) mode where the obstacles' configuration varies. For details of experimental settings and implementation, please refer to Appendix \ref{app:2d_implement}. 

\begin{figure*}[t]
\begin{center}
    \includegraphics[width=0.9\textwidth]{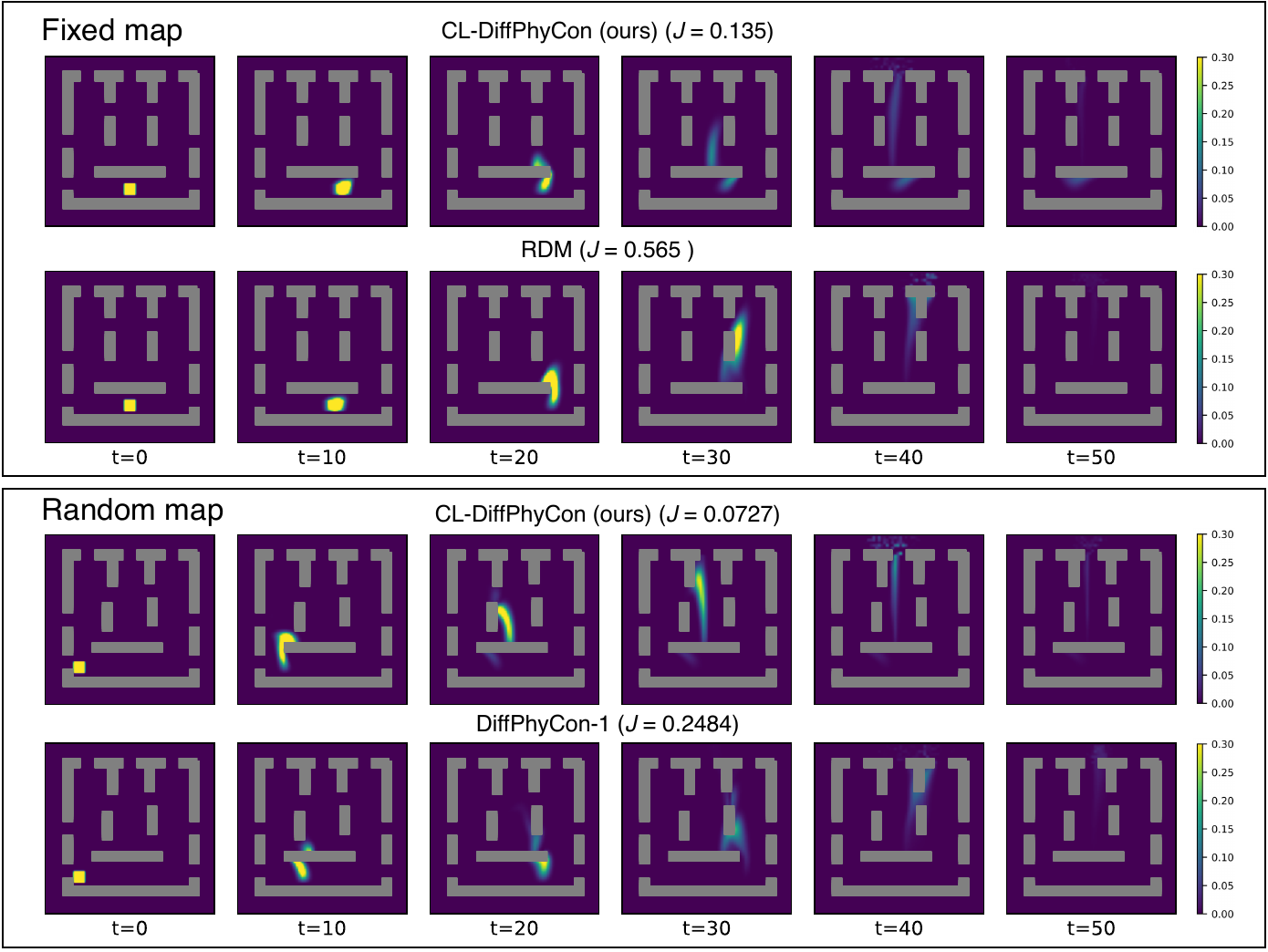}
\end{center}
\caption{
Visual comparisons between \proj and best baselines on 2D fluid control under fixed map (top) and random map (bottom) evaluation modes, respectively. 
}
\vspace{-5pt}
\label{fig:2d_example}
\end{figure*}

\textbf{Results.} 
Table \ref{tab:2d_main_table} shows the control performance of \proj and baselines. The results indicate that \proj outperforms the baselines in both large domain control and boundary control settings, under both fixed map and random map evaluation modes. This validates that our asynchronous diffusion model could effectively sample appropriate subsequent control sequences by conditioning on the changed system state in a closed-loop manner, showcasing strong generalizability. 
The results also demonstrate the advantage of diffusion-based control methods over BC and BPPO.
In Figure \ref{fig:2d_example}, we illustrate randomly selected test samples controlled by \proj and the best baselines
in the large domain control setting. \proj exhibits a stronger capability of adapting to changed fluid and maps.
More visualization results are presented in Appendix \ref{app:2d_vis}. 
We also present the comparison of average inference time in Table \ref{tab:2d_main_table}. 
Aligning with our observation on the 1D task, \proj achieves approximately $H/h$ times speedup compared to DiffPhyCon-$h$. Besides, \proj is much more efficient than RDM by avoiding frequent replanning. By adopting DDIM, the inference is further accelerated, over 5$\times$ faster than strong diffusion-based control baseline RDM, with comparable control performance. For a detailed study of the effect of the model horizon $H$, please refer to Appendix \ref{app:2d_horizon}.
\section{Conclusion and Limitation} \label{app:limitation}

In this paper, we propose \proj, a novel diffusion-based method for closed-loop control of complex physical systems, grounded in a theoretical analysis of the target distribution. Experiments on two physical control tasks demonstrate its superior performance and efficiency.
Still, it has several limitations that offer opportunities for future work. First, \proj is currently trained offline without interacting with the system dynamics. Incorporating real-time feedback during training could enable dynamic adaptation and the discovery of new strategies.
Second, while the two diffusion models in \proj are theoretically derived, a formal bound on its optimization performance under guidance sampling remains an open question, providing a promising direction for further theoretical research.
Finally, the effectiveness of \proj in more domains is worth exploring. Although our method is designed for the control of complex physical systems, it holds promise for broader applications, such as robot control and drone control.

\section*{Acknowledgments}
We gratefully acknowledge the support of Westlake University Research Center for Industries of the Future and Westlake University Center for High-performance Computing. The content is solely the responsibility of the authors and does not necessarily represent the official views of the funding entities.

\bibliography{iclr2025_conference}

\begin{thebibliography}{45}
\providecommand{\natexlab}[1]{#1}
\providecommand{\url}[1]{\texttt{#1}}
\expandafter\ifx\csname urlstyle\endcsname\relax
  \providecommand{\doi}[1]{doi: #1}\else
  \providecommand{\doi}{doi: \begingroup \urlstyle{rm}\Url}\fi

\bibitem[Ajay et~al.(2022)Ajay, Du, Gupta, Tenenbaum, Jaakkola, and Agrawal]{ajay2022conditional}
Anurag Ajay, Yilun Du, Abhi Gupta, Joshua~B Tenenbaum, Tommi~S Jaakkola, and Pulkit Agrawal.
\newblock Is conditional generative modeling all you need for decision making?
\newblock In \emph{The Eleventh International Conference on Learning Representations}, 2022.

\bibitem[{\AA}str{\"o}m \& H{\"a}gglund(2000){\AA}str{\"o}m and H{\"a}gglund]{147ea8517f15447798b6c73b263eb1e6}
{Karl Johan} {\AA}str{\"o}m and Tore H{\"a}gglund.
\newblock The future of pid control.
\newblock In \emph{Digital Control – Past, present, and future of PID Control}, United States, 2000. Elsevier.

\bibitem[Beintema et~al.(2020)Beintema, Corbetta, Biferale, and Toschi]{beintema2020controlling}
Gerben Beintema, Alessandro Corbetta, Luca Biferale, and Federico Toschi.
\newblock Controlling rayleigh--b{\'e}nard convection via reinforcement learning.
\newblock \emph{Journal of Turbulence}, 21\penalty0 (9-10):\penalty0 585--605, 2020.

\bibitem[Chi et~al.(2023)Chi, Feng, Du, Xu, Cousineau, Burchfiel, and Song]{chi2023diffusion}
Cheng Chi, Siyuan Feng, Yilun Du, Zhenjia Xu, Eric Cousineau, Benjamin Burchfiel, and Shuran Song.
\newblock Diffusion policy: Visuomotor policy learning via action diffusion.
\newblock \emph{arXiv preprint arXiv:2303.04137}, 2023.

\bibitem[Chung et~al.(2023)Chung, Kim, Mccann, Klasky, and Ye]{chung2023diffusion}
Hyungjin Chung, Jeongsol Kim, Michael~Thompson Mccann, Marc~Louis Klasky, and Jong~Chul Ye.
\newblock Diffusion posterior sampling for general noisy inverse problems.
\newblock In \emph{The Eleventh International Conference on Learning Representations}, 2023.
\newblock URL \url{https://openreview.net/forum?id=OnD9zGAGT0k}.

\bibitem[Degrave et~al.(2022)Degrave, Felici, Buchli, Neunert, Tracey, Carpanese, Ewalds, Hafner, Abdolmaleki, de~Las~Casas, et~al.]{degrave2022magnetic}
Jonas Degrave, Federico Felici, Jonas Buchli, Michael Neunert, Brendan Tracey, Francesco Carpanese, Timo Ewalds, Roland Hafner, Abbas Abdolmaleki, Diego de~Las~Casas, et~al.
\newblock Magnetic control of tokamak plasmas through deep reinforcement learning.
\newblock \emph{Nature}, 602\penalty0 (7897):\penalty0 414--419, 2022.

\bibitem[Dhariwal \& Nichol(2021)Dhariwal and Nichol]{dhariwal2021diffusion}
Prafulla Dhariwal and Alexander Nichol.
\newblock Diffusion models beat gans on image synthesis.
\newblock \emph{Advances in neural information processing systems}, 34:\penalty0 8780--8794, 2021.

\bibitem[Elhawary(2020)]{elhawary2020deep}
MA~Elhawary.
\newblock Deep reinforcement learning for active flow control around a circular cylinder using unsteady-mode plasma actuators.
\newblock \emph{arXiv preprint arXiv:2012.10165}, 2020.

\bibitem[Feng et~al.(2023)Feng, Wang, Xiang, Jin, and Fan]{feng2023control}
Haodong Feng, Yue Wang, Hui Xiang, Zhiyang Jin, and Dixia Fan.
\newblock How to control hydrodynamic force on fluidic pinball via deep reinforcement learning.
\newblock \emph{Physics of Fluids}, 35\penalty0 (4), 2023.

\bibitem[Feng et~al.(2024)Feng, Yuan, Miao, You, Wang, Zhu, and Fan]{feng2024efficient}
Haodong Feng, Dehan Yuan, Jiale Miao, Jie You, Yue Wang, Yi~Zhu, and Dixia Fan.
\newblock Efficient navigation of a robotic fish swimming across the vortical flow field.
\newblock \emph{arXiv preprint arXiv:2405.14251}, 2024.

\bibitem[Hachem et~al.(2021)Hachem, Ghraieb, Viquerat, Larcher, and Meliga]{hachem2021deep}
Elie Hachem, Hassan Ghraieb, Jonathan Viquerat, Aur{\'e}lien Larcher, and P~Meliga.
\newblock Deep reinforcement learning for the control of conjugate heat transfer.
\newblock \emph{Journal of Computational Physics}, 436:\penalty0 110317, 2021.

\bibitem[Hesse et~al.(2018)Hesse, Timmermann, Hüllermeier, and Trächtler]{Hesse_Timmermann_Hüllermeier_Trächtler_2018}
Michael Hesse, Julia Timmermann, Eyke Hüllermeier, and Ansgar Trächtler.
\newblock A reinforcement learning strategy for the swing-up of the double pendulum on a cart.
\newblock \emph{Procedia Manufacturing}, 24:\penalty0 15–20, 2018.
\newblock ISSN 23519789.
\newblock \doi{10.1016/j.promfg.2018.06.004}.

\bibitem[Ho et~al.(2020)Ho, Jain, and Abbeel]{ho2020denoising}
Jonathan Ho, Ajay Jain, and Pieter Abbeel.
\newblock Denoising diffusion probabilistic models.
\newblock \emph{Advances in neural information processing systems}, 33:\penalty0 6840--6851, 2020.

\bibitem[Ho et~al.(2022)Ho, Salimans, Gritsenko, Chan, Norouzi, and Fleet]{ho2022video}
Jonathan Ho, Tim Salimans, Alexey Gritsenko, William Chan, Mohammad Norouzi, and David~J Fleet.
\newblock Video diffusion models.
\newblock \emph{Advances in Neural Information Processing Systems}, 35:\penalty0 8633--8646, 2022.

\bibitem[Holl et~al.(2020)Holl, Koltun, and Thuerey]{holl2020learning}
Philipp Holl, Vladlen Koltun, and Nils Thuerey.
\newblock Learning to control pdes with differentiable physics.
\newblock \emph{arXiv preprint arXiv:2001.07457}, 2020.

\bibitem[Hwang et~al.(2022)Hwang, Lee, Shin, and Hwang]{hwang2022solving}
Rakhoon Hwang, Jae~Yong Lee, Jin~Young Shin, and Hyung~Ju Hwang.
\newblock Solving {PDE}-{Constrained} {Control} {Problems} {Using} {Operator} {Learning}.
\newblock \emph{AAAI}, 36\penalty0 (4):\penalty0 4504--4512, June 2022.
\newblock ISSN 2374-3468, 2159-5399.
\newblock \doi{10.1609/aaai.v36i4.20373}.
\newblock URL \url{https://ojs.aaai.org/index.php/AAAI/article/view/20373}.

\bibitem[Ixaru \& Vanden~Berghe(2004)Ixaru and Vanden~Berghe]{Ixaru2004}
Liviu~Gr. Ixaru and Guido Vanden~Berghe.
\newblock \emph{Runge-Kutta Solvers for Ordinary Differential Equations}, pp.\  223--304.
\newblock Springer Netherlands, Dordrecht, 2004.
\newblock ISBN 978-1-4020-2100-8.
\newblock \doi{10.1007/978-1-4020-2100-8_6}.
\newblock URL \url{https://doi.org/10.1007/978-1-4020-2100-8_6}.

\bibitem[Janner et~al.(2022{\natexlab{a}})Janner, Du, Tenenbaum, and Levine]{janner2022planning}
Michael Janner, Yilun Du, Joshua Tenenbaum, and Sergey Levine.
\newblock Planning with diffusion for flexible behavior synthesis.
\newblock \emph{Proceedings of Machine Learning Research}, 162:\penalty0 9902--9915, 17--23 Jul 2022{\natexlab{a}}.

\bibitem[Janner et~al.(2022{\natexlab{b}})Janner, Du, Tenenbaum, and Levine]{janner_planning_2022}
Michael Janner, Yilun Du, Joshua Tenenbaum, and Sergey Levine.
\newblock Planning with diffusion for flexible behavior synthesis.
\newblock In Kamalika Chaudhuri, Stefanie Jegelka, Le~Song, Csaba Szepesvari, Gang Niu, and Sivan Sabato (eds.), \emph{Proceedings of the 39th International Conference on Machine Learning}, volume 162, pp.\  9902--9915. PMLR, 17--23 Jul 2022{\natexlab{b}}.

\bibitem[Kaelbling \& Lozano-P{\'e}rez(2011)Kaelbling and Lozano-P{\'e}rez]{kaelbling2011hierarchical}
Leslie~Pack Kaelbling and Tom{\'a}s Lozano-P{\'e}rez.
\newblock Hierarchical task and motion planning in the now.
\newblock In \emph{2011 IEEE International Conference on Robotics and Automation}, pp.\  1470--1477. IEEE, 2011.

\bibitem[Karras et~al.(2022)Karras, Aittala, Aila, and Laine]{karras2022elucidating}
Tero Karras, Miika Aittala, Timo Aila, and Samuli Laine.
\newblock Elucidating the design space of diffusion-based generative models.
\newblock \emph{Advances in neural information processing systems}, 35:\penalty0 26565--26577, 2022.

\bibitem[Li et~al.(2006)Li, Ang, and Chong]{1580152}
Yun Li, Kiam~Heong Ang, and G.C.Y. Chong.
\newblock Pid control system analysis and design.
\newblock \emph{IEEE Control Systems Magazine}, 26\penalty0 (1):\penalty0 32--41, 2006.
\newblock \doi{10.1109/MCS.2006.1580152}.

\bibitem[Mowlavi \& Nabi(2023)Mowlavi and Nabi]{mowlavi2023optimal}
Saviz Mowlavi and Saleh Nabi.
\newblock Optimal control of pdes using physics-informed neural networks.
\newblock \emph{Journal of Computational Physics}, 473:\penalty0 111731, 2023.

\bibitem[Nair et~al.(2022)Nair, Anand, George, and Mondal]{nair2022review}
Ajith~N Nair, Prashant Anand, Abraham George, and Nilabhra Mondal.
\newblock A review of strategies and their effectiveness in reducing indoor airborne transmission and improving indoor air quality.
\newblock \emph{Environmental Research}, 213:\penalty0 113579, 2022.

\bibitem[Novati et~al.(2017)Novati, Verma, Alexeev, Rossinelli, Van~Rees, and Koumoutsakos]{novati2017synchronisation}
Guido Novati, Siddhartha Verma, Dmitry Alexeev, Diego Rossinelli, Wim~M Van~Rees, and Petros Koumoutsakos.
\newblock Synchronisation through learning for two self-propelled swimmers.
\newblock \emph{Bioinspiration \& biomimetics}, 12\penalty0 (3):\penalty0 036001, 2017.

\bibitem[Pomerleau(1988)]{pomerleau1988alvinn}
Dean~A Pomerleau.
\newblock Alvinn: An autonomous land vehicle in a neural network.
\newblock \emph{Advances in neural information processing systems}, 1, 1988.

\bibitem[Price et~al.(2023)Price, Sanchez-Gonzalez, Alet, Ewalds, El-Kadi, Stott, Mohamed, Battaglia, Lam, and Willson]{price2023gencast}
Ilan Price, Alvaro Sanchez-Gonzalez, Ferran Alet, Timo Ewalds, Andrew El-Kadi, Jacklynn Stott, Shakir Mohamed, Peter Battaglia, Remi Lam, and Matthew Willson.
\newblock Gencast: Diffusion-based ensemble forecasting for medium-range weather.
\newblock \emph{arXiv preprint arXiv:2312.15796}, 2023.

\bibitem[Rabault et~al.(2019)Rabault, Kuchta, Jensen, R{\'e}glade, and Cerardi]{rabault2019artificial}
Jean Rabault, Miroslav Kuchta, Atle Jensen, Ulysse R{\'e}glade, and Nicolas Cerardi.
\newblock Artificial neural networks trained through deep reinforcement learning discover control strategies for active flow control.
\newblock \emph{Journal of fluid mechanics}, 865:\penalty0 281--302, 2019.

\bibitem[Reyes~Garza et~al.(2023)Reyes~Garza, Kyriakopoulos, Cenev, Rigoni, and Timonen]{reyes2023magnetic}
Ricardo Reyes~Garza, Nikos Kyriakopoulos, Zoran~M Cenev, Carlo Rigoni, and Jaakko~VI Timonen.
\newblock Magnetic quincke rollers with tunable single-particle dynamics and collective states.
\newblock \emph{Science Advances}, 9\penalty0 (26):\penalty0 eadh2522, 2023.

\bibitem[Ronneberger et~al.(2015)Ronneberger, Fischer, and Brox]{ronneberger2015u}
Olaf Ronneberger, Philipp Fischer, and Thomas Brox.
\newblock U-net: Convolutional networks for biomedical image segmentation.
\newblock In \emph{Medical image computing and computer-assisted intervention--MICCAI 2015: 18th international conference, Munich, Germany, October 5-9, 2015, proceedings, part III 18}, pp.\  234--241. Springer, 2015.

\bibitem[Ruhe et~al.(2024)Ruhe, Heek, Salimans, and Hoogeboom]{ruhe2024rolling}
David Ruhe, Jonathan Heek, Tim Salimans, and Emiel Hoogeboom.
\newblock Rolling diffusion models.
\newblock In \emph{Forty-first International Conference on Machine Learning}, 2024.
\newblock URL \url{https://openreview.net/forum?id=a9bzTv9SzO}.

\bibitem[S{\"a}rkk{\"a} \& Solin(2019)S{\"a}rkk{\"a} and Solin]{sarkka2019applied}
Simo S{\"a}rkk{\"a} and Arno Solin.
\newblock \emph{Applied stochastic differential equations}, volume~10.
\newblock Cambridge University Press, 2019.

\bibitem[Schwenzer et~al.(2021)Schwenzer, Ay, Bergs, and Abel]{schwenzer2021review}
Max Schwenzer, Muzaffer Ay, Thomas Bergs, and Dirk Abel.
\newblock Review on model predictive control: An engineering perspective.
\newblock \emph{The International Journal of Advanced Manufacturing Technology}, 117\penalty0 (5-6):\penalty0 1327--1349, 2021.

\bibitem[Song et~al.(2020)Song, Meng, and Ermon]{song2020denoising}
Jiaming Song, Chenlin Meng, and Stefano Ermon.
\newblock Denoising diffusion implicit models.
\newblock \emph{arXiv:2010.02502}, October 2020.
\newblock URL \url{https://arxiv.org/abs/2010.02502}.

\bibitem[Song et~al.(2021)Song, Sohl-Dickstein, Kingma, Kumar, Ermon, and Poole]{song2021scorebased}
Yang Song, Jascha Sohl-Dickstein, Diederik~P Kingma, Abhishek Kumar, Stefano Ermon, and Ben Poole.
\newblock Score-based generative modeling through stochastic differential equations.
\newblock In \emph{International Conference on Learning Representations}, 2021.
\newblock URL \url{https://openreview.net/forum?id=PxTIG12RRHS}.

\bibitem[Verma et~al.(2018)Verma, Novati, and Koumoutsakos]{verma2018efficient}
Siddhartha Verma, Guido Novati, and Petros Koumoutsakos.
\newblock Efficient collective swimming by harnessing vortices through deep reinforcement learning.
\newblock \emph{Proceedings of the National Academy of Sciences}, 115\penalty0 (23):\penalty0 5849--5854, 2018.

\bibitem[Wang et~al.(2024)Wang, Lin, Zhao, Chen, Guo, Yang, Wang, and Fan]{wang2024learn}
ZP~Wang, RJ~Lin, ZY~Zhao, X~Chen, PM~Guo, N~Yang, ZC~Wang, and DX~Fan.
\newblock Learn to flap: foil non-parametric path planning via deep reinforcement learning.
\newblock \emph{Journal of Fluid Mechanics}, 984:\penalty0 A9, 2024.

\bibitem[Wei et~al.(2024)Wei, Hu, Feng, Feng, Du, Zhang, Wang, Wang, Ma, and Wu]{wei2024generative}
Long Wei, Peiyan Hu, Ruiqi Feng, Haodong Feng, Yixuan Du, Tao Zhang, Rui Wang, Yue Wang, Zhi-Ming Ma, and Tailin Wu.
\newblock A generative approach to control complex physical systems.
\newblock \emph{arXiv preprint arXiv:2407.06494}, 2024.

\bibitem[Willard et~al.(2020)Willard, Jia, Xu, Steinbach, and Kumar]{willard2020integrating}
Jared Willard, Xiaowei Jia, Shaoming Xu, Michael Steinbach, and Vipin Kumar.
\newblock Integrating physics-based modeling with machine learning: A survey.
\newblock \emph{arXiv preprint arXiv:2003.04919}, 1\penalty0 (1):\penalty0 1--34, 2020.

\bibitem[Wu et~al.(2024)Wu, Maruyama, Wei, Zhang, Du, Iaccarino, and Leskovec]{wu2024compositional}
Tailin Wu, Takashi Maruyama, Long Wei, Tao Zhang, Yilun Du, Gianluca Iaccarino, and Jure Leskovec.
\newblock Compositional generative inverse design.
\newblock In \emph{The Twelfth International Conference on Learning Representations}, 2024.
\newblock URL \url{https://openreview.net/forum?id=wmX0CqFSd7}.

\bibitem[Wu et~al.(2023)Wu, Fan, Liu, Zheng, Gong, Jiao, Li, Guo, Duan, Chen, et~al.]{wu2023ar}
Tong Wu, Zhihao Fan, Xiao Liu, Hai-Tao Zheng, Yeyun Gong, Jian Jiao, Juntao Li, Jian Guo, Nan Duan, Weizhu Chen, et~al.
\newblock Ar-diffusion: Auto-regressive diffusion model for text generation.
\newblock \emph{Advances in Neural Information Processing Systems}, 36:\penalty0 39957--39974, 2023.

\bibitem[Yamakita et~al.(1993)Yamakita, Nonaka, and Furuta]{Yamakita_Nonaka_Furuta_1993}
Masaki Yamakita, Kenichi Nonaka, and Katsuhisa Furuta.
\newblock Swing up control of a double pendulum.
\newblock In \emph{1993 American Control Conference}, pp.\  2229–2233, San Francisco, CA, USA, June 1993. IEEE.
\newblock ISBN 978-0-7803-0860-2.
\newblock \doi{10.23919/ACC.1993.4793279}.
\newblock URL \url{https://ieeexplore.ieee.org/document/4793279/}.

\bibitem[Yildiz et~al.(2021)Yildiz, Heinonen, and L{\"a}hdesm{\"a}ki]{yildiz2021continuous}
Cagatay Yildiz, Markus Heinonen, and Harri L{\"a}hdesm{\"a}ki.
\newblock Continuous-time model-based reinforcement learning.
\newblock In \emph{International Conference on Machine Learning}, pp.\  12009--12018. PMLR, 2021.

\bibitem[Zhou et~al.(2024)Zhou, Du, Zhang, Xu, Shen, Xiao, Yeung, and Gan]{zhou2024adaptive}
Siyuan Zhou, Yilun Du, Shun Zhang, Mengdi Xu, Yikang Shen, Wei Xiao, Dit-Yan Yeung, and Chuang Gan.
\newblock Adaptive online replanning with diffusion models.
\newblock \emph{Advances in Neural Information Processing Systems}, 36, 2024.

\bibitem[Zhuang et~al.(2023)Zhuang, LEI, Liu, Wang, and Guo]{zhuang2023behavior}
Zifeng Zhuang, Kun LEI, Jinxin Liu, Donglin Wang, and Yilang Guo.
\newblock Behavior proximal policy optimization.
\newblock In \emph{The Eleventh International Conference on Learning Representations}, 2023.
\newblock URL \url{https://openreview.net/forum?id=3c13LptpIph}.

\end{thebibliography}
\bibliographystyle{iclr2025_conference}
\newpage
\appendix
\section{Basic Properties of stochastic differential equation (SDE)}
\label{app:sde_property}
In this work, we use the formulas of stochastic differential equation (SDE) to express diffusion models.
Specifically, the SDEs that we use can be written to the following unified form:
\begin{equation}
{\rm d} \x(t) = f(t) \x(t) {\rm d}t + g(t) {\rm d} \omega(t).
\label{eq:general_sde_forward}
\end{equation}
The reverse-time SDE of Eq. \ref{eq:general_sde_forward} is:
\begin{equation}
{\rm d} \x(t) = [f(t) \x(t) - g(t)^2 \nabla_{\x(t)} \log p_t\big(\x(t)\big)]{\rm d}t + g(t) {\rm d} \omega(t).
\label{eq:general_sde_backward}
\end{equation}

In this section, we summarize some basic properties of Eq. \ref{eq:general_sde_forward}, which can also be found in existing SDE works~\cite{song2021scorebased,chung2023diffusion,karras2022elucidating}.

First, such a linear SDE has the closed solution~\cite{sarkka2019applied}
\begin{equation}
\x(t) = \x(0) \exp(\int_0^t f(\eta) {\rm d} \eta) + \int_0^t \exp(\int_{\zeta}^t f(\eta) {\rm d} \eta)g(\zeta){\rm d} \omega(\zeta),
\label{eq:sde_solution}
\end{equation}
where the second term is an It\^{o} integral.
And the mean $\mathbf{m}(t)$ and covariance $\mathbf{V}(t)$ satisfies the following ordinary differential equations:
\begin{equation}
\begin{aligned}
\frac{{\rm d}\mathbf{m}(t)}{{\rm d}t} &= f(t) \mathbf{m}(t), \\
\frac{{\rm d}\mathbf{V}(t)}{{\rm d}t} &= 2f(t) \mathbf{V}(t) + g(t)^2 \mathbf{I}.
\end{aligned}
\end{equation}
Solving these equations, we get:
\begin{equation}
\begin{aligned}
\mathbf{m}(t) &= s(t) \mathbf{m}(0),  \\
\mathbf{V}(t) &= s(t)^2 (\sigma(t)^2 \mathbf{I} + \mathbf{V}(0)),
\label{eq:sde_moment}
\end{aligned}
\end{equation}
where $s(t) = \exp\big( \int_0^t f(\eta) {\rm d} \eta \big)$ and $\sigma(t) = \sqrt{\int_0^t \frac{g(\eta)^2}{s(\eta)^2} {\rm d} \eta}$.

\textbf{Tweedie’s estimate}. From Eq. \ref{eq:sde_solution} and Eq. \ref{eq:sde_moment}, we know
\begin{equation}
\label{eq:cond_gaussian}
\x(t)|\x(0) \sim \mathcal{N}\big(s(t)\x(0), s(t)^2\sigma(t)^2\mathbf{I}\big).
\end{equation}
According to the density function of normal distribution, we have:
\begin{equation}
\nabla_{\x(t)} \log p\big( \x(t)| \x(0)\big) = s(t)^{-2}\sigma(t)^{-2} (s(t)\x(0) - \x(t)).
\end{equation}

Take the expectation over $\x(0)$ conditional on $\x(t)$ on both sides:
\begin{equation}
\begin{aligned}
&s(t)^{-2}\sigma(t)^{-2} (s(t)\mathbb{E}[\x(0)|\x(t)] - \x(t)) \\
= &\mathbb{E}_{\x(0)}[\nabla_{\x(t)} \log p\big( \x(t)| \x(0)\big)|\x(t)] \\
= &\int \nabla_{\x(t)} \log p\big( \x(t)| \x(0)\big) p\big( \x(0)| \x(t)\big) {\rm d}\x(0) \\
= &\int \nabla_{\x(t)}  p\big( \x(t), \x(0)\big) /  p\big( \x(t)\big){\rm d}\x(0) \\
= &\nabla_{\x(t)}  p\big( \x(t)\big) /  p\big( \x(t)\big) \\
= &\nabla_{\x(t)} \log p\big( \x(t)\big).
\end{aligned}
\end{equation}

Rearranging the equation, we get Tweedie's estimate:
\begin{equation}
\label{eq:tweedie_estimate}
\hat{\x}(0) \triangleq \mathbb{E}[\x(0)|\x(t)] = s(t)^{-1}\x(t) + s(t)\sigma(t)^2\nabla_{\x(t)} \log p\big( \x(t)\big).
\end{equation}
Intuitively, Eq. \ref{eq:tweedie_estimate} estimate the noise-free $\x_0$ given $\x_t$.

\textbf{Approximate score function by diffusion models}.
Now we consider the following loss function of DDPM \citep{ho2020denoising} to train the diffusion model $\bepsilon_\phi$:
\begin{equation}
\begin{aligned}
&\mathbb{E}_{t, \x(0),\bepsilon}[\|\mathbf{\bepsilon} - \bepsilon_\phi\big(s(t)\x(0) + s(t)^2\sigma(t)^2\mathbf{\bepsilon}, t \big)\|_2^2] \\
= &\mathbb{E}_{t, \x(0),\x(t)}[\|\frac{\x(t) - \x(0)s(t)}{s(t)^2\sigma(t)^2} - \bepsilon_\phi\big(\x(t), t \big)\|_2^2].
\end{aligned}
\end{equation}

The best prediction $\bepsilon_{\phi^*}(\cdot, t)$ in this loss is the following conditional expectation:
\begin{equation}
\begin{aligned}
\bepsilon_{\phi^*}\big(\x(t), t \big)
= &\mathbb{E}[\frac{\x(t) - \x(0)s(t)}{s(t)^2\sigma(t)^2} - \bepsilon_\phi\big(\x(t), t \big)|\x(t)] \\
= &\frac{\x(t) -\mathbb{E}[ \x(0)| \x(t)]s(t)}{s(t)^2\sigma(t)^2}.
\end{aligned}
\end{equation}

Plugging Eq. \ref{eq:tweedie_estimate} in it, we get the score function:
\begin{equation}
\begin{aligned}
\nabla_{\x(t)} \log p\big( \x(t)\big) = - \bepsilon_{\phi^*}\big(\x(t), t \big).
\end{aligned}
\end{equation}

\section{Variance preserving (VP) SDE}
\label{app:vp_sde}
In the practical training and inference, we use Variance Preserving (VP) SDE~\cite{song2021scorebased}, which is actually the continous version of DDPM~\cite{ho2020denoising}.

VP SDE specializes Eq. \ref{eq:general_sde_forward} to
\begin{equation}
{\rm d} \x(t) = - \frac{1}{2} \beta(t) \x(t) {\rm d}t + \sqrt{\beta(t)} {\rm d} \omega(t),
\label{eq:vp_sde_forward}
\end{equation}
where $\beta(t) > 0$ for $t \in [0, T]$.
And its reverse-time SDE is:
\begin{equation}
{\rm d} \x(t) = [- \frac{1}{2} \beta(t) \x(t) - \beta(t)\nabla_{\x(t)} \log p_t\big(\x(t)\big)] {\rm d}t + \sqrt{\beta(t)} {\rm d} \omega(t),
\label{eq:vp_sde_backward}
\end{equation}

In algorithm implementation, we use $K$ time steps\footnote{In this section, \emph{time step} means diffusion step, denoted by the subscript $i$, rather than physical time step as in the main text.}, $t_i = \frac{i}{K}T$ for $i=0,\cdots,K-1$, to discretize Eq. \ref{eq:vp_sde_forward}.
Using $\Delta t = \frac{1}{K}T$, we have:
\begin{equation}
\begin{aligned}
\x(t_i + \Delta t)
&\approx  \x(t_i)- \frac{1}{2} \beta(t_i) \Delta t \x(t_i) + \sqrt{\beta(t_i) \Delta t} \xi, \\
&\approx \sqrt{1 - \beta(t_i)\Delta t} \x(t_i) + \sqrt{\beta(t_i) \Delta t} \xi,
\end{aligned}
\end{equation}
where $\xi \sim \mathcal{N}(\mathbf{0}, \mathbf{I})$.

Let $\beta_i = \beta(t_i) \Delta t$, $\x_i = \x(t_i)$.
We get the discrete forward iterative equation, which is exact DDPM:
\begin{equation}
\begin{aligned}
\x_{i+1} = \sqrt{1 - \beta_i} \x_i + \sqrt{\beta_i} \xi.
\end{aligned}
\end{equation}

Similarly, Eq. \ref{eq:vp_sde_backward} can be discretize to
\begin{equation}
\begin{aligned}
\x(t_i - \Delta t) &\approx \x(t_i) + \frac{1}{2} \beta(t_i)\Delta t\x(t_i) + \beta(t_i)\Delta t\nabla_{\x(t_i)} \log p_{t_i}\big(\x(t_i)\big) + \sqrt{\beta(t_i) \Delta t} \xi \\
&\approx \frac{1}{\sqrt{1 - \beta(t_i)\Delta t}} \x(t_i) + \beta(t_i)\Delta t\nabla_{\x(t_i)} \log p_{t_i}\big(\x(t_i)\big) + \sqrt{\beta(t_i) \Delta t} \xi.
\end{aligned}
\end{equation}

So, we get the discrete backward iterative equation:
\begin{equation}
\begin{aligned}
\x_{i-1} = \frac{1}{\sqrt{1 - \beta_i}} \x_i + \beta_i\nabla_{\x_i} \log p\big(\x_i\big) + \sqrt{\beta_i} \xi.
\label{eq:ddpm_sample}
\end{aligned}
\end{equation}

Additionally, we have:
\begin{equation}
\begin{aligned}
s(t_i)^2 &= \exp\big( \int_{t_0}^{t_i + \Delta t} -\beta(\eta) {\rm d} \eta \big) \\
&= \exp\big( \sum_{j=0}^{i} -\beta(t_j) \Delta t\big) \\
&= \prod_{j=0}^{i}\exp\big( -\beta(t_j) \Delta t\big) \\
&\approx \prod_{j=0}^{i} (1 - \beta(t_j) \Delta t) \\
&= \prod_{j=0}^{i} (1 - \beta_j).
\end{aligned}
\end{equation}

And, we also have:
\begin{equation}
\begin{aligned}
s(t_i)^2\sigma(t_i)^2 &= \exp\big(\int_{t_0}^{t_i + \Delta t}-\beta(\eta){\rm d}\eta\big)\int_{t_0}^{t_i + \Delta t} \beta(\eta) \exp\big( \int_{t_0}^{\eta}\beta(\zeta){\rm d}\zeta \big){\rm d}\eta \\
&= \int_{t_0}^{t_i + \Delta t} \beta(\eta) \exp\big( \int_{\eta}^{t_i + \Delta t} - \beta(\zeta){\rm d}\zeta \big){\rm d}\eta \\
&\approx 1 - \exp \big(\int_{t_0}^{t_i + \Delta t} -\beta(\eta) \exp\big( \int_{\eta}^{t_i + \Delta t} - \beta(\zeta){\rm d}\zeta \big){\rm d}\eta \big) \\
&\approx 1 - \exp \big(\int_{t_0}^{t_i + \Delta t} -\beta(\eta) {\rm d}\eta \big) \\
&= 1 - s(t_i)^2 \\
&\approx 1 - \prod_{j=0}^{i} (1 - \beta_j).
\end{aligned}
\end{equation}

We define $\alpha_i = 1 - \beta_i$ and $\bar{\alpha}_i = \prod_{j=0}^{i}\alpha_i$.
Then, we get $s(t_i)^2 \approx \bar{\alpha}_i$, $s(t_i)^2\sigma(t_i)^2 \approx 1 - \bar{\alpha}_i$, and $\x_i|\x_0 \sim \mathcal{N}\big(\sqrt{\bar{\alpha}_i}\x_0, (1 - \bar{\alpha}_i)\mathbf{I}\big)$.
Therefore, Tweedie's estimate becomes
\begin{equation}
\label{eq:tweedie_ddpm}
\hat{\x}(0) = \frac{1}{\sqrt{\bar{\alpha}_i}}(\x_i + (1 - \bar{\alpha}_i)\nabla_{\x_i} \log p ( \x_i )).
\end{equation}

According to Eq. \ref{eq:sde_moment}, the covariance of $\x_i$ is $(1 - \bar{\alpha}_i)\mathbf{I} + \bar{\alpha}_i \mathbf{V}_{0}$.
Assuming the clear data has covariance $\mathbf{I}$, we can start from $\x_K \sim \mathcal{N}(\mathbf{0}, \mathbf{I})$ during inference.

\section{DDPM implementation of \proj}
\label{app:DDPM}
DDPM \citep{ho2020denoising} is a widely adopted implementation of diffusion models. Based on the general DDPM derivations in Appendix \ref{app:vp_sde} and SDE  derivations of \proj in Section \ref{sec:method}, here we also present the DDPM implementation of our \proj.
By choosing $f(t)=-\frac{1}{2}\beta(t)$, $g(t)=\sqrt{\beta(t)}$ and following the values of $\alpha_t$'s and $\beta_t$'s specified in Appendix \ref{app:vp_sde}, we have the following implementations.

\subsection{DDPM sampling from the synchronous diffusion model}
\label{app:DDPM_syn}
By using the approximation $\nabla_{\z_{1:N}(t)} \log p_t\big(\z_{1:N}(t) | \u_0\big) \approx - \bepsilon_{\phi} \big(\z_{1:N}(t), \u_0, t\big)$ and Eq. \ref{eq:ddpm_sample}, the DDPM version implementation of the sampling process in Eq. \ref{eq:guided_sampling} (from DiffPhyCon\citep{wei2024generative}) is
\begin{equation}
\begin{aligned}
\z_{1:N}(t-1)
=& \frac{1}{\sqrt{1-\beta_t}}\z_{1:N}(t) + \beta_t\nabla_{\z_{1:N}(t)} \log p_t\big(\z_{1:N}(t) | \u_0\big)  \\
- & \beta_t \lambda \cdot \nabla_{\z_{1:N}(t)} \mathcal{J}(\hat{\z}_{1:N}(0)) + \sqrt{\beta_t}\xi\\
\approx & \frac{1}{\sqrt{1-\beta_t}}\z_{1:N}(t) -\beta_t \bepsilon_{\phi} \big(\z_{1:N}(t), \u_0, t\big) \\
- & \beta_t \lambda \cdot \nabla_{\z_{1:N}(t)} \mathcal{J}(\hat{\z}_{1:N}(0)) + \sqrt{\beta_t}\xi,
\label{eq:guided_sampling_ddpm}
\end{aligned}
\end{equation}
where $\hat{\z}_{1:N}(0)\triangleq\mathbb{E}[\z_{1:N}(0)|\z_{1:N}(t)]$ can be computed by the Tweedie’s estimate according to Eq. \ref{eq:tweedie_ddpm} as follows:
\begin{equation}
\hat{\z}_{1:N}(0) = \frac{1}{\sqrt{\bar{\alpha}_t}}\big(\z_{1:N}(t) - (1 - \bar{\alpha}_i)\bepsilon_{\phi} (\z_{1:N}(t), \u_0, t)\big)
\end{equation}

In implementation, to sample $\Tilde{\z}_{1:H}(\frac{1}{H}T)$ as in Line 2 of Algorithm \ref{alg:close_loop}, we run Eq. \ref{eq:guided_sampling_ddpm} iteratively from $t=T$ to $t=T/H$ and use $H$ to replace $N$.

\subsection{DDPM sampling from the asynchronous diffusion model}
\label{app:DDPM_asyn}
Similarly, 
the DDPM version implementation of the sampling process in Eq. \ref{eq:asyn_control_backward_sde} is
\begin{equation}
\begin{aligned}
\Tilde{\z}_{\tau:\tau+H-1}(t-1)
=& \frac{1}{\sqrt{1-\beta_t}}\Tilde{\z}_{\tau:\tau+H-1}(t) + \beta_t\nabla_{\Tilde{\z}_{\tau:\tau+H-1}(t)} \log p_t\big(\Tilde{\z}_{\tau:\tau+H-1}(t) | \u_0\big)  \\
- & \beta_t \lambda \cdot \nabla_{\z_{\tau:\tau+H-1}(t)} \mathcal{J}(\hat{\z}_{\tau:\tau+H-1}(0)) + \sqrt{\beta_t}\xi\\
\approx & \frac{1}{\sqrt{1-\beta_t}}\Tilde{\z}_{\tau:\tau+H-1}(t) -\beta_t \bepsilon_{\theta} \big(\Tilde{\z}_{\tau:\tau+H-1}(t), \u_0, t\big) \\
- & \beta_t \lambda \cdot \nabla_{\z_{\tau:\tau+H-1}(t)} \mathcal{J}(\hat{\z}_{\tau:\tau+H-1}(0)) + \sqrt{\beta_t}\xi,
\label{eq:asyn_guided_sampling_ddpm}
\end{aligned}
\end{equation}
where each component $\hat{\z}_i(0)$ of $\hat{\z}_{\tau:\tau + H - 1}(0)\triangleq \mathbb{E}[\z_{\tau:\tau + H - 1}(0)|\Tilde{\z}_{\tau:\tau + H - 1}(t)]=[\z_\tau(0), \z_{\tau+1}(0), \cdots,\z_{\tau+H-1}(0)]$ can be computed by the Tweedie’s estimate according to Eq. \ref{eq:sde_tweedie} and Eq. \ref{eq:tweedie_ddpm} as follows:
\begin{equation}
\hat{\z}_{\tau+i}(0) = \frac{1}{\sqrt{\bar{\alpha}_{t+i \frac{H}{T}}}}\big(\Tilde{\z}_i(t) - (1 - \bar{\alpha}_{t+i \frac{H}{T}})\bepsilon_i\big)
\label{eq:tweedie_ddpm_asyn}
\end{equation}
for $i=0,\cdots,H-1$. Here $\bepsilon_i$ denotes the $i$-th component of the predicted sequence of noises $\bepsilon_{\theta} (\Tilde{\z}_{\tau:\tau+H-1}(t), \u_{\tau-1}, t)=[\bepsilon_0,\cdots,\bepsilon_{H-1}]$ by the asynchronous diffusion model $\bepsilon_\theta$. The reason why each component of $\hat{\z}_i(0)$ of $\hat{\z}_{\tau:\tau + H - 1}(0)$ is estimated separately is that different components of $\Tilde{\z}_{\tau:\tau + H - 1}(t)$ have different levels of noise, each corresponding to a different scalar coefficient $\bar{\alpha}_{t+i \frac{H}{T}}$.

\section{Derivations}
\label{app:derivation}
\begin{proof}[Proof of Theorem \ref{prop:1}]
We first conduct the following decomposition:
\begin{equation}
\begin{aligned}
&p\big(\z_{1:N}(0),\Tilde{\z}_{1:H}(\frac{1}{H}T),\cdots,\Tilde{\z}_{N+1:N+H}(\frac{1}{H}T)|\u_0\big) \\
=&p\big(\Tilde{\z}_{1:H}(\frac{1}{H}T),\z_{1}(0),\Tilde{\z}_{2:H+1}(\frac{1}{H}T),\cdots,\z_{N}(0),\Tilde{\z}_{N+1:N+H}(\frac{1}{H}T)|\u_0\big) \\
=&p\big(\Tilde{\z}_{1:H}(\frac{1}{H}T)|\u_0\big) \\
&\prod_{\tau=1}^{N} p\big(\z_{\tau}(0), \Tilde{\z}_{\tau + 1:\tau + H}(\frac{1}{H}T)|\u_0, \Tilde{\z}_{1:H}(\frac{1}{H}T), \z_1(0), \Tilde{\z}_{2:H+1}(\frac{1}{H}T), \cdots, \z_{\tau - 1}(0),\Tilde{\z}_{\tau:\tau + H - 1}(\frac{1}{H}T)\big) \\
=&p\big(\Tilde{\z}_{1:H}(\frac{1}{H}T)|\u_0\big) \\
&\prod_{\tau=1}^{N} p\big(\Tilde{\z}_{\tau:\tau + H - 1}(0), \z_{\tau + H}(T)|\u_0, \Tilde{\z}_{1:H}(\frac{1}{H}T), \z_1(0), \Tilde{\z}_{2:H+1}(\frac{1}{H}T), \cdots, \z_{\tau - 1}(0),\Tilde{\z}_{\tau:\tau + H - 1}(\frac{1}{H}T)\big) \\
=&p\big(\Tilde{\z}_{1:H}(\frac{1}{H}T)|\u_0\big) \prod_{\tau=1}^{N} \mathcal{N}(\z_{\tau + H}(T)|\mathbf{0}, \sigma_T^2\mathbf{I}) \\
&\prod_{\tau=1}^{N} p\big(\Tilde{\z}_{\tau:\tau + H - 1}(0)|\u_0, \Tilde{\z}_{1:H}(\frac{1}{H}T), \z_1(0), \Tilde{\z}_{2:H+1}(\frac{1}{H}T), \cdots, \z_{\tau - 1}(0),\Tilde{\z}_{\tau:\tau + H - 1}(\frac{1}{H}T)\big) .
\end{aligned}
\label{eq:prop1_proof_0}
\end{equation}

The third equation holds since $[\z_{\tau}(0), \Tilde{\z}_{\tau + 1:\tau + H}(\frac{1}{H}T)]$ and $[\Tilde{\z}_{\tau:\tau + H - 1}(0), \z_{\tau + H}(T)]$ are two different arrangements of a same vector $[\z_{\tau}(0),\z_{\tau+1}(\frac{1}{H}T),\cdots,\z_{\tau+H-1}(\frac{H-1}{H}T),\z_{\tau+H}(T)]$. 
The last equation holds since $\z_\tau(T)$ is independently normally distributed with density $\mathcal{N}(\z_{\tau}(T)|\mathbf{0}, \sigma_T^2\mathbf{I})$ for any $\tau$. Then, we analyze the conditional probability $p\big(\Tilde{\z}_{\tau:\tau + H - 1}(0)|\u_0, \Tilde{\z}_{1:H}(\frac{1}{H}T),\z_1(0), \Tilde{\z}_{2:H+1}(\frac{1}{H}T), \cdots, \z_{\tau - 1}(0),\Tilde{\z}_{\tau:\tau + H - 1}(\frac{1}{H}T)\big)$ in detail.

Due to Markov property of  physical systems, when $\u_{\tau-1}(0)$ appears, other variables with temporal subscripts less than $\tau$ can not affect those with subscripts greater than or equal to $\tau$.
Therefore, we can remove those variables with subscripts less than $\tau$, except for $\u_{\tau-1}(0)$. Thus, we have:
\begin{equation}
\begin{aligned}
&p\big(\Tilde{\z}_{\tau:\tau + H - 1}(0)|\u_0, \Tilde{\z}_{1:H}(\frac{1}{H}T),\z_1(0), \Tilde{\z}_{2:H+1}(\frac{1}{H}T), \cdots, \z_{\tau - 1}(0),\Tilde{\z}_{\tau:\tau + H - 1}(\frac{1}{H}T)\big) \\
= &p\big(\Tilde{\z}_{\tau:\tau + H - 1}(0)|\u_{\tau - 1}(0), \c,\Tilde{\z}_{\tau:\tau + H - 1}(\frac{1}{H}T)\big), 
\end{aligned}
\label{eq:prop1_proof_1}
\end{equation}
where $\c$ denotes the subset of latent variables $[\z_{\tau}(T), \Tilde{\z}_{\tau:\tau+1}(\frac{H-1}{H}T),\cdots,\Tilde{\z}_{\tau:\tau+H-2}(\frac{2}{H}T)]$ extracted from $[\Tilde{\z}_{1:H}(\frac{1}{H}T), \cdots, \z_{\tau - 1}(0),\Tilde{\z}_{\tau:\tau + H - 1}(\frac{1}{H}T)]$. Further, by adding new variables $\z_{\tau:\tau + H - 1}(0)$ to this probability, we have

\begin{equation}
\begin{aligned}
&p\big(\Tilde{\z}_{\tau:\tau + H - 1}(0)|\u_{\tau - 1}(0), \c,\Tilde{\z}_{\tau:\tau + H - 1}(\frac{1}{H}T)\big) \\
= &\int p\big(\Tilde{\z}_{\tau:\tau + H - 1}(0)|\u_{\tau - 1}(0),\c,\Tilde{\z}_{\tau:\tau + H - 1}(\frac{1}{H}T), \z_{\tau:\tau+H-1}(0)\big) \\
&p\big(\z_{\tau:\tau+H-1}(0)|\u_{\tau - 1}(0),\c,\Tilde{\z}_{\tau:\tau + H - 1}(\frac{1}{H}T)\big) {\rm d} \z_{\tau:\tau + H - 1}(0).
\end{aligned}
\label{eq:prop1_proof_2}
\end{equation}

The first distribution inside the integral in Eq.  \ref{eq:prop1_proof_2} can be simplified as follows:
\begin{equation}
\begin{aligned}
&p\big(\Tilde{\z}_{\tau:\tau + H - 1}(0)|\u_{\tau - 1}(0),\c,\Tilde{\z}_{\tau:\tau + H - 1}(\frac{1}{H}T), \z_{\tau:\tau+H-1}(0)\big) \\
= &\frac{p\big(\Tilde{\z}_{\tau:\tau + H - 1}(0), \c, \Tilde{\z}_{\tau:\tau + H - 1}(\frac{1}{H}T), \z_{\tau:\tau+H-1}(0)|\u_{\tau - 1}(0) \big)}{p\big(\c, \Tilde{\z}_{\tau:\tau + H - 1}(\frac{1}{H}T), \z_{\tau:\tau+H-1}(0)|\u_{\tau - 1}(0) \big)} \\
= &\frac{p\big(\Tilde{\z}_{\tau:\tau + H - 1}(0), \Tilde{\z}_{\tau:\tau + H - 1}(\frac{1}{H}T), \z_{\tau:\tau+H-1}(0)|\u_{\tau - 1}(0) \big) p\big(\c|\Tilde{\z}_{\tau:\tau + H - 1}(\frac{1}{H}T)\big)}{p\big(\Tilde{\z}_{\tau:\tau + H - 1}(\frac{1}{H}T), \z_{\tau:\tau+H-1}(0)|\u_{\tau - 1}(0) \big) p\big(\c|\Tilde{\z}_{\tau:\tau + H - 1}(\frac{1}{H}T)\big)} \\
= &\frac{p\big(\Tilde{\z}_{\tau:\tau + H - 1}(0), \Tilde{\z}_{\tau:\tau + H - 1}(\frac{1}{H}T), \z_{\tau:\tau+H-1}(0)|\u_{\tau - 1}(0) \big)}{p\big(\Tilde{\z}_{\tau:\tau + H - 1}(\frac{1}{H}T), \z_{\tau:\tau+H-1}(0)|\u_{\tau - 1}(0) \big)} \\
= &p\big(\Tilde{\z}_{\tau:\tau + H - 1}(0)|\u_{\tau - 1}(0), \Tilde{\z}_{\tau:\tau + H - 1}(\frac{1}{H}T),\z_{\tau:\tau+H-1}(0)\big).
\end{aligned}
\label{eq:prop1_proof_3}
\end{equation}
In the second equation, we use the Markov property of the diffusion process of $\z_{\tau}(t)$ over $t$, and skip a step including the following two equations 
\begin{equation*}
\begin{aligned}
p\big(\c |\u_{\tau - 1}(0), \z_{\tau:\tau+H-1}(0), \Tilde{\z}_{\tau:\tau + H - 1}(0), \Tilde{\z}_{\tau:\tau + H - 1}(\frac{1}{H}T)\big) &= p\big(\c |\Tilde{\z}_{\tau:\tau + H - 1}(\frac{1}{H}T)\big) \\
p\big(\c |\u_{\tau - 1}(0), \z_{\tau:\tau+H-1}(0), \Tilde{\z}_{\tau:\tau + H - 1}(\frac{1}{H}T)\big) &= p\big(\c |\Tilde{\z}_{\tau:\tau + H - 1}(\frac{1}{H}T)\big)
\end{aligned}
\end{equation*}
in the numerator and denominator respectively of the right side of the second equation. Similarly, we show that we can also simplify the other distribution inside the integral in Eq. \ref{eq:prop1_proof_2}.
\begin{equation}
\begin{aligned}
&p\big(\z_{\tau:\tau+H-1}(0)|\u_{\tau - 1}(0),\c,\Tilde{\z}_{\tau:\tau + H - 1}(\frac{1}{H}T)\big) \\
= &\frac{p\big( \z_{\tau:\tau+H-1}(0), \c,\Tilde{\z}_{\tau:\tau + H - 1}(\frac{1}{H}T) |\u_{\tau - 1}(0) \big)}{p\big( \c,\Tilde{\z}_{\tau:\tau + H - 1}(\frac{1}{H}T)|\u_{\tau - 1}(0)\big)} \\
= &\frac{p\big( \z_{\tau:\tau+H-1}(0)|\u_{\tau - 1}(0) \big) p\big( \c,\Tilde{\z}_{\tau:\tau + H - 1}(\frac{1}{H}T) |\u_{\tau - 1}(0), \z_{\tau:\tau+H-1}(0) \big)}{p\big( \c,\Tilde{\z}_{\tau:\tau + H - 1}(\frac{1}{H}T)|\u_{\tau - 1}(0)\big)} \\
= &\frac{p\big( \z_{\tau:\tau+H-1}(0)|\u_{\tau - 1}(0) \big) p\big(\Tilde{\z}_{\tau:\tau + H - 1}(\frac{1}{H}T) |\u_{\tau - 1}(0), \z_{\tau:\tau+H-1}(0) \big) p\big( \c |\Tilde{\z}_{\tau:\tau + H - 1}(\frac{1}{H}T)\big)}{p\big(\Tilde{\z}_{\tau:\tau + H - 1}(\frac{1}{H}T)|\u_{\tau - 1}(0)\big) p\big( \c|\Tilde{\z}_{\tau:\tau + H - 1}(\frac{1}{H}T)\big)} \\
= &\frac{p\big( \z_{\tau:\tau+H-1}(0)|\u_{\tau - 1}(0) \big) p\big(\Tilde{\z}_{\tau:\tau + H - 1}(\frac{1}{H}T) |\u_{\tau - 1}(0), \z_{\tau:\tau+H-1}(0) \big)}{p\big(\Tilde{\z}_{\tau:\tau + H - 1}(\frac{1}{H}T)|\u_{\tau - 1}(0)\big)} \\
= &p\big(\z_{\tau:\tau+H-1}(0)|\u_{\tau - 1}(0),\Tilde{\z}_{\tau:\tau + H - 1}(\frac{1}{H}T) \big).
\end{aligned}
\label{eq:prop1_proof_4}
\end{equation}

Combining Eq. \ref{eq:prop1_proof_1} to Eq. \ref{eq:prop1_proof_4}, we get the following result:
\begin{equation}
\begin{aligned}
&p\big(\Tilde{\z}_{\tau:\tau + H - 1}(0)|\u_0, \Tilde{\z}_{1:H}(\frac{1}{H}T),\z_1(0), \Tilde{\z}_{2:H+1}(\frac{1}{H}T), \cdots, \z_{\tau - 1}(0),\Tilde{\z}_{\tau:\tau + H - 1}(\frac{1}{H}T)\big) \\
= &\int p\big(\Tilde{\z}_{\tau:\tau + H - 1}(0)|\u_{\tau - 1}(0),\Tilde{\z}_{\tau:\tau + H - 1}(\frac{1}{H}T), \z_{\tau:\tau+H-1}(0)\big) \\
&p\big(\z_{\tau:\tau+H-1}(0)|\u_{\tau - 1}(0),\Tilde{\z}_{\tau:\tau + H - 1}(\frac{1}{H}T) \big) {\rm d} \z_{\tau:\tau + H - 1}(0) \\
= &\int p\big(\Tilde{\z}_{\tau:\tau + H - 1}(0), \z_{\tau:\tau+H-1}(0) |\u_{\tau - 1}(0),\Tilde{\z}_{\tau:\tau + H - 1}(\frac{1}{H}T) \big) {\rm d} \z_{\tau:\tau + H - 1}(0) \\
= &p\big(\Tilde{\z}_{\tau:\tau + H - 1}(0)|\u_{\tau - 1}(0),\Tilde{\z}_{\tau:\tau + H - 1}(\frac{1}{H}T) \big).
\end{aligned}
\end{equation}

Plugging this to Eq. \ref{eq:prop1_proof_0}, we prove the conclusion.
\end{proof}

\vspace{0.2cm}

\begin{proof}[Proof of Proposition \ref{prop:2}]
Adding new variables $\z_{\tau:\tau+H-1}(0)$, we have:
\begin{equation}
\begin{aligned}
&p_t\big(\Tilde{\z}_{\tau:\tau+H-1}(t)|\u_{\tau-1}(0)\big) \\
= &\int p\big(\Tilde{\z}_{\tau:\tau+H-1}(t)|\u_{\tau-1}(0),\z_{\tau:\tau+H-1}(0)\big) p\big(\z_{\tau:\tau+H-1}(0)|\u_{\tau-1}(0)\big) {\rm d} \z_{\tau:\tau+H-1}(0).
\end{aligned}
\end{equation}

Notice that the distributions of $\{\z_{\tau + i}(t+\frac{i}{H}T)\}_{i=0}^{H-1}$ are conditional independent.
Therefore, we can reformulate the above equation to:
\begin{equation}
\begin{aligned}
&p_t\big(\Tilde{\z}_{\tau:\tau+H-1}(t)|\u_{\tau-1}(0)\big) \\
= &\int \prod_{i=0}^{H-1}p\big(\z_{\tau + i}(t+\frac{i}{H}T)|\u_{\tau-1}(0),\z_{\tau:\tau+H-1}(0)\big) p\big(\z_{\tau:\tau+H-1}(0)|\u_{\tau-1}(0)\big) {\rm d} \z_{\tau:\tau+H-1}(0) \\
= &\int \prod_{i=0}^{H-1}p\big(\z_{\tau + i}(t+\frac{i}{H}T)|\z_{\tau + i}(0)\big) p\big(\z_{\tau:\tau+H-1}(0)|\u_{\tau-1}(0)\big) {\rm d} \z_{\tau:\tau+H-1}(0) \\
= &\mathbb{E}_{\z_{\tau:\tau+H-1}(0)}[\prod_{i=0}^{H-1}p\big(\z_{\tau + i}(t+\frac{i}{H}T)|\z_{\tau + i}(0)\big)].
\end{aligned}
\end{equation}
Here, the second equation is because $\z_{\tau + i}(t + \frac{i}{H}T)$ only depends on $\z_{\tau + i}(0)$ with the distribution Eq. \ref{eq:cond_gaussian} described in Section \ref{app:sde_property}.
And we have proved the desired conclusion.
\end{proof}

\section{Close-loop property}
\label{app:closed_loop}
The joint distribution of $\{\w_{\tau}(0), \u_{\text{env}, \tau}\}_{\tau=1}^{N}$ in Algorithm \ref{alg:close_loop} satisfies the following proposition.
\begin{proposition}
\label{prop:3}
Using inference described in Algorithm \ref{alg:close_loop}, the following holds:
\vspace{-2pt}
\begin{equation}
\begin{aligned}
&p\big(\w_1(0), \u_{\text{{\rm env}}, 1}, \cdots, \w_N(0), \u_{\text{{\rm env}}, N} | \u_{\text{{\rm env}}, 0}\big)
\\
\vspace{-4pt}
= \int &p_{\text{{\rm gd}}}\big(\Tilde{\z}_{1:H}(\frac{1}{H}T)|\u_{\text{\rm env}, 0}\big) \\
\vspace{-10pt}
&\prod_{\tau=1}^{N} p_{\text{{\rm gd}}}\big(\Tilde{\z}_{\tau:\tau + H - 1}(0)|\u_{\text{{\rm env}}, \tau - 1},\Tilde{\z}_{\tau:\tau + H - 1}(\frac{1}{H}T)\big) p_{{\rm G}}\big( \u_{\text{{\rm env}}, \tau} | \u_{\text{{\rm env}}, \tau-1}, \w_{\tau}(0) \big) \\
\vspace{-12pt}
&\mathcal{N}(\z_{\tau + H}(T);\mathbf{0}, \sigma_T^2\mathbf{I}) {\rm d} \{\u_0(0), \Tilde{\z}_{1:H}(\frac{1}{H}T),\cdots,\u_{N}(0),\Tilde{\z}_{N+1:N+H}(\frac{1}{H}T)\},
\vspace{-2pt}
\end{aligned}
\label{eq:joint_dist_closed_loop}
\end{equation}
where $p_{\text{{\rm gd}}}$ denotes the transition distribution of the guided sampling (Eq. \ref{eq:asyn_control_backward_sde}), and $p_G$ denotes the transition distribution of the system dynamics $G$.
\end{proposition}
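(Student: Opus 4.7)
The plan is to treat Algorithm \ref{alg:close_loop} directly as an ancestral sampling scheme and apply the chain rule of conditional probability to the joint distribution of all the random variables generated throughout its execution. Specifically, the random variables produced by the algorithm are: the initial asynchronous latent $\Tilde{\z}_{1:H}(\frac{1}{H}T)$ (step 2); and then, for each $\tau = 1, \ldots, N$, the denoised batch $\Tilde{\z}_{\tau:\tau+H-1}(0)$ (steps 4--6), the environment response $\u_{\text{env},\tau}$ (step 8), and a fresh tail noise $\z_{\tau+H}(T)$ (step 9). The quantities of interest $\w_{\tau}(0)$ are deterministic readouts from the corresponding $\z_{\tau}(0)$ component of $\Tilde{\z}_{\tau:\tau+H-1}(0)$.

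First, I would write out the joint density of all these variables in the order they are produced. Step 2 contributes the factor $p_{\text{gd}}\big(\Tilde{\z}_{1:H}(\frac{1}{H}T)|\u_{\text{env},0}\big)$; steps 4--6 contribute $p_{\text{gd}}\big(\Tilde{\z}_{\tau:\tau+H-1}(0)|\u_{\text{env},\tau-1},\Tilde{\z}_{\tau:\tau+H-1}(\frac{1}{H}T)\big)$ by construction of the guided reverse SDE in Eq.~\ref{eq:asyn_control_backward_sde}; step 8 contributes $p_{G}\big(\u_{\text{env},\tau}|\u_{\text{env},\tau-1},\w_{\tau}(0)\big)$ because $\u_{\text{env},\tau}$ depends only on the previous environment state and the injected control through the Markov kernel $G$; and step 9 contributes $\mathcal{N}(\z_{\tau+H}(T);\mathbf{0},\sigma_T^2\mathbf{I})$, independent of everything before by construction. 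I would then observe that $\Tilde{\z}_{\tau+1:\tau+H}(\frac{1}{H}T)$ in step 10 is a deterministic concatenation of the last $H-1$ coordinates of $\Tilde{\z}_{\tau:\tau+H-1}(0)$ with the freshly drawn $\z_{\tau+H}(T)$, so it introduces no new factor and makes the conditioning variable for the next iteration's $p_{\text{gd}}$ consistent with what has already been sampled.

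Next, I would integrate out all latent variables other than $\{\w_{\tau}(0),\u_{\text{env},\tau}\}_{\tau=1}^{N}$, namely the full $\Tilde{\z}$ sequences and any residual $\z$ components. The resulting marginal density is exactly the integral in Eq.~\ref{eq:joint_dist_closed_loop}. Structurally this argument parallels the proof of Theorem~\ref{prop:1}, with $\u_{\tau-1}(0)$ replaced by $\u_{\text{env},\tau-1}$, the unguided transition replaced by the guided transition $p_{\text{gd}}$, and the implicit "internal" dynamics replaced by the explicit environment kernel $p_G$. The Gaussian tail factors $\mathcal{N}(\z_{\tau+H}(T);\mathbf{0},\sigma_T^2\mathbf{I})$ appear exactly as in Theorem~\ref{prop:1}, since the noise-injection mechanism is unchanged.

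The main obstacle will be the noise-level bookkeeping across iterations: one must verify carefully that the last $H-1$ coordinates of $\Tilde{\z}_{\tau:\tau+H-1}(0)$, when reused as the first $H-1$ coordinates of $\Tilde{\z}_{\tau+1:\tau+H}(\frac{1}{H}T)$, carry precisely the noise levels required so that the next iteration's $p_{\text{gd}}$ factor is well-defined with respect to the asynchronous forward SDE in Eq.~\ref{eq:asyn_forward_sde}. This is really a consequence of the fact that incrementing the physical index by one corresponds to offsetting the SDE time by $\frac{1}{H}T$, but it must be stated precisely to justify that the product-of-conditionals decomposition is valid and that no hidden change of measure is being invoked. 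Once this is pinned down, the proposition follows by reading off the marginal density of the non-integrated variables.
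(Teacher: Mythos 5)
Your proposal is correct and follows essentially the same route as the paper: the paper's own proof is a one-line remark that the result follows by plugging the environment kernel $p_{G}$ into the decomposition of Theorem~\ref{prop:1}, which is precisely the factorization-by-sampling-order argument you spell out in detail. Your additional care about the deterministic concatenation step and the noise-level bookkeeping is a faithful elaboration of what the paper leaves implicit rather than a different approach.
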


According to Eq. \ref{eq:joint_dist_closed_loop}, for every $\tau$, the control signal $\w_{\tau}(0)$ is conditional on the states $\u_{\text{env}, 0:\tau -1}$ instead of the predicted $\u_{0:\tau -1}(0)$. Therefore, our method achieves closed-loop control.

\begin{proof}[Proof of Proposition \ref{prop:3}]
This is a direct conclusion from the process of Algorithm \ref{alg:close_loop}.
Among variables in the history, only $\u_{\text{{\rm env}}, \tau - H:\tau - 1}, \{\z_{\tau-H+1}(\frac{i}{H}T)\}_{i=0}^1, \cdots, \{\z_{\tau-1}(\frac{i}{H}T)\}_{i=0}^{H-1}$ involve the generation of $\w_{\tau}(0)$.
So, we get the conclusion.
\end{proof}

\section{Details of 1D Burgers' equation control}\label{app:1d_implement}

\subsection{Dataset}\label{app:1d_dataset}
We follow instructions in \cite{wei2024generative} to generate a 1D Burgers' equation dataset. Specifically, we use the finite difference method (FDM) to generate trajectories in a domain of space range $x\in[0,1]$ and time range $\tau\in[0,1]$, with random initial states and control sequences following certain distributions. The space is discretized into 128 cells and time into 10000 steps. We generated 90000 trajectories for the training set and 50 for the testing set. 
For each training sample, its target state $u_d(\tau,x)$ is randomly selected from other training samples, which means that almost all the samples in the training set are unsuccessful as they could hardly achieve the target states under random controls. Thus it is challenging to generate control sequences with performance beyond the training dataset. 

\subsection{Experimental Settings}
\label{app:1d_experiment_four_settings}

Similar to \citep{wei2024generative}, we design different settings of 1D Burgers' equation control in Section \ref{sec:1d_exp}:

\textbf{Noise-free:} This is a scenario where all states $u(\tau,x), x\in[0,1]$ for $\tau\in[0,1]$ of the system can be observed. The system, control, and measurement do not have the noise.

\textbf{Physical constraint:} In real-world scenarios, the actuator often has the upper and lower limits due to the physical constraints. We limit the control with bounds $-2$ to 2 in this setting, while unconstrained control ranges from $-5$ to 5.

\textbf{System noise:} For complex engineering systems, the practical physical systems often have noise issues in the system (plant) and actuator (control). We consider such real-world scenarios and perturb the system by Gaussian noise with standard deviation $\sigma=0.025$ following the work \citep{yildiz2021continuous}. It can also simulate the control noise setting as the noise added to the external force can be decomposed and considered as the system noise for the 1D Burgers' equation in Eq. \ref{eq:burgers}. Specifically, for a deterministic system: Burgers' equation, $G_0: \frac{\partial u}{\partial \tau} = -u\cdot \frac{\partial u}{\partial x} + \nu \frac{\partial^2u}{\partial x^2} + w(\tau,x)$, where $u$ denotes the state variable and $w$ represents the control. To account for system noise, we augment the dynamics with additive Gaussian noise: $G_1: \frac{\partial u}{\partial \tau} = -u\cdot \frac{\partial u}{\partial x} + \nu \frac{\partial^2u}{\partial x^2} + w(\tau,x) +\xi_\tau,  \xi_\tau\sim\mathcal{N}(0,\sigma)$. During numerical simulation, the perturbed system $(G_1)$ supersedes the nominal system $(G_0)$ when implementing the control input $w$. Consequently, the feedback state $u$ is obtained as the solution to the stochastic PDE $(G_1)$.

\textbf{Measurement noise:} The measurement noise is also a common phenomenon in practical physical systems. We consider the measurement with Gaussian noise. The feedback state is characterized by the superposition of the nominal solution $u$ to system $(G_0)$ and a Gaussian noise term $\xi_\tau\sim\mathcal{N}(0,\sigma)$ ($\sigma=0.025$ in the experiments), thereby incorporating measurement noise into the feedback loop while preserving the dynamics of the nominal system.

\textbf{Partial observations and partial control:} In general, the sensors for observation and actuators for control are located in a small part of the spatial domain, unlike the full observation and full control (FOFC) noise-free setting. In this setting, we consider two observation cases: full observation and 16 sensors observation (1/8 of the spatial domain). Both cases are controlled by 16 evenly placed actuators. Although only partial observations are available, the reported results are calculated for an entire spatial domain, including unobservable parts. (As a common single-input single-output (SISO) controller, the PID controller is difficult to directly apply to multiple-input multiple-output (MIMO) systems. Its application to the MIMO system often requires additional decoupling and target planning modules, and the control effect of PID is greatly influenced by decouplers and planners, making it difficult to compare fairly. Therefore, we only apply the PID in 16 sensors for observation and 16 actuators for control.) The placement of sensors and actuators is illustrated in Figure \ref{fig:1D_sensor_act}.

\begin{figure*}[t]
\begin{center}
\includegraphics[width=1.0\textwidth]{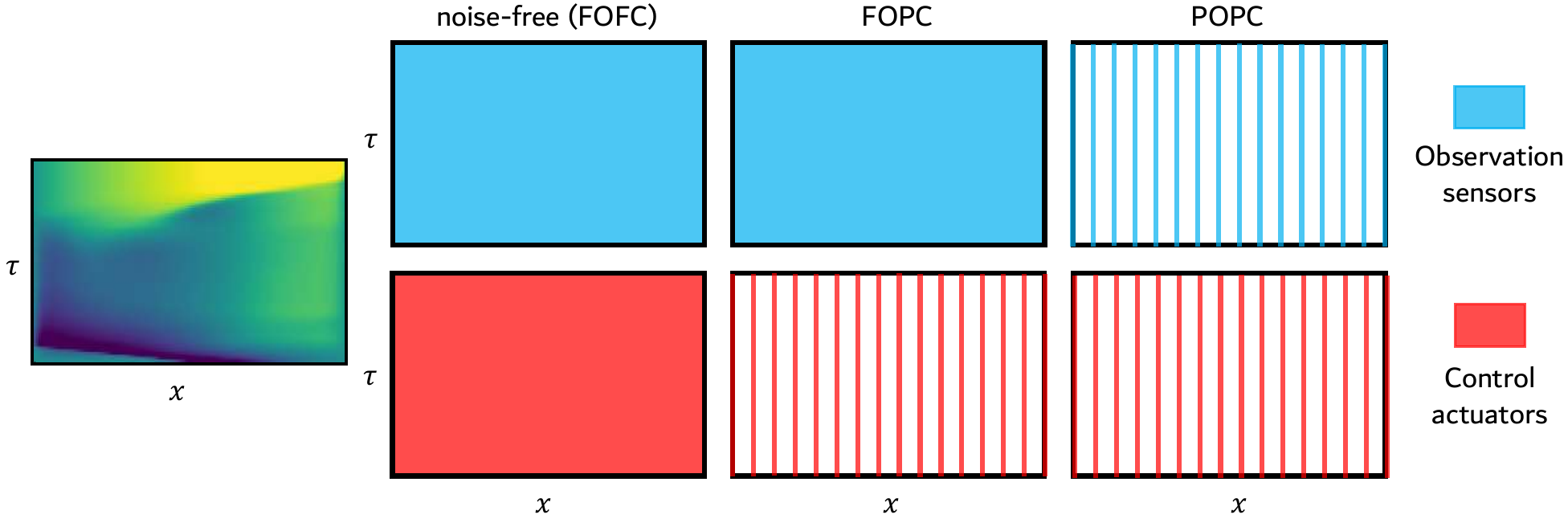}
\end{center}
\caption{\textbf{Illustration of partial observation and control.} Blue represents the position observed by sensors, and red represents the position that actuators are able to control. The region filled entirely with blue and red indicates that all spatial regions can be observed and controlled, respectively. The vertical line indicates where there are sensors and actuators in that spatial position.
}
\label{fig:1D_sensor_act}
\end{figure*}

\textbf{Half domain:} In this setting, we hide some parts of $\u$ and measure the $\controlobj1d$ of model control as the white area with oblique lines shown in Figure \ref{fig:1d_vis_po_1}. 
Specifically, $\u(\tau, x),~x\in[\frac{1}{4},\frac{3}{4}]$ is set to zero in the dataset during training and $\u_0(x),~x\in[\frac{1}{4},\frac{3}{4}]$ is also set to zero during testing. Only $\Omega=[0,\frac{1}{4}]\cup[\frac{3}{4},1]$ is observed, controlled and evaluated.

\begin{table}[ht]
\small
  \begin{center}
    \caption{\textbf{Hyperparameters of models and training for 1D Burgers' equation control.}}
\label{tab:diffusion_1d_single_model_hyperparameters}
    \begin{tabular}{l|l}
    \hline
      \multirow{1}{*}{\text {Hyperparameter name}} &
      Value \\ 
      \hline
      \multicolumn{2}{c}{U-Net $\bepsilon_{\phi}(\w)$}\\
      \hline
      Model horizon $H$ & 16 \\
      Initial dimension & 64 \\
      Downsampling/Upsampling layers & 4 \\
      Convolution kernel size & 3 \\
      Dimension multiplier & $[1,2,4,8]$ \\
      Attention hidden dimension & 32 \\
      Attention heads & 4 \\
      \hline
      \multicolumn{2}{c}{U-Net $\bepsilon_{\theta}(\u,\w)$}\\
      \hline
      Model horizon $H$ & 16 \\
      Initial dimension & 64 \\
      Downsampling/Upsampling layers & 4 \\
      Convolution kernel size & 3 \\
      Dimension multiplier & $[1,2,4,8]$ \\
      Attention hidden dimension & 32 \\
      Attention heads & 4 \\
      \hline
      \multicolumn{2}{c}{Training}\\
      \hline
      Training batch size & 16 \\
      Optimizer & Adam \\
      Learning rate & 1e-4 \\
      Training steps & 190000 \\
      Learning rate scheduler & cosine annealing \\
      \hline
      \multicolumn{2}{c}{Inference}\\
      \hline
      Synchronously sampling steps & 900 \\ 
      Each asynchronously sampling step & 60 \\ 
      \hline
   \end{tabular}
  \end{center}
\normalsize
\end{table}

\subsection{Implementation} \label{app:1d_training and eval}
We use U-Net \citep{ronneberger2015u} as architectures for the models $\bepsilon_{\phi}$ and $\bepsilon_{\theta}$.
Two models are separately trained using the same training dataset. Note that in the partial observation settings, the unobserved data is invisible to the model during both training and testing as introduced in Appendix \ref{app:1d_experiment_four_settings}. We simply pad zero in the corresponding locations of the model input and conditions, and also exclude these locations in the training loss. Therefore, the model only learns the correlation between the observed states and control sequences. We use the MSE loss to train the models. Both models have $T=900$ diffusion steps. The DDIM \citep{song2020denoising} sampling we use only has 30 diffusion steps with the hyperparameter $\eta=1$. Hyperparameters of models and training are listed in Table \ref{tab:diffusion_1d_single_model_hyperparameters}.

\subsection{Results of half domain}

In this subsection, we consider another noise-free setting that only observe, control, and evaluate on half of the spatial domain. From the results in Table \ref{tab:app_1d_popc}, our conclusion is consistent with the noise-free scenario, and \proj shows significant improvements in both performance and efficiency.

\begin{table}[t]
\small
\setlength{\tabcolsep}{5pt} 
\renewcommand{\arraystretch}{1.0}
\centering
\caption{The average control objective $\controlobj1d$ of 1D Burgers' equation control in the half domain, and inference time on a single NVIDIA A100 80GB GPU with 16 CPU cores are reported. Bold font is the best model and the runner-up is underlined.}
\begin{small}
\begin{tabular}{m{4.0cm}|m{2.7cm}||m{2.7cm}}
    \hline
    \toprule
    & \centering {$\J$} $\downarrow$ & \centering {time (s)} $\downarrow$ \tabularnewline
    \midrule
    {BC} & \centering {0.2558} & \centering {\textbf{0.7150}}\tabularnewline
    {BPPO} & \centering {0.2033} & \centering {0.7342} \tabularnewline
    {PID} & \centering {0.2212} & \centering {\underline{0.7236}} \tabularnewline
    {\projold-1} & \centering {0.0196} & \centering {39.9720} \tabularnewline
    {\projold-5} & \centering {0.0184} & \centering {8.4481} \tabularnewline
    {\projold-15} & \centering {0.0192} & \centering {3.8935} \tabularnewline
    {RDM} & \centering {0.0196} & \centering {9.8153} \tabularnewline
    \midrule
    {\proj (ours)} & \centering {\textbf{0.0090}} & \centering {9.7516} \tabularnewline
    {\proj (DDIM, ours)} & \centering {\underline{0.0104}} & \centering {0.7628} \tabularnewline
    \bottomrule
\end{tabular}
\label{tab:app_1d_popc}
\end{small}
\normalsize
\end{table}

\begin{figure}[!h]
    \centering
    \includegraphics[width=1.0\textwidth]{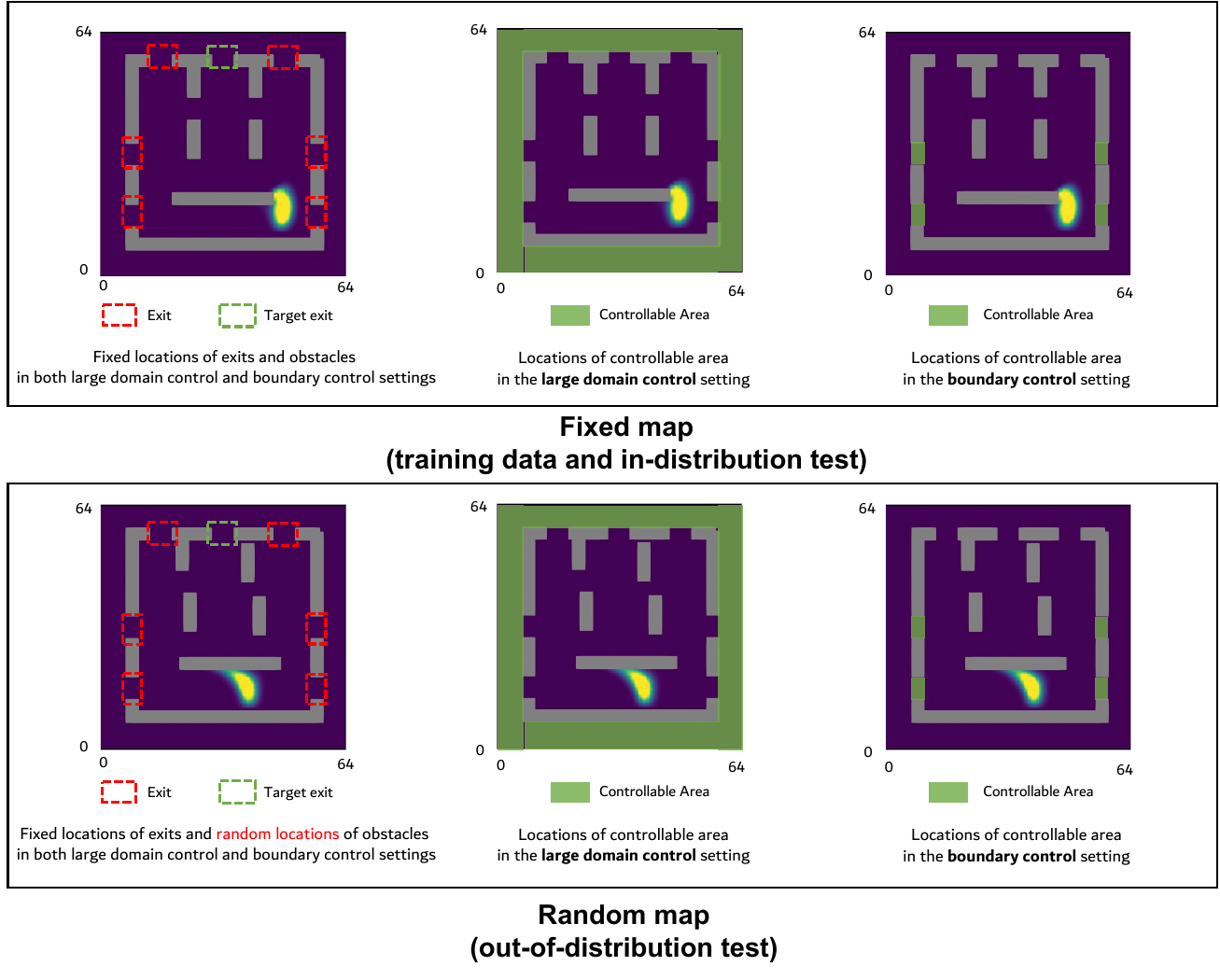}
    \caption{
Illustration of the 2D incompressible fluid control task settings. In the \textbf{large domain control} setting (middle), control signals are applied to the peripheral green regions surrounding the obstacles. In the \textbf{boundary control} setting (right), control signals are limited to the green cells within the four exits. Two evaluation modes are used: \textbf{fixed map} (top) for in-distribution testing and \textbf{random map} (bottom) for out-of-distribution testing.
}
    \label{fig:2d_settings}
\end{figure}

\begin{table}[ht]
  \begin{center}
    \caption{\textbf{Hyperparameters of 2D experiments}.}
     \label{tab:diffusion_2d_single_model_hyperparameters}
    \begin{tabular}{l|l} 
    \multicolumn{2}{l}{}\\
    \hline
      \text {Hyperparameter Name} & {Value}\\
      \hline
      Number of attention heads & 4 \\
      Kernel size of conv3d & (3, 3, 3)   \\
      Padding of conv3d & (1,1,1)  \\
      Stride of conv3d & (1,1,1)  \\
      Kernel size of downsampling & (1, 4, 4)   \\
      Padding of downsampling & (1, 2, 2)  \\
      Stride of downsampling &  (0, 1, 1)  \\
      Kernel size of upsampling & (1, 4, 4)   \\
      Padding of upsampling & (1, 2, 2)  \\
      Stride of upsampling &  (0, 1, 1)  \\
      \hline
   \end{tabular}
  \end{center}
\end{table}

\section{Details of  2D incompressible fluid control}\label{app:2d_implement}

\subsection{Experimental setting}
Dynamics of 2D incompressible fluid follows the Navier-Stokes equations:
\begin{eqnarray}
\begin{cases}
&\frac{\partial \mathbf{v}}{\partial \tau} + \mathbf{v} \cdot \nabla \mathbf{v} - \nu \nabla^2 \mathbf{v} + \nabla p = f, \\
&\nabla \cdot \mathbf{v} = 0, \\
&\mathbf{v}(0, \mathbf{x}) = \mathbf{v}_0(\mathbf{x}),
\end{cases}
\end{eqnarray}
where $f$ denotes the external force, $p$ denotes pressure, $\nu$ denotes the viscosity coefficient and $\mathbf{v}$ denotes velocity. $\mathbf{v}_0(\mathbf{x})$ is the initial condition. We follow the setup of the 2D incompressible fluid control task as described in \citep{wei2024generative}, using the Phiflow solver \cite{holl2020learning} to simulate fluid dynamics. The resolution of the 2D flow field is set to
64 $\times$ 64, and the flow field is unbounded. We consider two settings: \textbf{large domain control} and \textbf{boundary control}.
In the large domain control setting, control signals are applied to all peripheral regions outside the obstacles, consisting of 1,792 cells, as highlighted in green in the middle subfigures of Figure \ref{fig:2d_settings}. This setting is consistent with \citep{wei2024generative}. In the boundary control setting, control signals are restricted to only the
$4\times 8$ cells inside the four exits, as shown in green in the right subfigures of Figure \ref{fig:2d_settings}. This setting is newly designed in this paper. Both configurations represent indirect control, as the smoke primarily moves within the gray obstacles. The boundary control setting is more challenging due to the significantly reduced number of controllable cells.
We also follow \citep{wei2024generative} to generate the training dataset. For both settings, the obstacle locations (gray cells) are shown in the top row of Figure \ref{fig:2d_settings} and are consistent across all training trajectories. 
Each trajectory contains $N=64$ physical time steps and includes features such as horizontal and vertical velocities, smoke density, and horizontal and vertical control forces. We generated 40,000 training trajectories for the large domain control setting, and 30,000 for the boundary control setting.

During inference, following RDM \citep{zhou2024adaptive}, we add a small level of control noise to make the task more challenging. Specifically, for a trajectory of length $N=64$, the control signal in each physical time step is executed with probability $p=0.1$ as a random control, where the horizontal and vertical components following the uniform distributions bounded by the minimal and maximal values of horizontal and vertical control signals in the training dataset, respectively. On each setting of large domain control and boundary control, we design two evaluation modes, 
fixed map (FM) and random map (RM), to test the generalization capability of each method:

\textbf{Fixed map (FM):} In this mode, all 50 test samples use the same obstacle configuration with training trajectories. However, the initial locations of test samples are different. 
In this mode, the testing and training initial conditions follow the same distribution (in-distribution test).

\textbf{Random map (RM):} In this mode, the 50 test samples use random obstacles’ configuration in the fluid field. Specifically, the movement ranges for the five internal obstacles, whose default locations are shown in the top two rows in Figure \ref{fig:2d_settings}, are as follows:
\begin{itemize}
    \item The two upper obstacles can move downward, left, or right by no more than 3 grid spaces.
    \item The two middle obstacles can move upward, downward, left, or right by no more than 3 grid spaces.
    \item The one lower obstacle can move upward, left, or right by no more than 3 grid spaces.
\end{itemize}
In this mode, the testing and training initial conditions follow different distributions (out-of-distribution test).
Note that although the obstacles’ configuration varies across different test samples, for each test sample, its obstacles’ configuration remains unchanged during the control process.

\subsection{Implementation}
We use 3D U-net \cite{ho2022video} as architectures for the models $\bepsilon_{\phi}$ and $\bepsilon_{\theta}$. Two models are separately trained using the same training dataset. We use the MSE loss to train the models. Both models have $T=600$ diffusion steps. The DDIM \citep{song2020denoising} sampling we use has 75 diffusion steps with the hyperparameter $\eta=0.3$ for the large domain control setting and 120 diffusion steps with the hyperparameter $\eta=0.3$ for the boundary control setting. 
Hyperparameters of models and training are listed in Table \ref{tab:diffusion_2d_single_model_hyperparameters}.

\subsection{Effect of Model Horizon}\label{app:2d_horizon}

The choice of the model horizon $H$ is determined by balancing efficiency and effectiveness. We conducted experiments on the 2D incompressible fluid control task under the fixed map (FM) setting with $H=6$ and $H=10$. The results are shown in Table \ref{tab:2d_horizon}, where we also copy the results of $H=15$ from \ref{tab:2d_main_table} together for comparison.

The results indicate that both our \proj and DiffPhyCon-1 yield similar performance when $H=10$, compared to $H=15$ as reported in Table \ref{tab:2d_main_table}. However, performance significantly deteriorates when $H$ decreases to 6, which is likely due to the shorter observation window leading to inaccurate future control objective $\mathcal{J}$ estimation and suboptimal guidance sampling (Line 5 in Algorithm \ref{alg:close_loop}). On the other hand, increasing $H$ beyond 15 significantly raises GPU memory costs during inference, thus not recommended. Across different values of $H$ (i.e., $H$=6, 10, and 15), our method consistently outperforms DiffPhyCon-1. Therefore, for this task, a horizon between 10 and 15 is appropriate. 
In practical applications, the optimal model horizon can be determined through multiple trials to balance performance and efficiency, similar to the approach used in diffusion policies \citep{chi2023diffusion} (see Figure 5 (left) in the referenced paper).

\begin{table}[t]
\small
\setlength{\tabcolsep}{10pt}
\centering
\centering
\caption{
Effect of the model horizon $H$ on 2D incompressible fluid control under the large domain control setting and fixed map (FM) evaluation mode. The average control objective $\controlobj1d$ and inference time on a single NVIDIA A6000 48GB GPU with 16 CPU cores are reported. Bold font is the best model and the runner-up is underlined. The results of $H=15$ are copied from Table \ref{tab:2d_main_table} for comparison.
} 
\resizebox{\textwidth}{!}{
{
\begin{tabular}{l|cc|cc|cc}
    \hline
    \toprule
    & \multicolumn{2}{c|}{$H=6$} & \multicolumn{2}{c|}{$H=10$} & \multicolumn{2}{c}{$H=15$}\\
    & $\J$ $\downarrow$ & time (s) $\downarrow$ & $\J$ $\downarrow$ & time (s) $\downarrow$ & $\J$ $\downarrow$ & time (s) $\downarrow$\\
    \midrule
    \projold-1 & \underline{0.8986} & \underline{1022} & \underline{0.5140} & \underline{1216} & \underline{0.5454} & \underline{1677} \\
    
    \textbf{\proj (ours)} & \textbf{0.6012} & \textbf{143.32} & \textbf{0.3367} & \textbf{136.81} & \underline{0.3371} & \textbf{140.83} \\
    \bottomrule
\end{tabular}}
\label{tab:2d_horizon}
}
\end{table}

\section{Double Pendulum Control Experiment}
To investigate the performance of our \proj in lower-dimensional control problems, we performed experiments on controlling the system of a cart-inverted double pendulum system. The goal is to keep the double pendulum from falling down by exerting a force on the cart.

\begin{figure}
    \centering
    \includegraphics[width=0.4\textwidth]{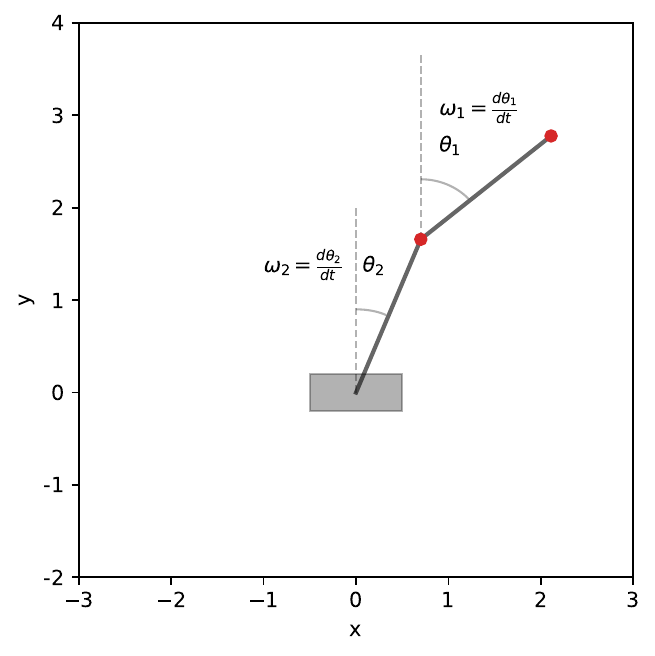}
    \caption{Illustration of the cart-double pendulum system. The angles are measured with respect to the vertical direction, and the angular velocities are defined as the time derivative of the angles.}
    \label{fig:double_pendulum_illustration}
\end{figure}

\subsection{Simulation Environment Setup}
The system consists of two point masses fixed on the end of two massless rigid rods, which are mutually connected and connected to a cart with frictionless hinges, as shown in Figure \ref{fig:double_pendulum_illustration}. The cart-double pendulum system is simulated with the 4th order Runge-Kutta method \citep{Ixaru2004}. In our experiment, the simulation time step interval is set to 5e-3 to ensure the accuracy of the simulation.

The physical quantities (unitless) are as follows: both rods have a length of 0.2, both point masses have a mass of 0.1, the mass of the cart is 1, and the gravitational acceleration is 9.8. The control force is restricted in [-10, 10].
Although the control signal and system states are relatively low-dimensional, we note the control task is non-trivial. Since no torque can be exerted on the hinges, the system can exhibit limited controllability. Besides, the rods deviate from the equilibrium point, and the non-linear or even chaotic dynamics impose significant challenges to its control. This system has been studied in the control research community \citep{Hesse_Timmermann_Hüllermeier_Trächtler_2018,Yamakita_Nonaka_Furuta_1993}.

We would emphasize that this experiment provides a more easy-to-run verification of our method, as opposed to our 1D Burgers' equation experiment and 2D Navier-Stokes equation control experiments. Our cart-double pendulum environment can run at a refreshing rate of over 300 Hz on an Intel Xeon Platinum 8358 CPU, which allows real-time simulation (200 Hz refreshing rate in our setting) without high-end GPUs need.

\subsection{Training and Evaluation Details}

In this experiment, we focus on an offline learning setting, where a training dataset containing expert action-state sequences is provided. The training dataset consists of 1e5 trajectories generated by an expert policy pretrained with PPO. Our \proj and baseline methods are trained with shared hyperparameters. In all diffusion models, VP-SDE training and ODE sampling are adopted \citep{song2021scorebased}. All methods are trained for 2.9e5 steps, and 16-step Midpoint ODE solver \citep{karras2022elucidating} is used. The backbone is kept to an identically configured Transformer network. Other hyperparameters are also kept the same.

Our control task is defined on finite-length episodes, with a total of 1 second (200 steps) environment time. We use the metric \textit{success rate} to reflect control performance, which is defined as the ratio of not-falling pendulums when the episode terminates. The not-falling criterion is defined as the upper point height larger than 0.9 maximum height. An ensemble of 100 cart-double pendulum systems is independently randomly initialized and controlled to fairly evaluate the performance of each task. The average and standard deviation are reported for five differently seeded runs to demonstrate the uncertainty of the evaluation. 

To mimic realistic systems where the system state may be randomly perturbated, we introduce stochasticity into our cart-double pendulum system. At each step, the environment will perturb the system state (including the angle and angular velocities of two rods, the position, and the velocity of the cart) with probability. Once added, the perturbation will be sampled from a uniform distribution in the range $[-0.02, 0.02]$. To make a comprehensive comparison, we report the results with three different rates: 0, 0.1, and 0.3.

\begin{figure}[h]
    \centering
    \includegraphics[width=1.\textwidth]{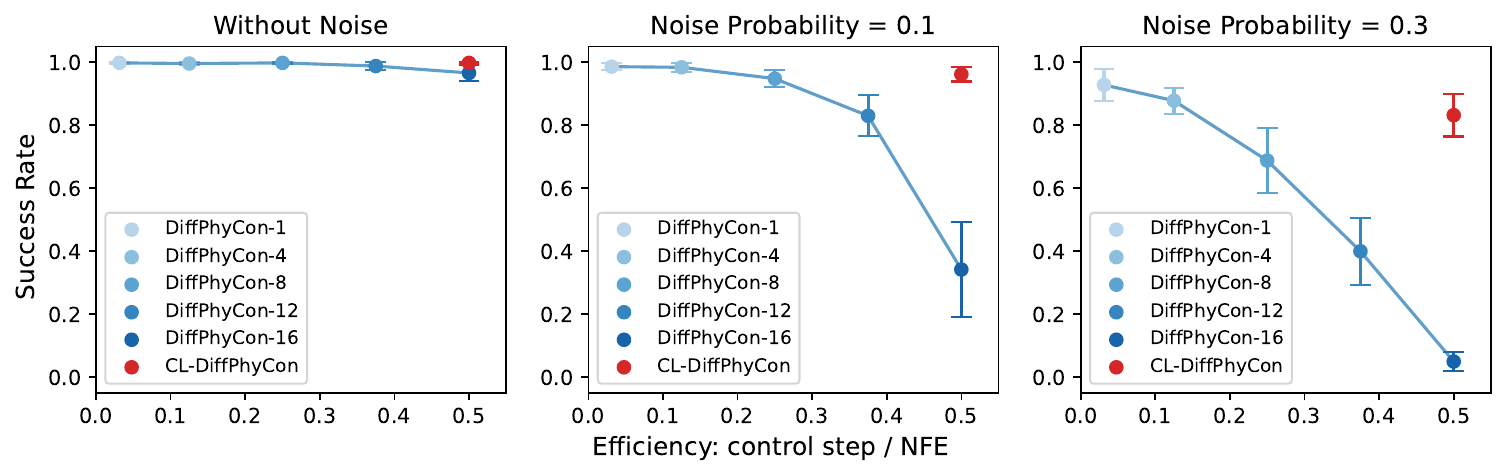}
    \caption{Pareto plot comparing our \proj and baselines. The error bar shows $\pm$ 1 standard deviation across 5 seeded runs.}
    \label{fig:double_pendulum_pareto}
\end{figure}

\subsection{Results}
As shown in Figure \ref{fig:double_pendulum_pareto}, \proj achieves a remarkable balance between the control algorithm runtime and the control performance. Compared to the baselines of DiffPhyCon-15, DiffPhyCon-12, and DiffPhyCon-8, our \proj provides much higher control performance while only requiring similar neural function evaluations (NFE, as in \citet{song2021scorebased,karras2022elucidating}) per environment step. Although DiffPhyCon-4 and DiffPhyCon-1 show marginal performance improvement over our \proj, they are nearly one order of magnitude slower.

For a more intuitive demonstration of the performance comparison, we include a comprehensive visualization where our method is compared with two selected baselines under three different random perturbation settings, as shown in Figure \ref{fig:double_pendulum_visualize}. Compared to DiffPhyCon-15, Our \proj is able to stabilize the inverted double pendulum for a longer time (subfigure (a)), and the final upper mass state $(\theta_2,\omega_2)$ is closer to the equilibrium point $(0, 0)$ (subfigure (b)). Additionally, although the performance of our \proj is similar to that of DiffPhyCon-1, \proj costs about only $1/15$ computation time of it.

\begin{figure}[]
    \centering
    \begin{minipage}{0.9\textwidth}
        \centering
        \includegraphics[width=0.95\textwidth]{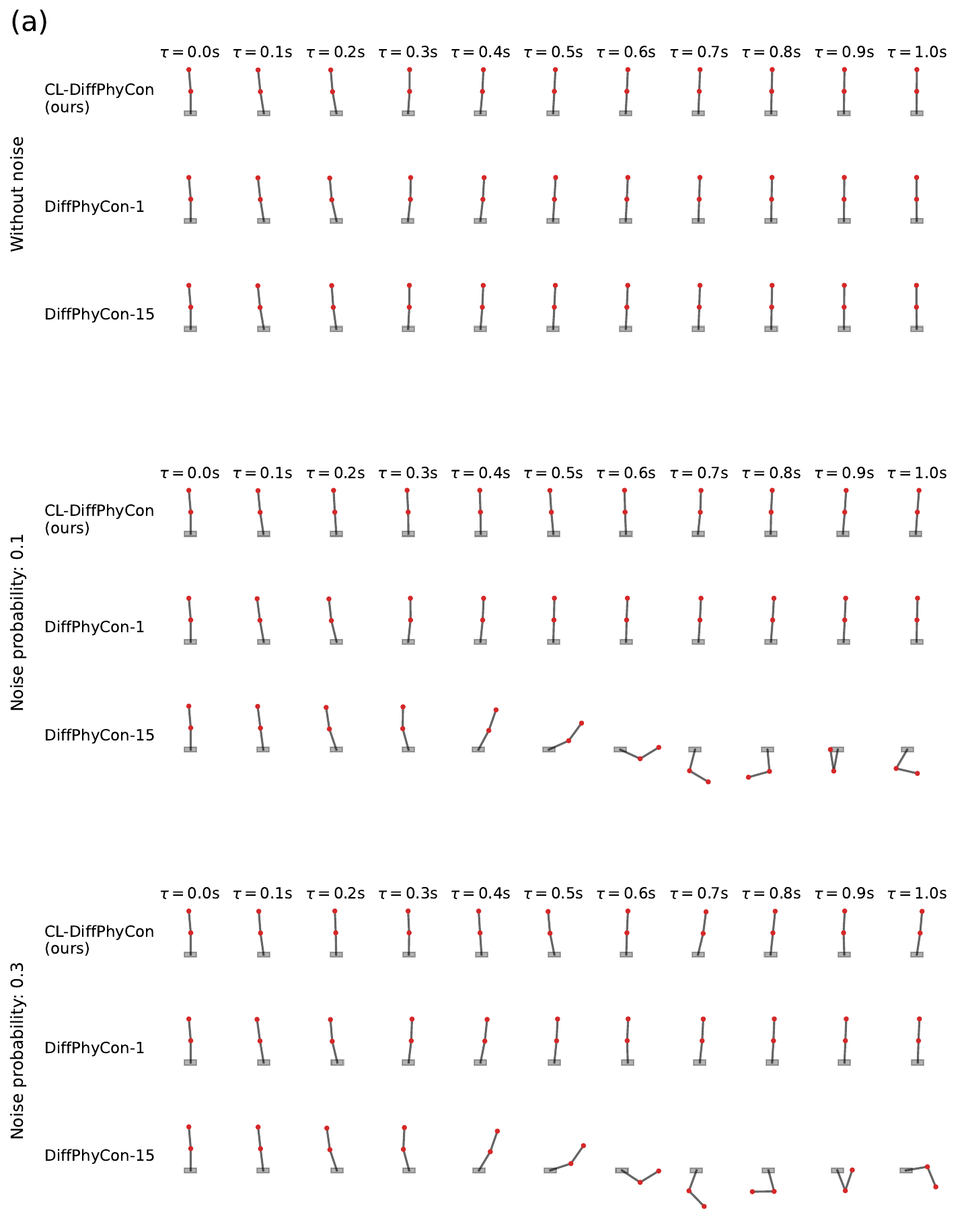}
    \end{minipage}\\
    \vspace{5pt}
    \begin{minipage}{1.\textwidth}
        \centering
        \includegraphics[width=0.9\textwidth]{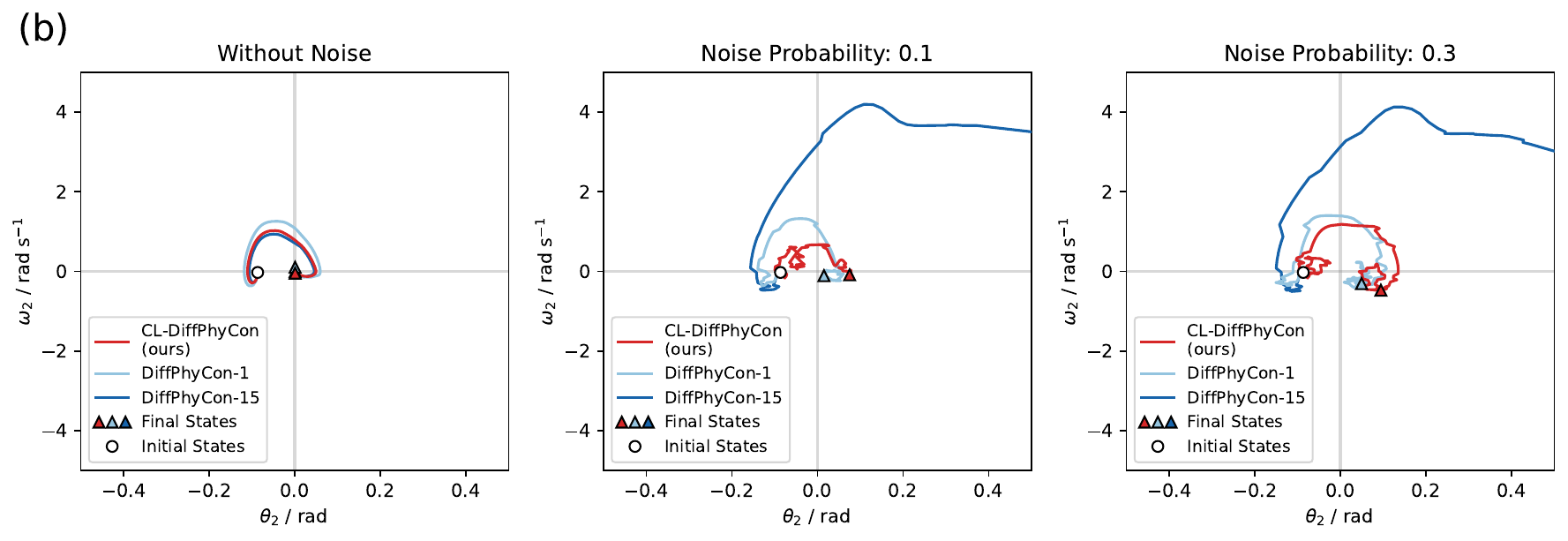}
    \end{minipage}
    \caption{
    \textbf{Visualization of and instance of the cart-double pendulum under control}. The goal is to keep the double pendulum from falling down by exerting a force on the cart over 1 second (200 control steps) sequentially. Three control methods are applied under three different perturbation rates. In each setting, 10 snapshots of the pendulum are shown in subfigure (a), whereas the corresponding trajectories in the phase space are shown in subfigure (b).
    }
    \label{fig:double_pendulum_visualize}
\end{figure}

\section{1D visualization results}\label{app:1d_vis}
We present the visualization results of our method and baselines under noise-free and half domain settings in Figure \ref{fig:1d_vis_fo_1}, \ref{fig:1d_vis_fo_2}, \ref{fig:1d_vis_po_1}, and \ref{fig:1d_vis_po_2}, respectively. Under each setting, we present the results of four randomly selected samples from the test set. 

\begin{figure}[h]
    \centering
    \begin{minipage}{0.49\textwidth}
        \centering
        \includegraphics[width=\textwidth]{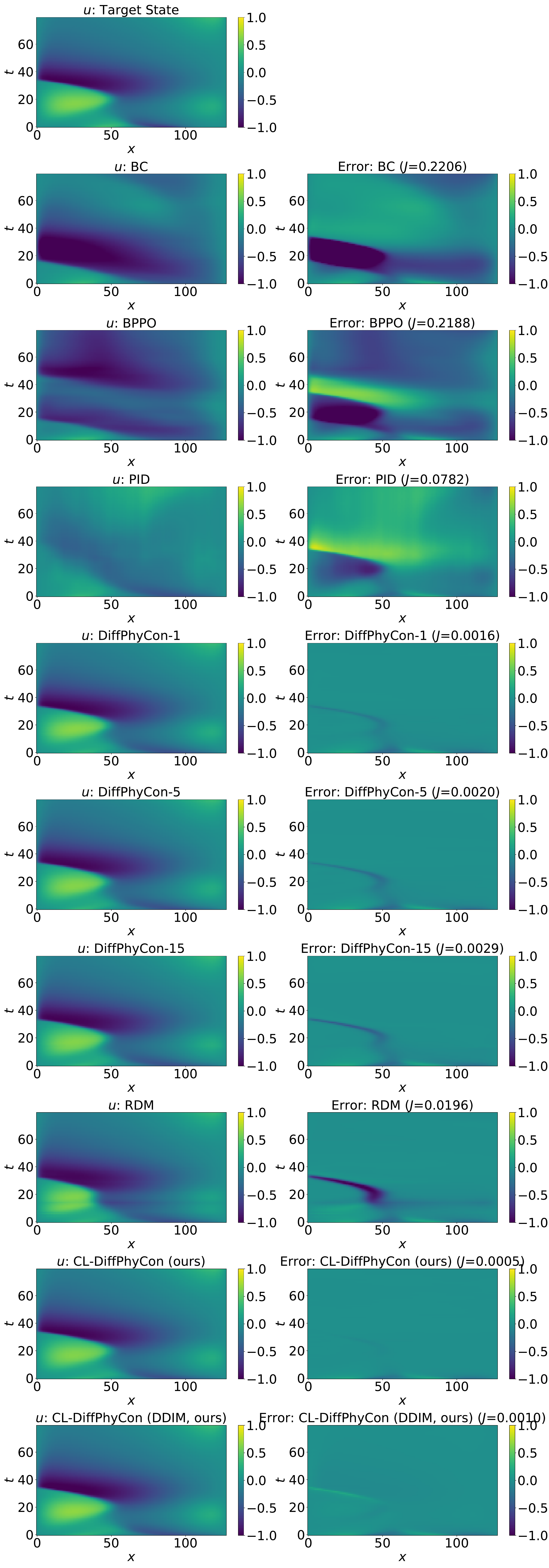}
    \end{minipage}
    \hfill
    \begin{minipage}{0.49\textwidth}
        \centering
        \includegraphics[width=\textwidth]{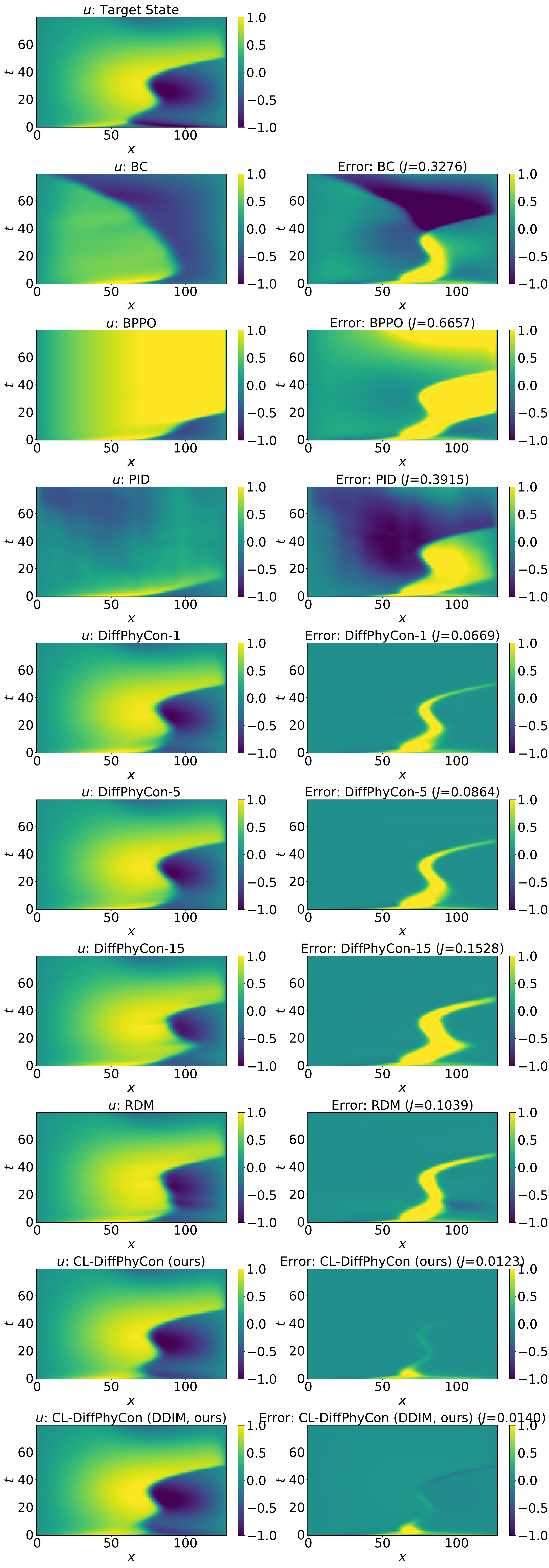}
    \end{minipage}
\caption{\textbf{Two visualizations results of 1D Burgers' equation control under the noise-free setting}. The first line is the target $u_d(t,x)$ and the error $u(t,x)-u_d(t,x)$ measures the gap between the state under control and target.
The horizontal axis is the time coordinate and the vertical axis is the state.}
\label{fig:1d_vis_fo_1}
\end{figure}

\begin{figure}[h]
    \centering
    \begin{minipage}{0.49\textwidth}
        \centering
        \includegraphics[width=\textwidth]{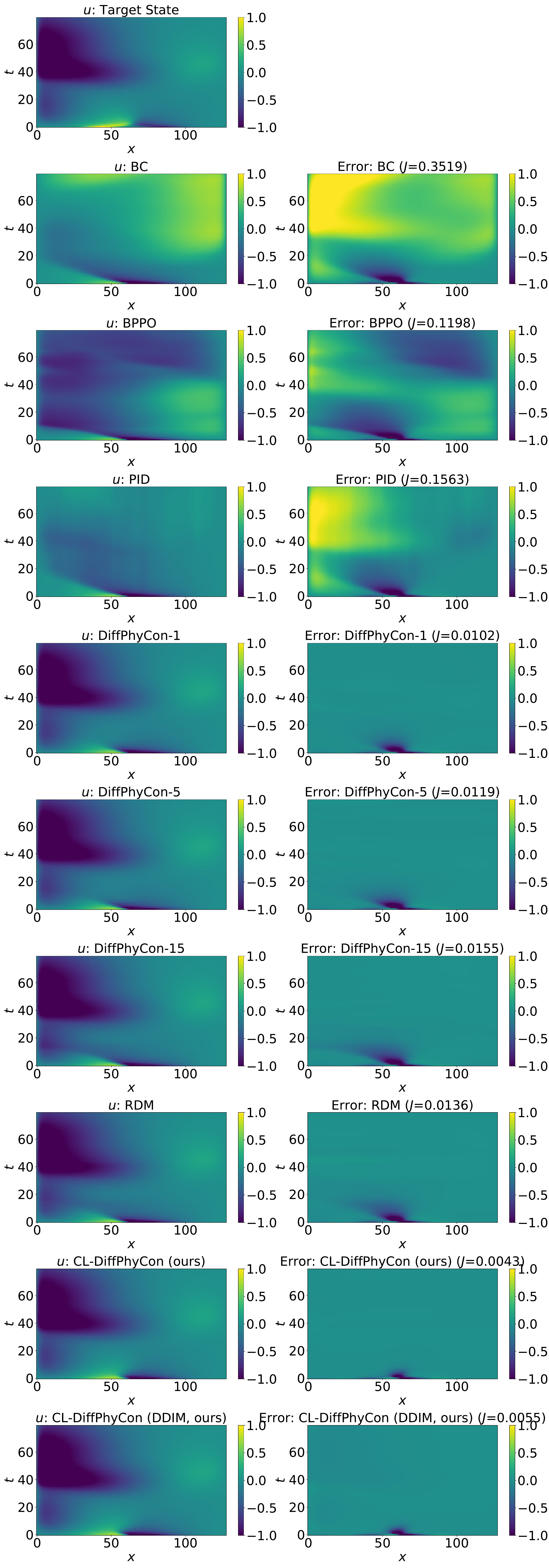}
    \end{minipage}
    \hfill
    \begin{minipage}{0.49\textwidth}
        \centering
        \includegraphics[width=\textwidth]{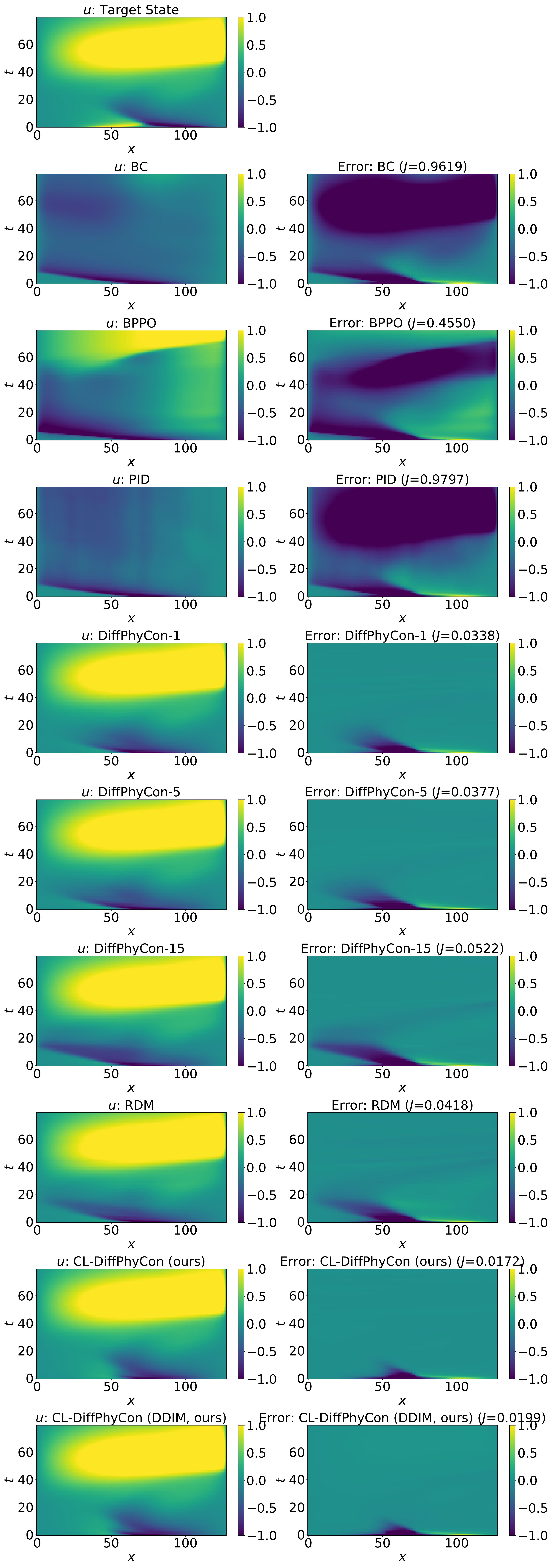}
    \end{minipage}
\caption{\textbf{Two visualizations results of 1D Burgers' equation control under the noise-free setting}. The first line is the target $u_d(t,x)$ and the error $u(t,x)-u_d(t,x)$ measures the gap between the state under control and target.
The horizontal axis is the time coordinate and the vertical axis is the state.}
\label{fig:1d_vis_fo_2}
\end{figure}

\begin{figure}[h]
    \centering
    \begin{minipage}{0.49\textwidth}
        \centering
        \includegraphics[width=\textwidth]{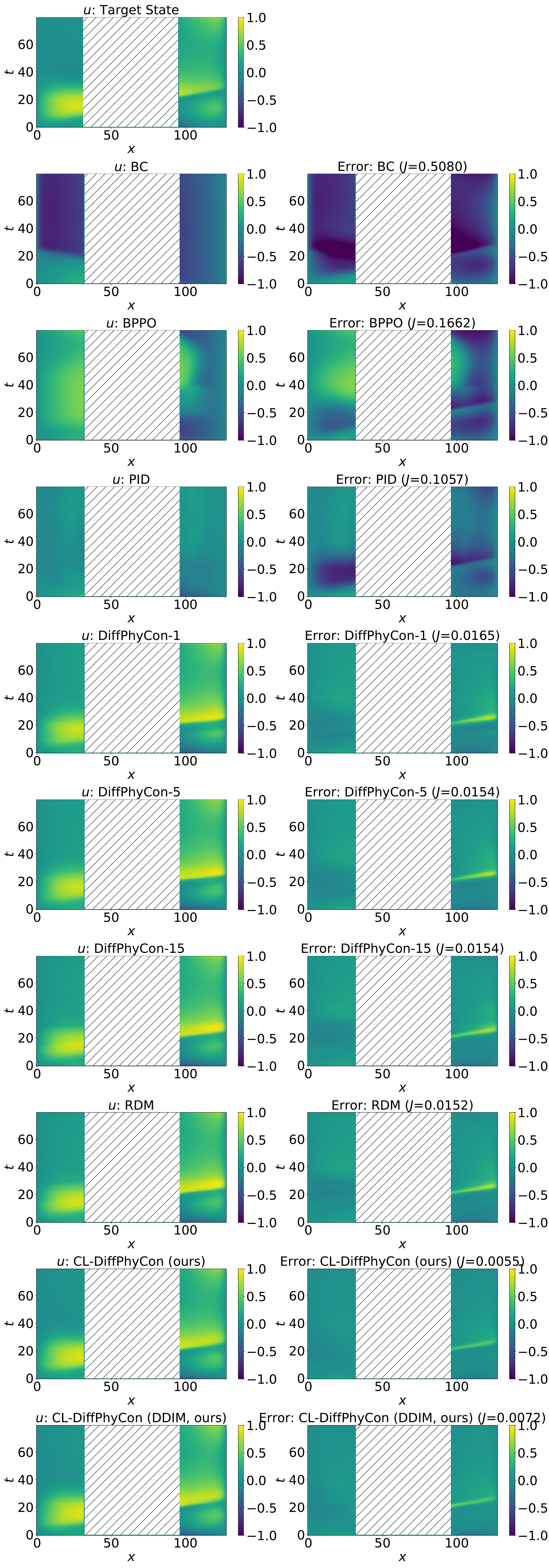}
    \end{minipage}
    \hfill
    \begin{minipage}{0.49\textwidth}
        \centering
        \includegraphics[width=\textwidth]{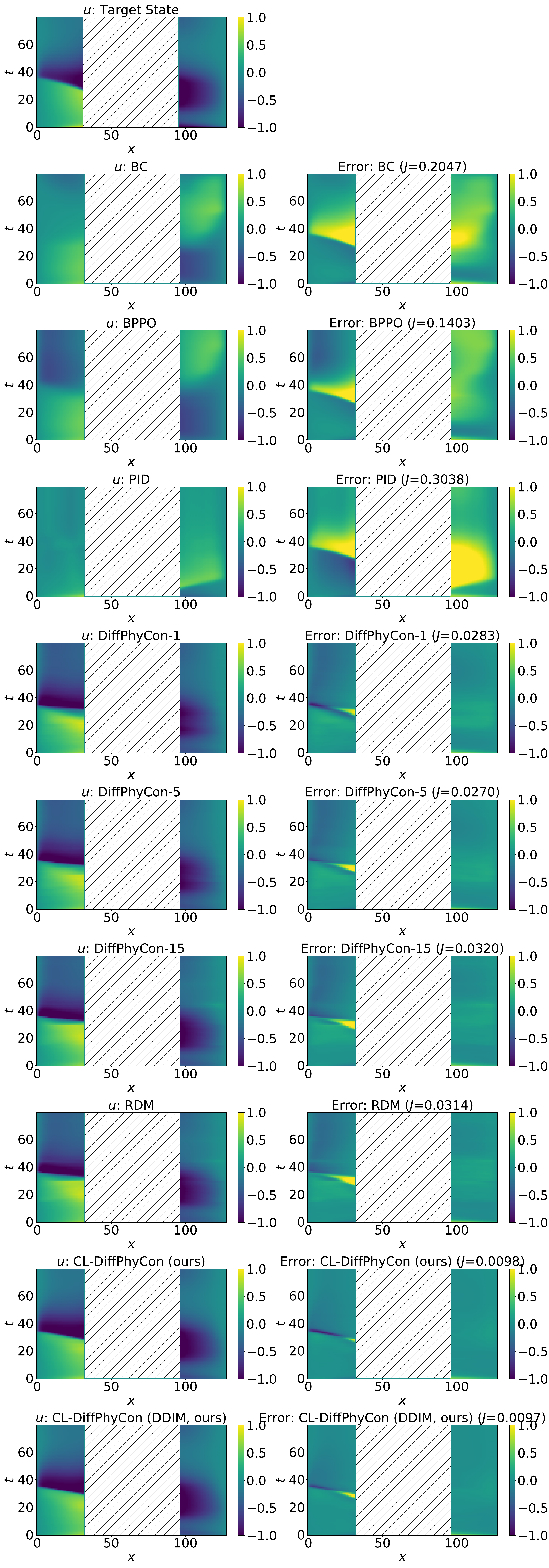}
    \end{minipage}
\caption{\textbf{Two visualizations results of 1D Burgers' equation control under the half domain setting}. The first line is the target $u_d(t,x)$ and the error $u(t,x)-u_d(t,x)$ measures the gap between the state under control and target.
The horizontal axis is the time coordinate and the vertical axis is the state. The white area with oblique lines in the middle represents an unobservable area.}
\label{fig:1d_vis_po_1}
\end{figure}

\begin{figure}[h]
    \centering
    \begin{minipage}{0.49\textwidth}
        \centering
        \includegraphics[width=\textwidth]{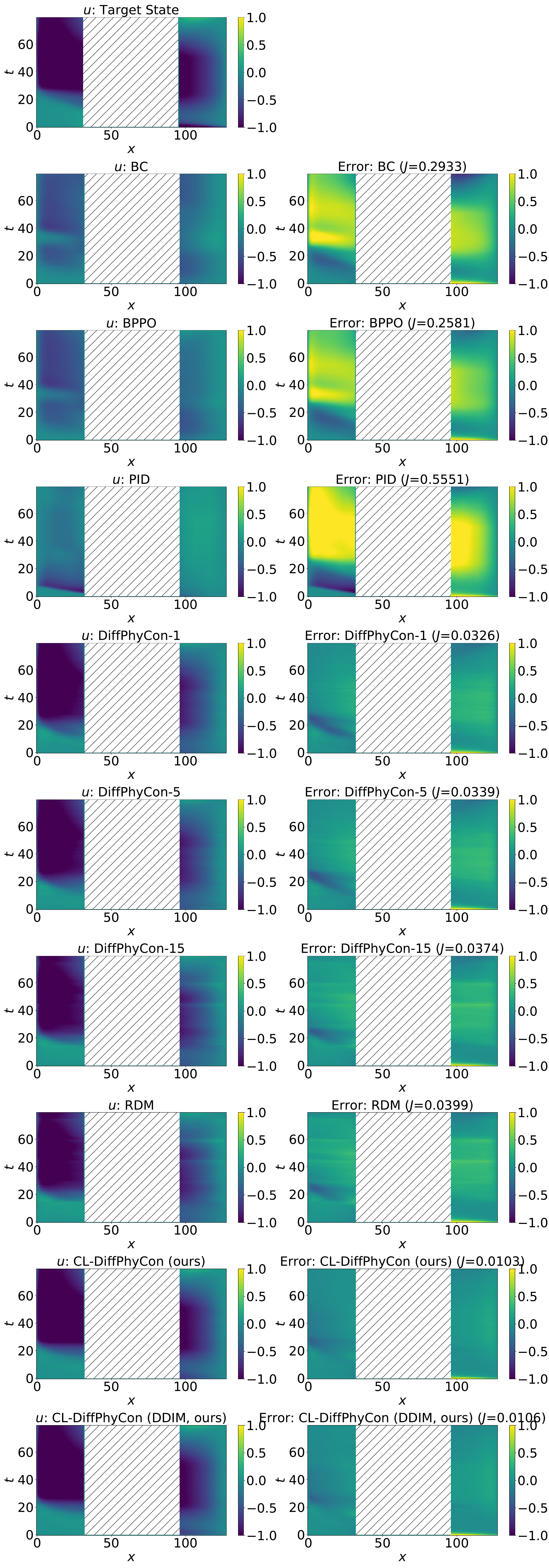}
    \end{minipage}
    \hfill
    \begin{minipage}{0.49\textwidth}
        \centering
        \includegraphics[width=\textwidth]{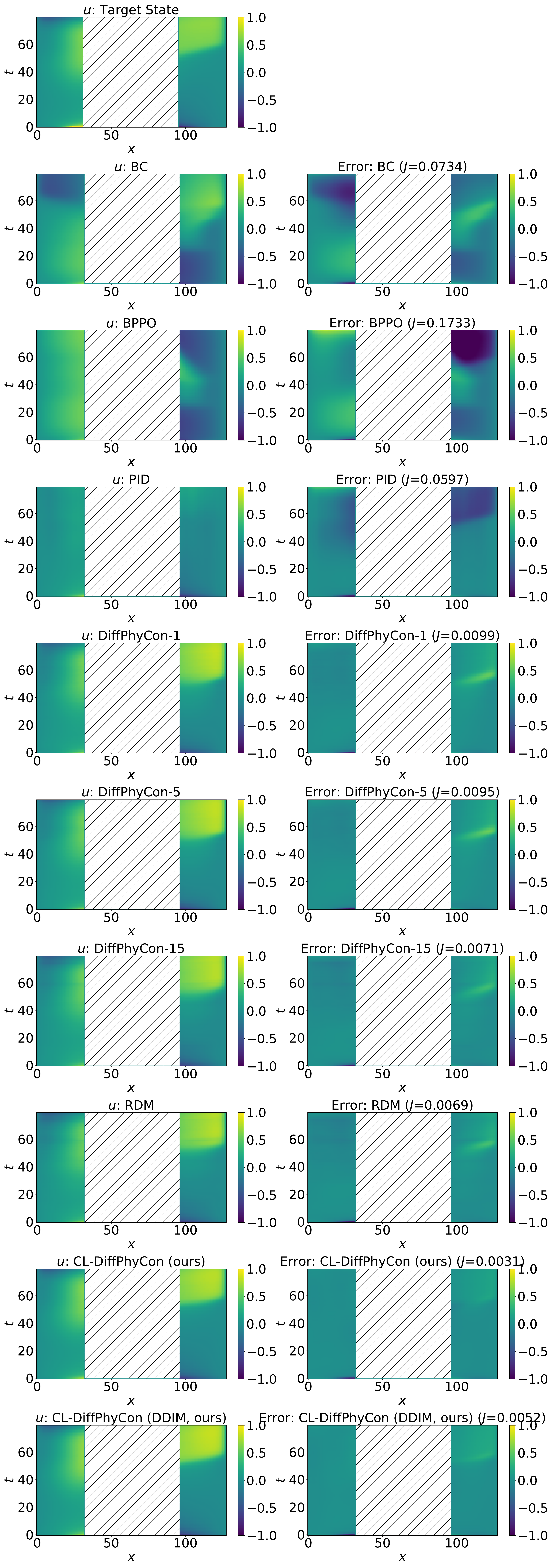}
    \end{minipage}
\caption{\textbf{Two visualizations results of 1D Burgers' equation control under the half domain setting}. The first line is the target $u_d(t,x)$ and the error $u(t,x)-u_d(t,x)$ measures the gap between the state under control and target.
The horizontal axis is the time coordinate and the vertical axis is the state. The white area with oblique lines in the middle represents an unobservable area.}
\label{fig:1d_vis_po_2}
\end{figure}

\section{2D visualization results}\label{app:2d_vis}
More visualization results of our method and compared baselines are presented. 
For both fixed map (Figure \ref{fig:fm_vis_1} and Figure \ref{fig:fm_vis_2}) and random map (Figure \ref{fig:rm_vis_1} and Figure \ref{fig:rm_vis_2}) settings, we present two randomly selected test samples. Our method outperforms baselines on average. Note that the fluid dynamics is very sensitive to the generated control signals. Thus different methods may perform dramatically differently on the same test sample. However, we find that each method performs stably on each test sample through multiple evaluations.

\begin{figure}[hbp]
\centering
\hfill\hfill
\includegraphics[width=1\textwidth]{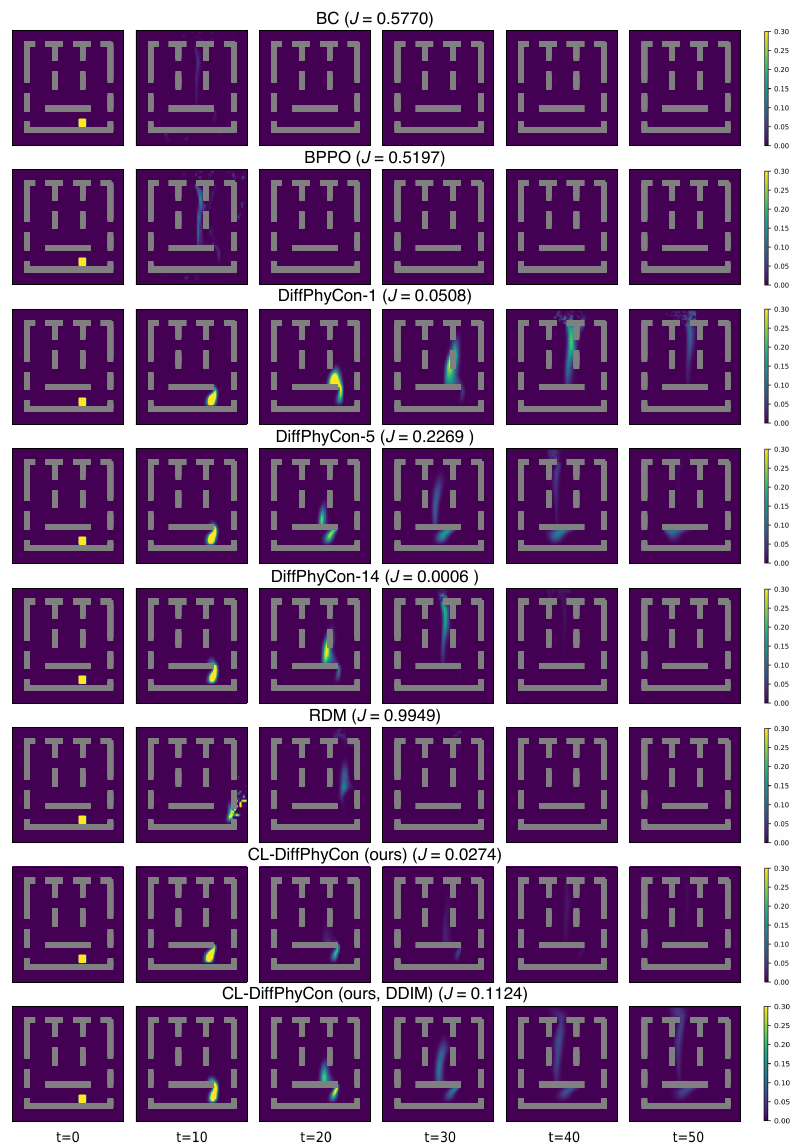}
\caption{\textbf{Visualizations results of 2D fluid control by our \proj method and baselines for a same test sample}. This is an example of the \emph{fixed map} setting. Each row shows six frames of smoke density. The smaller the value of $J$, the better the performance.
}
\label{fig:fm_vis_1}
\vskip -0.2in
\end{figure}

\begin{figure}[hbp]
\centering
\hfill\hfill
\includegraphics[width=1\textwidth]{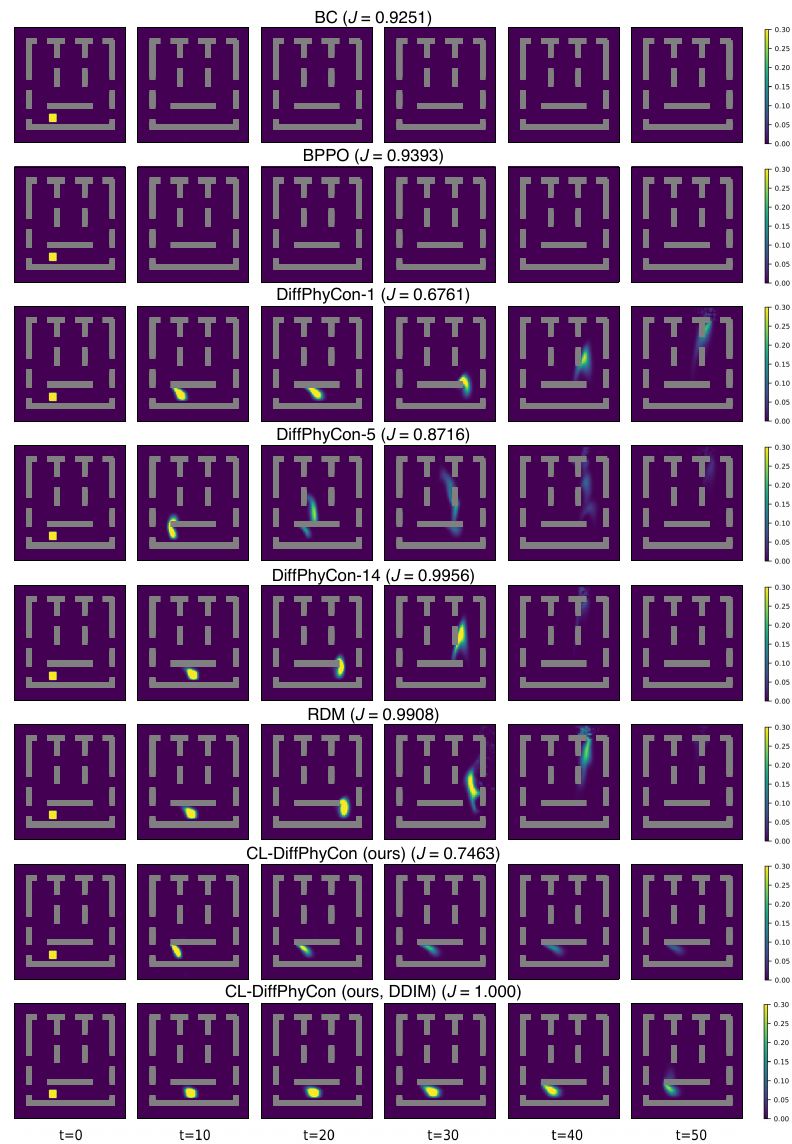}
\caption{\textbf{Visualizations results of 2D fluid control by our \proj method and baselines for the same test sample}. This is an example of the \emph{fixed map} setting. Each row shows six frames of smoke density. The smaller the value of $J$, the better the performance.
}
\label{fig:fm_vis_2}
\vskip -0.2in
\end{figure}

\begin{figure}[hbp]
\centering
\hfill\hfill
\includegraphics[width=1\textwidth]{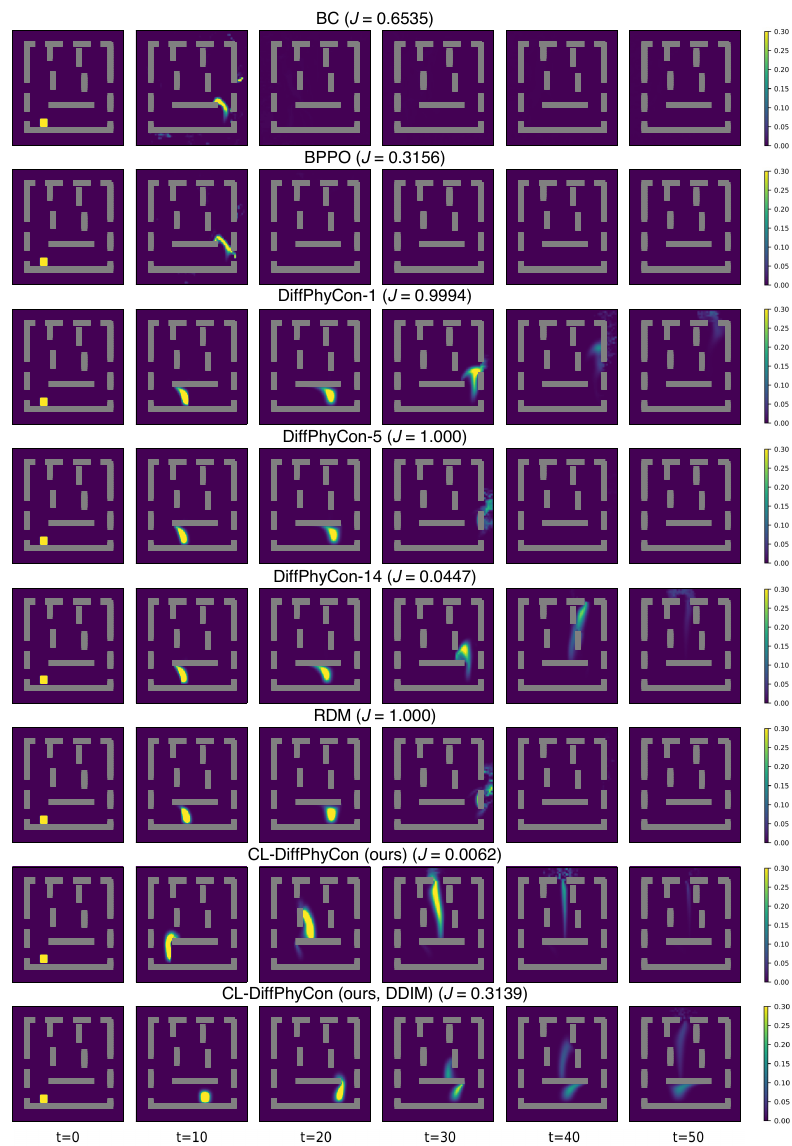}
\caption{\textbf{Visualizations results of 2D fluid control by our \proj method and baselines for the same test sample}. This is an example of the \emph{random map} setting. Each row shows six frames of smoke density. The smaller the value of $J$, the better the performance.
}
\label{fig:rm_vis_1}
\vskip -0.2in
\end{figure}

\begin{figure}[hbp]
\centering
\hfill\hfill
\includegraphics[width=1\textwidth]{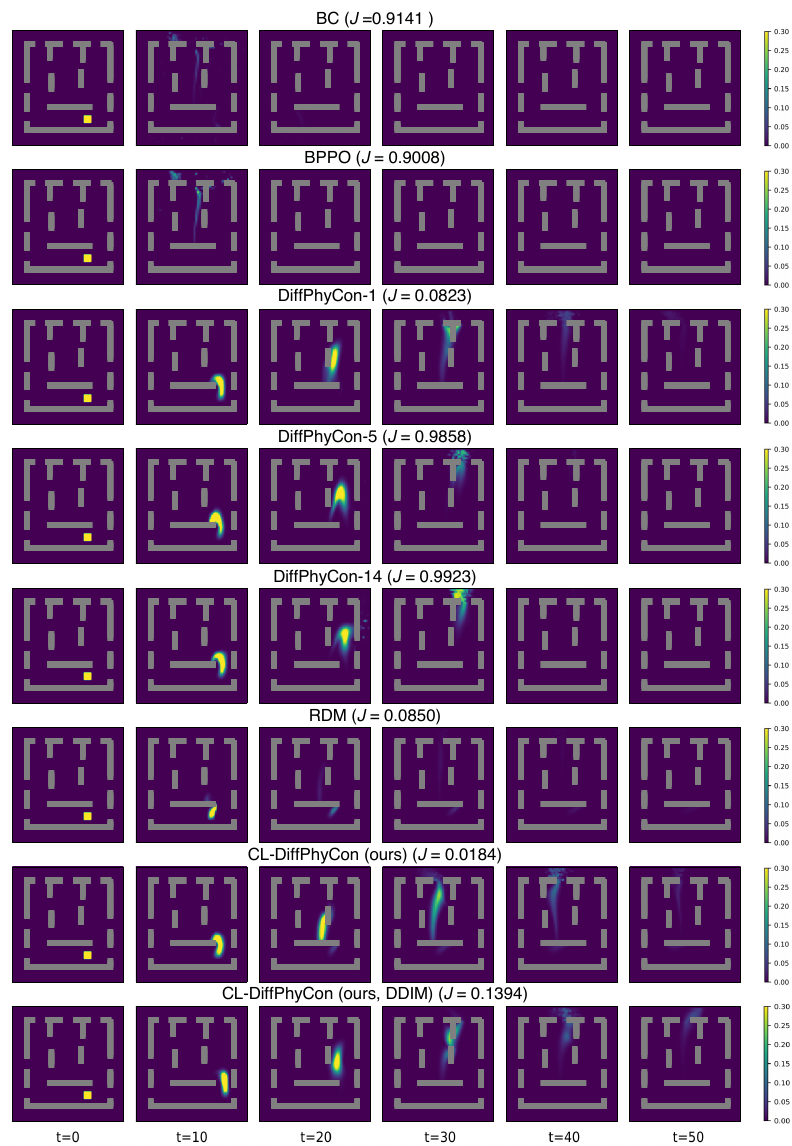}
\caption{\textbf{Visualizations results of 2D fluid control by our \proj method and baselines for the same test sample}. This is an example of the \emph{random map} setting. Each row shows six frames of smoke density. The smaller the value of $J$, the better the performance.
}
\label{fig:rm_vis_2}
\vskip -0.2in
\end{figure}

\section{Baselines} \label{app:rdm_thre}
For RDM \citep{zhou2024adaptive}, it involves two hyperparameters to determine when to update the previously planned control sequences or replan from scratch. 
Their default values do not work well on our 1D and 2D tasks. On the 1D task, they are too small, which results in very expensive replanning from scratch almost every physical time, although its performance is still unsatisfactory. On the 2D task, they are too large, leading to few replannings and inferior results. 
Therefore, we run extensive evaluations to search for their best values.
On the 1D task, we find that the best pair of values for the two hyperparameters is (0.06, 0.1) in the noise-free setting.
On the 2D task, we find that the best pair of values is (0.0005, 0.002).

\end{document}